\documentclass[11pt]{article}

\RequirePackage{graphicx}
\RequirePackage{amsmath}
\RequirePackage{amsfonts}
\RequirePackage{amssymb}
\RequirePackage{amsthm}
\RequirePackage{mathtools}
\RequirePackage{bm}
\RequirePackage{booktabs}
\RequirePackage{microtype}
\RequirePackage{bbm}
\RequirePackage{algorithm}
\RequirePackage{algpseudocode}

\makeatletter
\@ifpackageloaded{natbib}{}{\RequirePackage{natbib}}
\@ifpackageloaded{hyperref}{}{\RequirePackage{hyperref}}
\makeatother
\setcitestyle{numbers}

\hypersetup{%
  pdftitle={Bayesian low-rank latent-cluster regression for mixed health outcomes},
  pdfauthor={Hsin-Hsiung Huang and Suyeon Kang},
  pdfsubject={Bayesian mixed-response latent-cluster regression},
  pdfkeywords={Bayesian clustering, low-rank regression, mixed responses, posterior similarity matrix}
}

\newcommand{\includeReadableFigure}[2][]{%
  \IfFileExists{#2.pdf}{\includegraphics[#1]{#2.pdf}}{\includegraphics[#1]{#2.png}}%
}

\makeatletter
\@ifundefined{frontmatter}{\newenvironment{frontmatter}{}{} }{}
\@ifundefined{runtitle}{\newcommand{\runtitle}[1]{}}{}
\@ifundefined{runauthor}{\newcommand{\runauthor}[1]{}}{}
\makeatother

\numberwithin{equation}{section}

\theoremstyle{plain}
\newtheorem{theorem}{Theorem}[section]
\newtheorem{lemma}[theorem]{Lemma}
\newtheorem{proposition}[theorem]{Proposition}

\theoremstyle{definition}

\theoremstyle{remark}
\newtheorem{remark}[theorem]{Remark}

\newcommand{\R}{\mathbb{R}}
\newcommand{\bzero}{\bm{0}}
\newcommand{\bI}{\bm{I}}
\newcommand{\bone}{\bm{1}}

\newcommand{\bmX}{\bm{X}}
\newcommand{\bmY}{\bm{Y}}

\newcommand{\bmu}{\bm{\mu}}
\newcommand{\bB}{\bm{B}}
\newcommand{\bL}{\bm{L}}
\newcommand{\bR}{\bm{R}}
\newcommand{\bU}{\bm{U}}
\newcommand{\bV}{\bm{V}}
\newcommand{\bD}{\bm{D}}
\newcommand{\bO}{\bm{O}}
\newcommand{\bP}{\bm{P}}
\newcommand{\bS}{\bm{S}}

\newcommand{\bGamma}{\bm{\Gamma}}

\newcommand{\bmeta}{\bm{\eta}}
\newcommand{\bpi}{\bm{\pi}}

\newcommand{\JG}{\mathcal{J}_{\mathrm{G}}}
\newcommand{\JB}{\mathcal{J}_{\mathrm{B}}}
\newcommand{\JNB}{\mathcal{J}_{\mathrm{NB}}}

\newcommand{\Normal}{\mathcal{N}}

\newcommand{\NegBin}{\mathrm{NegBin}}
\newcommand{\Dir}{\mathrm{Dir}}
\newcommand{\Ga}{\mathrm{Ga}}
\newcommand{\IG}{\mathrm{IG}}
\newcommand{\PG}{\mathrm{PG}}
\newcommand{\CRT}{\mathrm{CRT}}

\newcommand{\E}{\mathbb{E}}
\newcommand{\Prob}{\mathbb{P}}

\newcommand{\ind}{\mathbbm{1}}
\newcommand{\Ind}{\mathbbm{1}}

\newcommand{\WAIC}{\mathrm{WAIC}}

\newcommand{\rmax}{r_{\max}}

\newcommand{\logistic}{\operatorname{logit}^{-1}}

\DeclareMathOperator{\diag}{diag}
\DeclareMathOperator{\rank}{rank}
\DeclareMathOperator{\logit}{logit}
\DeclareMathOperator{\KL}{KL}

\DeclareMathOperator{\gap}{gap}

\makeatletter
\@ifundefined{kwd}{}{}
\makeatother

\begin{document}

\begin{frontmatter}

\runtitle{Mixed-type multivariate Bayesian clustering}
\runauthor{Huang and Kang}

\begingroup
\begin{center}
{\fontsize{16}{20}\selectfont\bfseries Bayesian low-rank latent-cluster regression for mixed health outcomes\par}
\vspace{0.8em}
{\normalsize Hsin-Hsiung Huang\textsuperscript{*} and Suyeon Kang\par}
\vspace{0.45em}
{\footnotesize School of Data, Mathematical, and Statistical Sciences, University of Central Florida,\par
Orlando, Florida 32826, USA\par}
\vspace{0.35em}
{\footnotesize \texttt{Hsin-Hsiung.Huang@ucf.edu}\par
\texttt{suyeon.kang@ucf.edu}\par}
\vspace{0.35em}
{\footnotesize \textsuperscript{*}Corresponding author: Hsin-Hsiung Huang, \texttt{Hsin-Hsiung.Huang@ucf.edu}.\par}
\vspace{1.0em}
{\bfseries Abstract}
\end{center}

\begin{quote}
\noindent
High-dimensional health and surveillance studies often involve many collinear predictors, multiple correlated outcomes of different types, and latent heterogeneity across observational units.
We propose a Bayesian latent-cluster reduced-rank regression model for multivariate mixed outcomes.
The model is a finite mixture of regression surfaces: each latent cluster has a cluster-specific mean shift and a low-rank coefficient matrix, yielding simultaneous clustering, dimension reduction, and component-wise interpretability.
Response coordinates may be Gaussian, Bernoulli, or negative binomial.
For the Bernoulli and negative binomial channels, P\'olya--Gamma augmentation provides a unified quadratic form that supports both a Gibbs sampler and a fast quadratic-surrogate coordinate-ascent variational routine for model screening.
Multiplicative gamma process shrinkage adapts the effective rank within each cluster, and a WAIC-based criterion is used to tune the number of clusters and the nominal maximal rank.
We establish posterior contraction for the identifiable component-specific regression surfaces and mean shifts, up to label permutation, and derive corresponding contraction for predictor-side singular subspaces.
We also analyze the default label-invariant reporting pipeline based on the posterior similarity matrix: an eigenspace embedding followed by mean shift is shown to consistently recover the latent partition under an additional strong separation margin.
Simulation experiments spanning all-Gaussian, all-Bernoulli, all-negative-binomial, and mixed Gaussian--Bernoulli--negative-binomial regimes show accurate recovery of the number of clusters and competitive clustering performance against $K$-means, \texttt{mclust}, PCA-based clustering, and a Gaussian reduced-rank mixture benchmark.
We illustrate the method in three applications: the \texttt{DoctorVisits} health-utilization study, Florida county COVID-19 surveillance, and U.S. state influenza surveillance.
The applications show how the model separates individual-level utilization groups and produces interpretable county- and state-level cluster maps together with response-specific posterior predictive maps.
\end{quote}

\noindent\textbf{MSC2020.} 62F15; 62F12; 62J12.

\medskip
\noindent\textbf{Keywords.} clustering; low-rank regression; multivariate mixed responses; multiplicative gamma process shrinkage; posterior similarity matrix.

\endgroup

\end{frontmatter}

\section{Introduction}

Modern health and public health studies routinely record multiple outcomes together with large collections of potentially correlated covariates.
Examples range from utilization settings, where continuous self-reported health scores are observed alongside binary insurance indicators and discrete visit counts, to surveillance systems that track several endpoints per region, clinic, or patient.
These data present three recurring statistical challenges.
First, outcomes of different types, including continuous, binary, and count variables, are often jointly of interest and correlated, yet arise from distinct data-generating mechanisms.
Second, available covariates are commonly high dimensional and strongly collinear, reflecting overlapping sociodemographic, access-to-care, and reporting-intensity measures.
Third, the relationship between predictors and outcomes can vary across latent subpopulations, such as low- versus high-utilization groups or regions with different surveillance regimes.
Analyses that impose a single global regression surface can therefore obscure meaningful heterogeneity and yield effects that are difficult to interpret in practice.

A substantial literature addresses multivariate regression with mixed outcomes using generalized linear modeling ideas, with Bayesian formulations providing coherent uncertainty quantification and principled regularization.
A closely related contribution is the mixed-type multivariate Bayesian GLM of Wang, Bai and Huang \cite{WangBaiHuang2025}, which targets sparse variable selection in ultrahigh-dimensional settings through TPBN shrinkage priors.
That objective is distinct from the present low-rank latent-cluster formulation.
Here we target a complementary regime that is common in health and surveillance applications: predictors may be numerous and correlated, but the dominant structure is often shared across outcomes through a low-dimensional latent representation, and heterogeneity across observational units is better captured by allowing multiple regression surfaces rather than enforcing a single global one.
Accordingly, instead of emphasizing sparsity in a high-dimensional coefficient vector, we emphasize low-rank dimension reduction across outcomes and latent clustering across units, while retaining the ability to accommodate mixed response types within a single multivariate model.

Reduced-rank regression provides a principled approach to multivariate regression by projecting predictor effects onto a lower-dimensional latent structure, improving estimation efficiency and interpretability when the coefficient matrix is approximately low rank \citep{Anderson1951,ReinselVelu1998,Yuan2007,izenman2008}.
Bayesian low-rank formulations further adapt the effective dimension through shrinkage priors \citep{bhattacharya2011}.
However, classical reduced-rank models typically posit one regression surface shared across all units, which can be restrictive when predictor effects and cross-outcome relationships vary across latent subpopulations.
Finite mixture models provide a direct representation of such heterogeneity by allowing multiple regression surfaces indexed by latent components \citep{fraley2002model,bishop2006pattern}, but naive mixtures of multivariate regressions can be unstable when predictors are strongly correlated and outcomes are mixed-type.
These considerations motivate a framework that simultaneously clusters units by regression behavior and regularizes each cluster-specific multivariate regression surface through a parsimonious low-rank structure.

We develop a Bayesian latent-cluster low-rank regression model for multivariate mixed outcomes.
Within each latent cluster, the $p \times q$ coefficient matrix is modeled through a low-rank factorization, so that predictor effects across outcomes are driven by a small number of latent directions, while different clusters are permitted to exhibit different directions and magnitudes.
Response coordinates may be Gaussian, Bernoulli, or negative binomial within the same multivariate regression, a combination that aligns with common health utilization and surveillance endpoints.
To enable scalable computation across these likelihoods, we exploit P\'olya--Gamma augmentation \citep{polson2013bayesian} for Bernoulli and negative binomial components, yielding conditionally quadratic forms that support an efficient Gibbs sampler and also a fast quadratic-surrogate variational screening routine \citep{Blei2017}.
Rank adaptivity within each cluster is achieved through multiplicative gamma process shrinkage priors \citep{bhattacharya2011}, and we use Watanabe's information criterion for practical tuning of the number of clusters and the nominal maximal rank \citep{watanabe2010asymptotic}.

A central reporting issue in mixture models is label non-identifiability.
Rather than basing inference on arbitrary component labels, we report clustering through a label-invariant posterior similarity matrix (PSM).
For visualization and a stable default partition, we embed units in the leading eigenspace of the PSM and apply mean shift, a mode-seeking procedure that induces clusters via attraction basins of kernel density modes \citep{carreira2015meanshift,chen2016modeclustering}.
This strategy connects Bayesian posterior clustering uncertainty with a deterministic, label-invariant output and leverages the close relationship between eigenspace embeddings and spectral clustering ideas; related kernel mean-shift extensions in feature spaces are discussed by \citet{anand2014skms}.

Our contributions are threefold.
First, we propose a finite-mixture reduced-rank regression framework for multivariate mixed outcomes that combines cluster-specific mean shifts with cluster-specific low-rank coefficient matrices, thereby addressing correlated predictors, correlated outcomes, and latent heterogeneity within a single model.
Second, we develop a unified computational strategy based on P\'olya--Gamma augmentation, yielding both a Gibbs sampler for posterior inference and a fast variational screening procedure that supports WAIC-based tuning over $(K,\rmax)$.
Third, we establish posterior contraction for the identifiable cluster-specific regression surfaces, up to label permutation, and their associated singular subspaces, and we provide a consistency result for the default clustering pipeline based on PSM eigenspace embedding followed by mean shift under an additional strong separation margin.

We demonstrate the methodology in a model-comparison simulation study and in three real-data applications: the \texttt{DoctorVisits} health-utilization study, a Florida county COVID-19 analysis using hospitalization and death outcomes, and a U.S. state influenza analysis using weighted ILI rates and ILI patient counts.
The remainder of the paper is organized as follows.
Section~\ref{sec:related} positions the work relative to existing literature.
Section~\ref{sec:model} introduces the model and likelihood specification, Section~\ref{sec:comp} describes priors and computation, Section~\ref{sec:theory} presents the main theoretical results, and subsequent sections report simulation and data-analytic findings.

\section{Related Work}\label{sec:related}

The proposed framework is related to several strands of literature in multivariate regression, mixture modeling, and Bayesian low-rank inference.

Reduced-rank regression has a long history in multivariate analysis, beginning with classical formulations such as \citet{Anderson1951} and the broader treatment in \citet{ReinselVelu1998}, with modern statistical developments in \citet{Yuan2007,izenman2008}.
Penalized and structured extensions have incorporated low-rank regularization and sparsity to stabilize estimation in high dimensions; representative examples include matrix regularization and related structured multivariate regression methods \citep{Negahban2011,Witten2009}.
Bayesian low-rank formulations replace explicit rank constraints by shrinkage priors that adaptively learn the effective dimension, with the multiplicative gamma process being a particularly convenient device for factor-analytic and related models \citep{bhattacharya2011}.

Finite mixture models offer a natural representation of latent heterogeneity and provide a flexible foundation for model-based clustering \citep{fraley2002model,bishop2006pattern,scrucca2016mclust}.
Existing reduced-rank mixture regression approaches are developed largely for Gaussian outcomes and focus on component-specific low-rank coefficient matrices in settings where all response coordinates share the same likelihood family.
By contrast, the present paper addresses mixtures of multivariate regressions with genuinely mixed response types and develops a unified inferential pipeline that remains computationally tractable across Gaussian, Bernoulli, and negative binomial channels.

For mixed-response multivariate regression, the recent Bayesian literature has focused primarily on variable selection and structured sparsity.
Wang, Bai and Huang \cite{WangBaiHuang2025} give a representative example of that line of research in ultrahigh-dimensional settings, where the main goal is screening or selecting a relatively small set of important predictors.
The present paper instead targets settings in which latent clustering and shared low-dimensional outcome structure are more prominent than sparsity of the full coefficient array.

From the computational perspective, P\'olya--Gamma augmentation \citep{polson2013bayesian} has become a standard tool for logistic and related likelihoods, because it yields Gaussian conditional structure while preserving exact Bayesian updating.
Variational approximations provide a scalable alternative to MCMC in large model-selection problems \citep{Blei2017}, although they are typically best viewed as screening tools rather than replacements for full posterior sampling when uncertainty quantification is central.
Finally, posterior similarity matrices and related label-invariant summaries play an important role in Bayesian clustering \citep{rastelli2016optimal}, motivating the default PSM-based reporting strategy adopted here.

\section{A Bayesian Latent-Cluster Low-Rank Model for Mixed Responses}\label{sec:model}

Let $i=1,\dots,n$ index observational units, $j=1,\dots,q$ index responses, and $k=1,\dots,K$ index latent clusters.
We observe predictors $\bmX_i\in\R^p$ and a mixed-type response vector $\bmY_i=(Y_{i1},\dots,Y_{iq})^\top$.
Let $Z_i\in\{1,\dots,K\}$ denote the latent class with $\Prob(Z_i=k)=\pi_k$, where $\bpi=(\pi_1,\dots,\pi_K)$ and $\sum_{k=1}^K\pi_k=1$.

\subsection{Low-rank cluster-specific regression with mean shifts}\label{sec:lowrank}

Each cluster $k$ carries a $p\times q$ coefficient matrix $\bB_k$ factorized as
\[
\bB_k=\bL_k\bR_k^\top,\;
\bL_k\in\R^{p\times \rmax},\ \bR_k\in\R^{q\times \rmax},
\]
where shrinkage priors described in Section~\ref{sec:priors} adaptively learn an effective rank $r_k\le \rmax$.
We also include a cluster-specific mean shift, or intercept, $\bmu_k\in\R^q$.
The linear predictor for unit $i$ in cluster $k$ is
\begin{equation}\label{eq:eta-mu}
\bmeta_{ik}=\bmu_k+\bB_k^\top\bmX_i\in\R^q,\;
\eta_{ijk}=\mu_{kj}+\bmX_i^\top \bB_{k,\cdot j}.
\end{equation}

\paragraph{Identifiable cluster geometry and label-invariant reporting.}
The factorization $\bB_k=\bL_k\bR_k^\top$ is not unique because of latent rotations, and mixture labels are exchangeable because of label switching.
Therefore, our default clustering output is not based on component labels or on a particular rotation of $(\bL_k,\bR_k)$.
Instead, clustering and the default two-dimensional visualization are based on the label-invariant posterior similarity matrix defined in Section~\ref{sec:psm_meanshift}.
Cluster-specific regression interpretation is carried out on the identifiable scale of $\bB_k$; see Section~\ref{sec:interpret}.

\subsection{Mixed-likelihood specification}\label{sec:mixed}

Partition the response indices into disjoint sets $\JG$ for Gaussian responses, $\JB$ for Bernoulli responses, and $\JNB$ for negative binomial responses.
Conditional on $Z_i=k$, the Gaussian channel satisfies
\[
Y_{ij}\mid Z_i=k,\eta_{ijk},\sigma_j^2\sim \Normal(\eta_{ijk},\sigma_j^2),\; j\in\JG.
\]
The Bernoulli channel satisfies
\[
Y_{ij}\in\{0,1\},\; \Prob(Y_{ij}=1\mid Z_i=k)=\logistic(\eta_{ijk}),\; j\in\JB.
\]
For the negative binomial channel, we use the number-of-failures parameterization $\NegBin(r_j,p_{ijk})$ with
\[
\logit(p_{ijk})=\eta_{ijk},\;
\Prob(Y=y)=\binom{y+r_j-1}{y}p_{ijk}^y(1-p_{ijk})^{r_j},\; y=0,1,2,\dots,\; j\in\JNB.
\]
Equivalently, the conditional mean is
\[
\mu_{ijk}^{(NB)}:=\E[Y_{ij}\mid Z_i=k]
=
r_j\frac{p_{ijk}}{1-p_{ijk}}
=
r_j\exp(\eta_{ijk}).
\]

\begin{remark}[Offsets and exposures for counts]\label{rem:offset}
When counts $Y_{ij}$ are observed with exposure $E_{ij}>0$, one may include $\log E_{ij}$ as a fixed offset by replacing $\eta_{ijk}$ with $\eta_{ijk}+\log E_{ij}$.
All augmentation identities and Gaussian updates remain valid under this modification.
\end{remark}

\subsection{Unified quadratic form via P\'olya--Gamma augmentation}\label{sec:pg}

For $j\in\JB$, introduce latent variables $\omega_{ij}\sim \PG(1,\eta_{ijk})$ and define $\kappa_{ij}=Y_{ij}-\tfrac12$.
For $j\in\JNB$, introduce $\omega_{ij}\sim \PG(Y_{ij}+r_j,\eta_{ijk})$ and define
\[
\kappa_{ij}=\frac{Y_{ij}-r_j}{2}.
\]
Then, conditional on $\omega_{ij}$,
\begin{equation}\label{eq:pg-quad}
\log p(Y_{ij}\mid \eta_{ijk},\omega_{ij})
=\kappa_{ij}\eta_{ijk}-\tfrac12\omega_{ij}\eta_{ijk}^2+C(Y_{ij},r_j),
\end{equation}
for a term $C$ that does not depend on $\eta_{ijk}$.
For $j\in\JG$, the Gaussian likelihood has the same quadratic form by setting
$\omega_{ij}=\sigma_j^{-2}$ and $\kappa_{ij}=\sigma_j^{-2}Y_{ij}$, up to an additive constant.

Let $\mathcal{I}_k=\{i:Z_i=k\}$ and write $\eta_{k\cdot j}=(\eta_{ijk}:i\in\mathcal{I}_k)$,
$\kappa_{k j}=(\kappa_{ij}:i\in\mathcal{I}_k)$, and $\Omega_{k j}=\diag(\omega_{ij}:i\in\mathcal{I}_k)$.
Then the cluster-$k$ contribution to the augmented log-likelihood is
\begin{equation}\label{eq:cluster-quad}
\ell_k(\bmu_k,\bB_k)=
-\tfrac12\sum_{j=1}^q \eta_{k\cdot j}^\top \Omega_{k j}\eta_{k\cdot j}
+\sum_{j=1}^q \kappa_{k j}^\top \eta_{k\cdot j}
+\text{const}.
\end{equation}

\section{Priors, Posterior, Gibbs Sampler, Variational Inference, and PSM Clustering}\label{sec:comp}

\subsection{Priors and hyperpriors}\label{sec:priors}

\paragraph{Mixing proportions.}
We assume $\bpi\sim \Dir(\alpha_1,\dots,\alpha_K)$ with all $\alpha_k>0$.

\paragraph{Cluster-specific mean shifts.}
For $k=1,\dots,K$, we assume
\begin{equation}\label{eq:mu-prior}
\bmu_k\sim \Normal_q(\bzero,\sigma_\mu^2\bI_q),
\end{equation}
with fixed $\sigma_\mu^2$, recommended on standardized response scales, or alternatively $\sigma_\mu^{-2}\sim\Ga(a_\mu,b_\mu)$.

\paragraph{Gaussian variances.}
For $j\in\JG$, we assume $\sigma_j^2\sim \IG(a_\sigma,b_\sigma)$.

\paragraph{Negative binomial dispersions.}
For $j\in\JNB$, we assume $r_j\sim \Ga(a_r,b_r)$.

\paragraph{Multiplicative gamma process shrinkage for rank adaptivity.}
Write $\bL_k=[\ell_{k1},\dots,\ell_{k \rmax}]$ and $\bR_k=[r_{k1},\dots,r_{k \rmax}]$.
Let $\lambda_{kh}=\prod_{m=1}^h\delta_{km}$, with
\[
\phi_k\sim\Ga(a_\phi,b_\phi),\;
\delta_{k1}\sim \Ga(a_1,1),\;
\delta_{kh}\sim \Ga(a_2,1)\ (h\ge 2),\; a_2>a_1>1,
\]
and conditionally,
\[
\ell_{kh}\mid \phi_k,\lambda_{kh}\sim \Normal_p(\bzero,(\phi_k\lambda_{kh})^{-1}\bI_p),\;
r_{kh}\mid \phi_k,\lambda_{kh}\sim \Normal_q(\bzero,(\phi_k\lambda_{kh})^{-1}\bI_q),
\]
independently over $h$ and $k$.

\subsection{Posterior factorization and augmented representation}\label{sec:post}

Let $\theta$ collect $Z_{1:n}$, $\bpi$, $\{\bmu_k,\bB_k,\phi_k,\delta_k\}_{k=1}^K$,
$\{\sigma_j^2\}_{j\in\JG}$, $\{r_j\}_{j\in\JNB}$, and $\{\omega_{ij}\}$.
Up to a constant,
\[
\Pi(d\theta\mid \text{data})\propto
\Bigg[
\prod_{i=1}^n \pi_{Z_i}
\prod_{j\in\JG}\Normal\!\big(Y_{ij};\eta_{ij,Z_i},\sigma_j^2\big)
\prod_{j\in\JB} p(Y_{ij}\mid \eta_{ij,Z_i})
\prod_{j\in\JNB} p(Y_{ij}\mid \eta_{ij,Z_i},r_j)
\Bigg]\Pi(d\theta).
\]
After introducing P\'olya--Gamma latent variables for Bernoulli and negative binomial responses,
the augmented log-likelihood is jointly quadratic in the linear predictors, and hence quadratic in $(\bmu_k,\bB_k)$.

\subsection{Gibbs sampler (reference)}\label{sec:gibbs}

We record the Gibbs updates for completeness.
In the empirical studies below, we use variational inference for rapid screening and initialization, and we compute the default clustering from the posterior similarity matrix described in Section~\ref{sec:psm_meanshift}.

All formulas use $\eta_{ijk}=\mu_{kj}+\bmX_i^\top \bB_{k,\cdot j}$.

\paragraph{(1) Cluster labels $Z_i$.}
Let $\tilde w_{ik}\propto \pi_k \, p(\bmY_i\mid Z_i=k,\cdot)$.
Compute
\[
\begin{aligned}
\log \tilde w_{ik}={}&\log \pi_k
+\sum_{j\in\JG}\log \Normal(Y_{ij};\eta_{ijk},\sigma_j^2)
+\sum_{j\in\JB}\Big\{Y_{ij}\eta_{ijk}-\log(1+e^{\eta_{ijk}})\Big\}\\
&+\sum_{j\in\JNB}\Big\{Y_{ij}\eta_{ijk}-(Y_{ij}+r_j)\log(1+e^{\eta_{ijk}})\Big\}.
\end{aligned}
\]

\paragraph{(2) Mixing proportions.}
We update $\bpi\mid -\sim \Dir(\alpha_1+n_1,\dots,\alpha_K+n_K)$, where $n_k=\sum_i \ind(Z_i=k)$.

\paragraph{(3) P\'olya--Gamma latent variables.}
For $j\in\JB$, we update $\omega_{ij}\mid -\sim \PG(1,\eta_{ij,Z_i})$.
For $j\in\JNB$, we update $\omega_{ij}\mid -\sim \PG(Y_{ij}+r_j,\eta_{ij,Z_i})$.

\paragraph{(4) Gaussian variances.}
For $j\in\JG$,
\[
\sigma_j^2\mid -\sim \IG\!\Big(a_\sigma+\tfrac{n}{2},\ b_\sigma+\tfrac12\sum_{i=1}^n(Y_{ij}-\eta_{ij,Z_i})^2\Big).
\]

\paragraph{(5) Negative binomial dispersions.}
Using a CRT--Gamma representation, draw $L_{ij}\sim \CRT(Y_{ij},r_j)$, then
\[
r_j\mid -\sim \Ga\!\Big(a_r+\sum_i L_{ij},\ b_r+\sum_i \log(1+e^{\eta_{ij,Z_i}})\Big),\; j\in\JNB.
\]

\paragraph{(6) Mean shifts $\bmu_k$.}
Fix $k$ and $j$.
Let $\mathcal{I}_k=\{i:Z_i=k\}$, let $a_{ij}:=\bmX_i^\top\bB_{k,\cdot j}$, and let $\omega_{ij}$ and $\kappa_{ij}$ be as in \eqref{eq:pg-quad}; for Gaussian responses use $\omega_{ij}=\sigma_j^{-2}$ and $\kappa_{ij}=\sigma_j^{-2}Y_{ij}$.
Then
\[
\mu_{kj}\mid - \sim \Normal\big(m_{kj},v_{kj}\big),\;
v_{kj}=\Big(\sum_{i\in\mathcal{I}_k}\omega_{ij}+\sigma_\mu^{-2}\Big)^{-1},\;
m_{kj}=v_{kj}\sum_{i\in\mathcal{I}_k}(\kappa_{ij}-\omega_{ij}a_{ij}).
\]

\paragraph{(7) Updates for $\bR_k$.}
Let $X_k=(\bmX_i^\top)_{i\in\mathcal{I}_k}$ and $U_k=X_k\bL_k$.
Let $\Omega_{k j}=\diag(\omega_{ij}:i\in\mathcal{I}_k)$ and $\kappa_{k j}=(\kappa_{ij}:i\in\mathcal{I}_k)$.
Define $\tilde\kappa_{k j}=\kappa_{k j}-\Omega_{k j}\bone_{n_k}\mu_{kj}$.
With $\Lambda_k:=\phi_k\diag(\lambda_{k1},\dots,\lambda_{k \rmax})$,
\[
Q_{k j}=U_k^\top \Omega_{k j}U_k+\Lambda_k,\;
b_{k j}=U_k^\top \tilde\kappa_{k j},
\;
r_{k j}\mid -\sim \Normal_{\rmax}(Q_{k j}^{-1}b_{k j},\ Q_{k j}^{-1}).
\]

\paragraph{(8) Updates for $\bL_k$.}
Let $z_k=\mathrm{vec}(\bL_k)$.
Then
\[
Q^{(L)}_k=\sum_{j=1}^q (r_{k j}r_{k j}^\top)\otimes (X_k^\top \Omega_{k j}X_k)+\bI_p\otimes \Lambda_k,
\;
b^{(L)}_k=\sum_{j=1}^q (r_{k j}\otimes X_k^\top)\tilde\kappa_{k j},
\]
and
\[
\mathrm{vec}(\bL_k)\mid -\sim \Normal_{p\rmax}\big((Q^{(L)}_k)^{-1}b^{(L)}_k,\ (Q^{(L)}_k)^{-1}\big).
\]

\paragraph{(9) MGP hyperparameters.}
Let $S_{kh}=\|\ell_{kh}\|^2+\|r_{kh}\|^2$.
Then
\[
\phi_k\mid - \sim
\Ga\!\left(
 a_\phi+\tfrac{(p+q)\rmax}{2},\;
 b_\phi+\tfrac12\sum_{h=1}^{\rmax}\lambda_{kh}S_{kh}
\right),
\]
and $\{\delta_{kh}\}$ follow the standard MGP Gamma recursions.

\begin{algorithm}[t]
\caption{PG-augmented Gibbs sampler (reference) for mixed-response low-rank latent clusters}
\begin{algorithmic}[1]
\State Initialize $Z_{1:n}$, $\bpi$, $\{\bmu_k,\bL_k,\bR_k,\phi_k,\delta_k\}_{k=1}^K$, $\{\sigma_j^2\}_{j\in\JG}$, and $\{r_j\}_{j\in\JNB}$.
\For{$t=1$ to $T$}
  \State Update $Z_i$ for $i=1,\dots,n$.
  \State Update $\bpi$.
  \State Draw all $\omega_{ij}$.
  \State Update $\{\sigma_j^2\}_{j\in\JG}$ and $\{r_j\}_{j\in\JNB}$.
  \For{$k=1$ to $K$}
    \State Update $\bmu_k$.
    \State Update $\bR_k$ row by row.
    \State Update $\bL_k$.
    \State Update $(\phi_k,\delta_k)$.
  \EndFor
\EndFor
\end{algorithmic}
\label{alg:gibbs}
\end{algorithm}

\subsection{Variational inference with PG quadratic surrogates}\label{sec:vi}

We adopt a coordinate-ascent variational inference scheme that replaces the latent PG variables by their conditional expectation and performs weighted ridge-type updates for $(\bmu_k,\bL_k,\bR_k)$ together with soft assignments for $Z$.
Let $\gamma_{ik}:=q(Z_i=k)$ denote the responsibilities.

\paragraph{PG expectations.}
For $\omega\sim \PG(b,\eta)$,
\[
\E[\omega\mid \eta]=\frac{b}{2\eta}\tanh(\eta/2),
\;
\E[\omega\mid 0]=b/4.
\]

\paragraph{Pseudo-responses.}
For each response $j$ define
\[
z_{ij}=
\begin{cases}
Y_{ij}, & j\in\JG,\\
\kappa_{ij}/\omega_{ij}, & j\in\JB\cup\JNB,
\end{cases}
\;
\kappa_{ij}=
\begin{cases}
Y_{ij}-\tfrac12, & j\in\JB,\\
\tfrac{Y_{ij}-r_j}{2}, & j\in\JNB.
\end{cases}
\]
Let $w_{ikj}=\gamma_{ik}\omega_{ij}$ be the responsibility-weighted quadratic weights.
Define $W_{k j}=\diag(w_{1kj},\dots, w_{nkj})$ and $z_j=(z_{1j},\dots, z_{nj})^\top$.

\paragraph{Mean-shift update.}
Given current $\bB_k$, update $\bmu_k$ by weighted least squares.
For response $j$,
\begin{equation}\label{eq:vi-mu}
\mu_{kj}\leftarrow
\frac{\sum_{i=1}^n w_{ikj}\{z_{ij}-\bmX_i^\top\bB_{k,\cdot j}\}}
{\sum_{i=1}^n w_{ikj}+\sigma_\mu^{-2}}.
\end{equation}

\paragraph{R-step and L-step.}
Let $U_k=X\bL_k$ and define $\tilde z_{k j}=z_j-\mu_{kj}\bone_n$.
The R-step solves, for each $j$,
\[
Q_{k j}=(X\bL_k)^\top W_{k j}(X\bL_k)+\Lambda_k,\;
b_{k j}=(X\bL_k)^\top W_{k j}\tilde z_{k j},
\;
r_{k j}\leftarrow Q_{k j}^{-1}b_{k j}.
\]
The L-step solves the corresponding Kronecker normal equations with $\tilde z_{k j}$ and weights $W_{k j}$.
MGP hyperparameters are updated by their variational MAP forms using $\gamma$-weighted sufficient statistics.

\paragraph{Responsibilities and mixture weights.}
With current parameters, compute per-component log-likelihoods $\ell_{ik}$ using the true likelihood, not the quadratic surrogate, and update
\[
\gamma_{ik}\propto \pi_k\exp\{\ell_{ik}\},
\;
\pi_k\leftarrow
\frac{\alpha_k+\sum_{i=1}^n\gamma_{ik}}{\sum_{k'=1}^K\alpha_{k'}+\sum_{i=1}^n\sum_{k'=1}^K\gamma_{ik'}}.
\]

\begin{proposition}[Monotone ascent of PG-VI]\label{prop:vi-monotone}
Fix the hyperparameters.
One full outer iteration of the coordinate-ascent scheme produces a nondecreasing sequence of ELBO values.
Any limit point is a coordinate-wise maximizer of the ELBO and hence a stationary point of the profiled ELBO.
\end{proposition}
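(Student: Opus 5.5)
The plan is to make precise which objective the coordinate-ascent scheme maximizes. We then verify that each block update is an exact maximizer of that objective with the other blocks held fixed. Monotonicity then follows by telescoping, and the limit-point statement follows from a standard block-coordinate argument.

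\emph{Defining the objective.} Define the ELBO as a function of the variational blocks: responsibilities $\gamma=(\gamma_{ik})$, mixing weights $\bpi$, the point-mass (MAP-type) blocks $\{\bmu_k,\bL_k,\bR_k,\phi_k,\delta_k\}$, and the PG weights $\omega_{ij}$. We use the Jaakkola--Jordan/PG tangent bound: for each Bernoulli or negative binomial coordinate, $\log p(Y_{ij}\mid\eta)\ge \kappa_{ij}\eta-\tfrac12\omega_{ij}\eta^2+g(\omega_{ij})$. Equality holds exactly at $\omega_{ij}=\E[\omega\mid\eta]$, given by the $\tanh$ formula. This is the standard fact that $\log\cosh(\eta/2)$ is convex in $\eta^2$. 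Hence the surrogate
\[
\mathcal{L}(\gamma,\bpi,\theta,\omega)=\sum_{i,k}\gamma_{ik}\{\log\pi_k+\tilde\ell_{ik}(\theta,\omega)\}-\sum_{i,k}\gamma_{ik}\log\gamma_{ik}+\log\Pi(\bpi,\theta)
\]
is a minorizer of the true ELBO, and the two coincide when $\omega$ is at its update value. Here $\tilde\ell_{ik}$ is the quadratic surrogate log-likelihood.

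\emph{Block-by-block ascent.} We check each update in turn.
\begin{enumerate}
\item The $\omega$-step maximizes $\mathcal{L}$ in $\omega$, with the bound tight.
\item The $\bmu_k$-step \eqref{eq:vi-mu}, the R-step and the L-step are exact maximizers of $\mathcal{L}$ in their blocks. This holds because $\mathcal{L}$ is a strictly concave quadratic (weighted ridge) in each block. The ridge terms $\sigma_\mu^{-2}$ and $\Lambda_k\succ0$ make the normal equations uniquely solvable.
\item The MGP variational-MAP updates maximize the conditional log-posterior in $(\phi_k,\delta_k)$. These are Gamma-conjugate, hence unimodal with closed-form maximizers.
\item The $\gamma$-step is the exact maximizer of a linear-plus-entropy term over the simplex, a softmax.
\item The $\bpi$-step is the Dirichlet MAP, which is concave on the simplex when $\alpha_k\ge1$. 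Otherwise it is the mean of the Dirichlet factor under a variational Dirichlet.
\end{enumerate}

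\emph{Main obstacle.} The responsibilities are computed with the \emph{true} likelihood $\ell_{ik}$, not the surrogate. We handle this by performing the $\omega$-step immediately before the $\gamma$-step, or by noting that the bound is tight at the current $\omega$. Then $\tilde\ell_{ik}=\ell_{ik}$ at that point, so the $\gamma$-update is the exact maximizer of the true ELBO, which equals $\mathcal{L}$ there. The full outer sweep then gives
\[
\mathrm{ELBO}^{(t)}=\mathcal{L}(\cdot^{(t)})\le\mathcal{L}(\text{after each block})\le\cdots\le\mathrm{ELBO}^{(t+1)},
\]
and the last inequality uses the minorization. This establishes monotonicity; we will state explicitly the update ordering that makes the argument valid.

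\emph{Limit points.} The ELBO sequence is nondecreasing and bounded above. The bound holds because the ELBO is at most the log marginal likelihood, with proper priors and ridge terms keeping the point-mass blocks' objective bounded. Hence the sequence converges. Each block map is continuous, by continuity of the ridge solutions, the softmax and the $\tanh$ formula, and each block maximizer is unique by strict concavity. Take a convergent subsequence. A standard argument (Bertsekas, Prop.~2.7.1) shows that the limit satisfies: every block is a maximizer given the others. Otherwise one more sweep would strictly increase the ELBO by a fixed amount, which contradicts convergence of the values. Finally, blockwise optimality in each smooth block gives a vanishing partial gradient, so the limit is a stationary point of the ELBO profiled over $\omega$ and $\gamma$.
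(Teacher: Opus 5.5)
The paper states this proposition without proof (Appendix~\ref{app:proofs} covers only Theorems~\ref{thm:post_contr_H}--\ref{thm:psm_meanshift_consistency}), so your argument has to stand on its own. The MM/block-coordinate strategy is the right one, but two of your block claims fail for the scheme of Section~\ref{sec:vi}.

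First, the $\bpi$-step. The paper updates $\pi_k\propto\alpha_k+n_k$ with $n_k=\sum_i\gamma_{ik}$. The maximizer of your $\mathcal{L}$, which contains $\log\Dir(\bpi;\alpha)$, is $\pi_k\propto\alpha_k-1+n_k$. These differ unless all $\alpha_k+n_k$ coincide, so the $\bpi$-step can strictly decrease your objective. The variational-Dirichlet fallback does not help. Under $q(\bpi)=\Dir(\alpha+n)$ the responsibilities would have to use $\exp\{\psi(\alpha_k+n_k)-\psi(\sum_{k'}(\alpha_{k'}+n_{k'}))\}$, not the plug-in mean $\pi_k$ the paper uses. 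The repair is to put the term $\sum_k\alpha_k\log\pi_k$ (a $\Dir(\alpha+\bone)$ prior term) into the objective. The paper's update is then the exact maximizer, and the objective stays bounded above for every $\alpha_k>0$; yours does not when some $\alpha_k<1$.

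Second, your resolution of the main obstacle needs $\tilde\ell_{ik}=\ell_{ik}$ for every $k$ after the $\omega$-refresh. That requires component-specific weights $\omega_{ijk}=\E[\omega\mid\eta_{ijk}]$. The paper writes $w_{ikj}=\gamma_{ik}\omega_{ij}$ and $z_{ij}=\kappa_{ij}/\omega_{ij}$, i.e.\ one $\omega_{ij}$ shared across $k$. Even the best shared choice, with tangent point $\xi_{ij}^2=\sum_k\gamma_{ik}\eta_{ijk}^2$, leaves the surrogate strictly below the true term whenever the $\eta_{ijk}^2$ differ across $k$. This follows from Jensen, because $s\mapsto\log\cosh(\sqrt{s}/2)$ is strictly \emph{concave}; it is not convex as you wrote, and concavity is what gives the quadratic lower bound. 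In that case your first equality fails and the true-likelihood $\gamma$-step need not increase $\mathcal{L}$, so your argument no longer yields a monotone objective.

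You must therefore state explicitly that $\omega$, and hence $z$ and $W_{kj}$, is indexed by $k$. With that and the corrected $\bpi$-term, let $\mathcal{E}=\max_\omega\mathcal{L}$ be the true-likelihood objective and $\theta$ the point-mass blocks. Refreshing $\omega$ before the $(\bmu_k,\bR_k,\bL_k)$ updates, as the paper's ordering already does, gives $\mathcal{E}(\gamma^t,\bpi^t,\theta^t)=\mathcal{L}(\gamma^t,\bpi^t,\theta^t,\omega^{t+1})\le\mathcal{L}(\gamma^t,\bpi^t,\theta^{t+1},\omega^{t+1})\le\mathcal{E}(\gamma^t,\bpi^t,\theta^{t+1})\le\mathcal{E}(\gamma^{t+1},\bpi^{t+1},\theta^{t+1})$.

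Your limit-point argument then works as a Zangwill-type argument for the sweep map on $\mathcal{E}$. Tangency of $\mathcal{L}(\cdot,\bar\omega)$ to $\mathcal{E}$ at the fixed point gives stationarity of $\mathcal{E}$. Concavity of $\mathcal{E}$ separately in $\bmu_k$, $\bR_k$ and $\bL_k$ (the Bernoulli and negative binomial log-likelihoods are concave in $\eta$) upgrades this to a coordinate-wise maximizer of $\mathcal{E}$ itself.

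Two minor points. With point-mass blocks the objective is bounded by the unnormalized log posterior, not the log marginal likelihood, so boundedness must be checked term by term. The paper's ``$\gamma$-weighted'' MGP step must also be read as the exact conditional mode in each scalar $\phi_k,\delta_{kh}$.
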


\subsection{PSM-based clustering and default two-dimensional visualization}\label{sec:psm_meanshift}

Mixture labels are not identifiable, so we report clustering through the label-invariant posterior similarity matrix
\[
\bS_{ii'} := \Pi(Z_i=Z_{i'}\mid \text{data}) \in [0,1].
\]
In full posterior sampling,
\[
\widehat{S}_{ii'}=\frac{1}{T}\sum_{t=1}^T \Ind\{Z_i^{(t)}=Z_{i'}^{(t)}\}.
\]
Under variational inference, using conditional independence of labels given parameters,
\[
\bS_{ii'} \approx \sum_{k=1}^K \gamma_{ik}\gamma_{i'k}.
\]
Thus we use the deterministic approximation
\begin{equation}\label{eq:psm_vi}
\widehat{\bS} \approx \widehat{\bGamma}\widehat{\bGamma}^\top,
\;
\widehat{\bGamma}=(\widehat{\gamma}_{ik})\in\R^{n\times K}.
\end{equation}

\paragraph{Efficient eigendecomposition under variational inference.}
Under variational inference, $\widehat{\bS}=\widehat{\bGamma}\widehat{\bGamma}^\top$ has rank at most $K$.
Thus the leading eigenspace of $\widehat{\bS}$ can be obtained without forming the $n\times n$ matrix.
Let $\widehat{\bGamma}^\top\widehat{\bGamma}=\bV\bD\bV^\top$ be the eigendecomposition of the $K\times K$ matrix, with $\bD=\diag(d_1,\ldots,d_K)$.
For each $d_k>0$, the corresponding eigenvector of $\widehat{\bS}$ is proportional to $\widehat{\bGamma}\,\bV_{\cdot k}/\sqrt{d_k}$, so we may take
$\widehat{\bU}=\widehat{\bGamma}\bV\bD^{-1/2}$ after dropping any zero eigenvalues.
This reduces the eigendecomposition cost from $O(n^3)$ to $O(nK^2+K^3)$.

Let $\widehat{\bU}\in\R^{n\times K}$ denote the top-$K$ eigenvectors of $\widehat{\bS}$, and let $\widehat{u}_i^\top$ be the $i$th row of $\widehat{\bU}$.
We run mean shift on $\{\widehat{u}_i\}$ and obtain a partition by attraction basins of kernel-density modes.
The default two-dimensional plot is a scatter of $(\widehat{u}_{i1},\widehat{u}_{i2})$ colored by the resulting mean-shift cluster labels.

\begin{algorithm}[t]
\caption{Default clustering and two-dimensional visualization: PSM eigenspace plus mean shift}
\begin{algorithmic}[1]
\State Fit the model and obtain responsibilities $\widehat{\bGamma}=(\widehat{\gamma}_{ik})$ or MCMC draws of $Z$.
\State Form $\widehat{\bS}=\widehat{\bGamma}\widehat{\bGamma}^\top$ or $\widehat{S}_{ii'}=\tfrac{1}{T}\sum_t \Ind\{Z_i^{(t)}=Z_{i'}^{(t)}\}$.
\State Compute the top-$K$ eigenvectors $\widehat{\bU}=[\widehat{u}_{\cdot 1},\dots,\widehat{u}_{\cdot K}]$ of $\widehat{\bS}$.
\State Embed unit $i$ by $\widehat{u}_i^\top$, the $i$th row of $\widehat{\bU}$, and optionally row-normalize.
\State Run mean shift in $\R^K$ on $\{\widehat{u}_i\}$ with bandwidth $h$ to obtain clusters.
\State Plot $(\widehat{u}_{i1},\widehat{u}_{i2})$ colored by the mean-shift clusters.
\end{algorithmic}
\label{alg:psm_ms}
\end{algorithm}

\subsection{Model selection and hyperparameter tuning via WAIC}\label{sec:waic}

For variational fits, define $\ell_{ik}=\log p(\bm y_i\mid Z_i=k,\bm x_i,\hat\theta_k)$.
The plug-in lppd is
\[
\mathrm{lppd}_{\mathrm{VI}}=\sum_{i=1}^n \log\Big(\sum_{k=1}^K \hat\pi_k\,e^{\ell_{ik}}\Big).
\]
We use a responsibility-weighted surrogate variance
\[
\bar\ell_i=\sum_{k=1}^K \widehat\gamma_{ik}\,\ell_{ik},
\;
p_{\mathrm{WAIC,VI}}=\sum_{i=1}^n\sum_{k=1}^K \widehat\gamma_{ik}\,(\ell_{ik}-\bar\ell_i)^2,
\]
and report
\[
\mathrm{WAIC}_{\mathrm{VI}}=-2\big(\mathrm{lppd}_{\mathrm{VI}}-p_{\mathrm{WAIC,VI}}\big).
\]
We treat $\mathrm{WAIC}_{\mathrm{VI}}$ primarily as a computational screening criterion for ranking candidate specifications, and we apply a one-standard-error rule together with a minimum-mixture safeguard to avoid nearly empty clusters.

\subsection{Rotation-invariant interpretation via eigenspaces}\label{sec:interpret}

The factorization $\bB_k=\bL_k\bR_k^\top$ is rotation non-unique, so we interpret coefficients on the identifiable scale of $\bB_k$ and its singular value decomposition.
Let $\bB_k=\bU_k\bD_k\bV_k^\top$.
Predictor-side and response-side energies are
\[
I^{(X)}_{k,a}=\|\bB_{k,a\cdot}\|_2^2=\sum_{h=1}^{\rmax} d_{kh}^2\,\bU_k[a,h]^2,\;
I^{(Y)}_{k,j}=\|\bB_{k,\cdot j}\|_2^2=\sum_{h=1}^{\rmax} d_{kh}^2\,\bV_k[j,h]^2.
\]
Cluster summaries are therefore reported as $(\bmu_k,\bB_k)$, aligned by permutation, and clustering itself is reported via the PSM.

\section{Theoretical Properties}\label{sec:theory}

We state posterior contraction and consistency results for the identifiable parameters $(\bmu_{1:K},\bB_{1:K})$ under fixed $(K,\rmax)$, and we propagate these results to predictor-side eigenspaces.
Complete proofs are provided in Appendix~\ref{app:proofs}.

\subsection{Targets and metrics}\label{sec:targets}

Write the conditional density for $\bmY$ given $\bmX=x$ under $\theta$ as
\[
f_\theta(\bm y\mid x)=\sum_{k=1}^K \pi_k \prod_{j=1}^q f_{j}\bigl(y_j;\eta_{kj}(x),\psi_j\bigr),
\;
\eta_k(x)=\bmu_k+\bB_k^\top x,
\]
where $f_j$ is Gaussian for $j\in\JG$, Bernoulli-logit for $j\in\JB$, and negative binomial with logit-$p$ parameterization for $j\in\JNB$.
Let $P_X$ denote the marginal law of $\bmX$ and let $P_\theta$ denote the joint law of $(\bmX,\bmY)$.

\paragraph{Integrated Hellinger distance.}
Define
\[
h^2\!\bigl(f_\theta(\cdot\mid x),f_{\theta_0}(\cdot\mid x)\bigr)
=\frac12\int \Big(\sqrt{f_\theta(\bm y\mid x)}-\sqrt{f_{\theta_0}(\bm y\mid x)}\Big)^2\,d\bm y,
\;
d_H^2(\theta,\theta_0)=\int h^2(\cdot)\,P_X(dx).
\]

\paragraph{Parameter distance up to permutation.}
For a permutation $\tau$ define
\[
\|\theta-\theta_0\|_\tau^2
=\|\bpi-\bpi_0^\tau\|_1^2+\sum_{k=1}^K\|\bmu_k-\bmu_{0,\tau(k)}\|_2^2
+\sum_{k=1}^K\|\bB_k-\bB_{0,\tau(k)}\|_F^2
+\sum_{j\in\JG}|\sigma_j^2-\sigma_{0j}^2|^2+\sum_{j\in\JNB}|r_j-r_{0j}|^2,
\]
and define $d_\Theta(\theta,\theta_0)=\min_\tau \|\theta-\theta_0\|_\tau$.

\paragraph{Eigenspace targets.}
Let $\bB_k=\bU_k\bD_k\bV_k^\top$ and define the predictor-side projector
\[
\bP_k=\bU_{k,r_{0k}}\bU_{k,r_{0k}}^\top,
\;
d_{\mathcal{S}}^2(\theta,\theta_0)=\min_{\tau}\sum_{k=1}^K \|\bP_k-\bP_{0,\tau(k)}\|_F^2.
\]

\paragraph{PSM target.}
Define the posterior similarity matrix $\bS$ by $\bS_{ii'}=\Pi(Z_i=Z_{i'}\mid \text{data})$.
Under an additional strong separation condition, Assumption (B8') below, $\bS$ concentrates near the hard similarity matrix $\bS_0$,
\[
(\bS_0)_{ii'}=\ind(Z_{0i}=Z_{0i'}).
\]

\subsection{Assumptions}\label{sec:assumptions_new}

Let $(\bmX_i,\bmY_i)_{i=1}^n$ be i.i.d.\ from $P_{\theta_0}$ with
\[
\theta_0=\{\bpi_0,(\bmu_{0k},\bB_{0k})_{k=1}^K,(\sigma_{0j}^2)_{j\in\JG},(r_{0j})_{j\in\JNB}\}.
\]

\paragraph{(B1) Design regularity.}
The predictor vector $\bmX$ has compact support and $\E[\bmX\bmX^\top]$ is positive definite.

\paragraph{(B2) True low-rank structure.}
Each $\bB_{0k}$ has rank $r_{0k}\le \rmax$, with $\|\bB_{0k}\|_F<\infty$ and $\|\bmu_{0k}\|_2<\infty$.

\paragraph{(B3) Component separation and nondegeneracy.}
There exists $\pi_{\min}>0$ such that $\min_k \pi_{0k}\ge \pi_{\min}$.
There exists $\delta_{\mathrm{sep}}>0$ such that for all $k\neq \ell$,
\[
\int h^2\!\bigl(f_{0k}(\cdot\mid x),f_{0\ell}(\cdot\mid x)\bigr)\,P_X(dx)\ge \delta_{\mathrm{sep}},
\]
where $f_{0k}(\cdot\mid x)=\prod_j f_j(\cdot;\eta_{0,kj}(x),\psi_{0j})$.

\paragraph{(B4) Regular likelihoods on compacts.}
For $j\in\JG$, $\sigma_{0j}^2\in(0,\infty)$.
For $j\in\JB\cup\JNB$, the log-likelihood in $\eta$ is three-times continuously differentiable, and on any compact interval $I\subset\R$ the conditional Fisher information in $\eta$ is bounded above and below away from $0$.

\paragraph{(B5) Identifiability up to permutation.}
If $f_\theta(\cdot\mid x)=f_{\theta'}(\cdot\mid x)$ for $P_X$-almost every $x$, then $\theta$ and $\theta'$ coincide up to a permutation of component labels.

\paragraph{(B6) Prior thickness and sieve tails.}
The prior assigns positive mass to every neighborhood of $\theta_0$ in $d_\Theta$.
Moreover, there exists a sieve $\Theta_n$ with $\Pi(\Theta_n^c)\le e^{-c n\varepsilon_n^2}$ and
$\log N(\varepsilon_n,\Theta_n,d_H)\lesssim n\varepsilon_n^2$ for the rate $\varepsilon_n$ defined below.

\paragraph{(B7) Eigenspace gap.}
For each $k$,
\[
g_k := s_{r_{0k}}(\bB_{0k})-s_{r_{0k}+1}(\bB_{0k})>0,
\]
with $s_{r_{0k}+1}=0$ if $\rank(\bB_{0k})=r_{0k}$.

\paragraph{(B8') Perfect-separation margin for exact partition recovery via the PSM.}
There exists $\Delta>0$ such that under $\theta_0$, almost surely,
\[
\log\frac{\pi_{0,Z_i}\,f_{0,Z_i}(\bmY_i\mid \bmX_i)}{
\max_{\ell\neq Z_i}\pi_{0\ell}\,f_{0\ell}(\bmY_i\mid \bmX_i)}
\ge \Delta
\;\text{for all } i.
\]
This implies zero Bayes classification error and yields PSM consistency for the hard similarity matrix $\bS_0$.

\begin{remark}[Why Assumption (B8') is separated out]
Posterior contraction for $(\bmu_k,\bB_k)$ does not require Assumption (B8').
Assumption (B8') is used only to prove exact recovery of the latent partition from the PSM followed by mean shift.
Without (B8'), the model parameters can be consistent while a nonzero classification error persists.
\end{remark}

\subsection{Posterior contraction and eigenspace consistency}\label{sec:post_contr}

Define the effective dimension
\[
d_n = (K-1) + Kq + K \rmax(p+q) + |\JG| + |\JNB|.
\]
Define the nominal rate
\[
\varepsilon_n=\sqrt{\frac{d_n\log n}{n}}.
\]

\begin{theorem}[Posterior contraction in integrated Hellinger]\label{thm:post_contr_H}
Assume (B1), (B2), (B4), (B5), and (B6) with fixed $(K,\rmax)$ and $d_n=o(n/\log n)$.
Then there exists $M>0$ such that
\[
\Pi\bigl(d_H(\theta,\theta_0)>M\varepsilon_n \,\big|\, (\bmX_i,\bmY_i)_{i=1}^n\bigr)\to 0
\;\text{in }P_{\theta_0}\text{-probability}.
\]
\end{theorem}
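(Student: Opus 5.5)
The plan is to apply the general posterior contraction theorem for i.i.d. observations of Ghosal, Ghosh and van der Vaart, in the conditional-density form. The observations $(\bmX_i,\bmY_i)$ are i.i.d. with density $f_\theta(\bm y\mid x)\,p_X(x)$. The covariate law does not depend on $\theta$, so the Hellinger distance between joint laws equals $d_H(\theta,\theta_0)$. The theorem then needs three ingredients at rate $\varepsilon_n$:
\begin{enumerate}
\item a sieve $\Theta_n$ with $\Pi(\Theta_n^c)\le e^{-(c_0+4)n\varepsilon_n^2}$;
\item an entropy bound $\log N(\varepsilon_n,\Theta_n,d_H)\lesssim n\varepsilon_n^2$;
\item a prior mass bound $\Pi(B_n)\ge e^{-c_0 n\varepsilon_n^2}$, where $B_n=\{\theta:K(p_{\theta_0},p_\theta)\le\varepsilon_n^2,\ V(p_{\theta_0},p_\theta)\le\varepsilon_n^2\}$ is the Kullback--Leibler neighborhood.
\end{enumerate}
Items 1 and 2 are granted directly by (B6), up to adjusting the constant $c$ relative to $c_0$; I take $c$ large enough, as the statement allows. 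Given these, exponentially consistent tests of $\theta_0$ against $\{d_H>M\varepsilon_n\}\cap\Theta_n$ exist by the Le Cam--Birg\'e construction from Hellinger entropy. The standard argument then gives the displayed conclusion, with the denominator lower-bounded on an event of probability tending to one via the KL ball.

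The substantive work is the prior mass condition, item 3, which I reduce to a Euclidean neighborhood. \emph{Step one} is a local comparison: for $\theta$ in a small $d_\Theta$-ball around $\theta_0$, I show $K$ and $V$ are bounded by $C\|\theta-\theta_0\|_\tau^2$ for the identity permutation $\tau$.
\begin{itemize}
\item By (B1) the support of $\bmX$ is compact, and by (B2) $\bmu_{0k}$ and $\bB_{0k}$ are bounded. Hence all linear predictors $\eta_{kj}(x)$ lie in a fixed compact interval $I$ uniformly over $x$ and over the ball, and $|\eta_{kj}(x)-\eta_{0,kj}(x)|\lesssim\|\bmu_k-\bmu_{0k}\|+\|\bB_k-\bB_{0k}\|_F$.
\item Each channel density is a smooth exponential family in $(\eta,\psi)$ on $I$ times a compact neighborhood of $\psi_0$: Gaussian with variance bounded away from $0$, and Bernoulli and negative binomial by (B4). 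So the component log-densities have bounded derivatives with square-integrable envelopes. For the negative binomial, $\log\binom{y+r-1}{y}$ has derivative in $r$ of order $\log(1+y)$, which has all moments under the true law; for the Gaussian, quadratic moments of $Y$ suffice.
\item For the mixture I use the standard bound $|\log f_\theta-\log f_{\theta_0}|\le \max_k|\log(\pi_kf_k)-\log(\pi_{0k}f_{0k})|$, valid since $f_\theta/f_{\theta_0}$ is a ratio of sums of positive terms. Because $\pi_{0k}>0$ for all $k$, this gives $\log(f_\theta/f_{\theta_0})\le C(\bm y)\|\theta-\theta_0\|$ with $\E C^2<\infty$, hence $K,V\lesssim\|\theta-\theta_0\|^2$. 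Positivity of every $\pi_{0k}$ is implicit in the identifiability assumption (B5); otherwise I restrict to $k$ with $\pi_{0k}>0$ and handle the rest through $\|\bpi-\bpi_0\|_1$.
\end{itemize}

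\emph{Step two} lower-bounds the prior probability of the Euclidean ball $\{\|\theta-\theta_0\|\le c\varepsilon_n\}$ by $(c\varepsilon_n)^{d_n}e^{-O(d_n)}=\exp(-O(d_n\log n))=\exp(-O(n\varepsilon_n^2))$. The Dirichlet, normal, inverse-gamma and gamma priors all have densities bounded below near $\theta_0$. The subtle part, and what I expect to be the main obstacle, is the low-rank block. The prior lives on $(\bL_k,\bR_k)$, not on $\bB_k$, so I need a factorization of $\bB_{0k}$ to center at. I take $\bL_{0k}=\bU_{0k}\bD_{0k}^{1/2}$ and $\bR_{0k}=\bV_{0k}\bD_{0k}^{1/2}$, padded with zero columns up to $\rmax$. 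The map $(\bL,\bR)\mapsto\bL\bR^\top$ is Lipschitz on bounded sets, so a $c'\varepsilon_n$-ball in $(\bL_k,\bR_k)$ maps into an $\varepsilon_n$-ball in $\bB_k$. I then fix the MGP hyperparameters $(\phi_k,\delta_k)$ in a fixed compact set of positive prior probability. Conditionally, the Gaussian density of $(\bL_k,\bR_k)$ is bounded below near the center, including near the zero columns, so the ball has mass at least $(c'\varepsilon_n)^{\rmax(p+q)}e^{-O(1)}$. Integrating over the hyperparameters preserves this order, and multiplying across the independent blocks gives the claim, with the dimension count matching $d_n$.

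Combining the steps, the KL ball contains a $d_\Theta$-ball of radius $\asymp\varepsilon_n$, whose prior mass is at least $e^{-C n\varepsilon_n^2}$. Together with (B6) and $d_n=o(n/\log n)$, which ensures $\varepsilon_n\to0$ so that the local expansions apply, the general theorem yields the result for a sufficiently large $M$. The remaining checks are the uniform moment envelopes for the negative binomial channel and the matching of constants between the sieve tail in (B6) and the prior mass exponent. If the constant in (B6) is fixed, I absorb the mismatch by enlarging $M$ in the rate.
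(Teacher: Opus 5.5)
Your proposal is correct and follows the same route as the paper's proof. Both use the general i.i.d.\ contraction theorem, with (B6) supplying the sieve and entropy, Hellinger tests, and prior mass obtained from a local Kullback--Leibler bound in the parameter metric together with the MGP factorization of $\bB_{0k}$. If anything, your version is more careful than the paper's in three places. Your ratio-of-sums bound on the mixture log-likelihood ratio does not need the separation condition (B3), which Lemma~\ref{lem:kl_by_dtheta} invokes even though (B3) is not among this theorem's hypotheses. You control $V$ as well as the Kullback--Leibler divergence. And you actually derive the $\varepsilon_n^{d_n}e^{-O(d_n)}$ prior-mass bound, where the paper only appeals to (B6).

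One point to drop is your fallback of absorbing a too-small sieve constant by enlarging $M$. At rate $M\varepsilon_n$ the sieve tail must be below $e^{-(C+4)nM^2\varepsilon_n^2}$, while the prior-mass exponent stays of order $d_n\log n$. A larger $M$ therefore only makes the requirement harder, and you genuinely need (B6) with $c$ sufficiently large, which is also how the paper uses it.
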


\begin{theorem}[Parameter contraction]\label{thm:post_contr_param}
Assume (B1) through (B6) and the separation condition (B3).
Then there exists $M'>0$ such that
\[
\Pi\bigl(d_\Theta(\theta,\theta_0)>M'\varepsilon_n \,\big|\, \text{data}\bigr)\to 0
\;\text{in }P_{\theta_0}\text{-probability}.
\]
In particular, up to label permutation,
\[
\max_k \|\bmu_k-\bmu_{0k}\|_2 = O_{P_{\theta_0}}(\varepsilon_n)
\;\text{and}\;
\max_k \|\bB_k-\bB_{0k}\|_F = O_{P_{\theta_0}}(\varepsilon_n).
\]
\end{theorem}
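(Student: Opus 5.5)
The plan is to transfer the Hellinger contraction of Theorem~\ref{thm:post_contr_H} to the parameter metric $d_\Theta$ through a local inverse inequality. Concretely, I aim to show that there exist a neighborhood $\mathcal{N}$ of $\theta_0$ (in $d_\Theta$) and a constant $c_0>0$ such that
\[
d_H(\theta,\theta_0)\ge c_0\, d_\Theta(\theta,\theta_0)\quad\text{for all }\theta\in\mathcal{N}\cap\Theta_n,
\]
together with a global statement that $\inf\{d_H(\theta,\theta_0):\theta\in\Theta_n\setminus\mathcal{N}\}\ge \eta_0>0$. Given these two facts, the event $\{d_\Theta>M'\varepsilon_n\}$ splits into two pieces. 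The first is $\{\theta\notin\mathcal{N}\}$, which forces $d_H\ge\eta_0>M\varepsilon_n$ for large $n$. The second is $\{\theta\in\mathcal{N},\ d_\Theta>M'\varepsilon_n\}$, which forces $d_H>c_0M'\varepsilon_n$. Taking $M'=M/c_0$, both events have posterior mass tending to zero by Theorem~\ref{thm:post_contr_H}. The event $\Theta_n^c$ is negligible by (B6), since its prior mass is $e^{-cn\varepsilon_n^2}$ and the denominator of the posterior is lower bounded as in the proof of Theorem~\ref{thm:post_contr_H}. The two displayed $O_{P}$ statements then follow, after aligning labels by the minimizing permutation, because each of these terms is bounded by $d_\Theta$.

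\emph{Global separation step.} I will argue by contradiction and compactness. Parameters in the sieve are bounded; I will take $\Theta_n$ to be contained in a fixed compact set with $\pi_k$, $\sigma_j^2$ and $r_j$ bounded away from $0$, adjusting (B6) if necessary. Suppose $\theta_m\in\Theta_n\setminus\mathcal{N}$ with $d_H(\theta_m,\theta_0)\to0$. A subsequence converges to some $\theta^*\notin\mathcal{N}^\circ$. Because $X$ has compact support (B1), the map $\theta\mapsto d_H(\theta,\theta_0)$ is continuous, so $f_{\theta^*}=f_{\theta_0}$ for $P_X$-almost every $x$. Identifiability (B5) then gives $\theta^*=\theta_0$ up to permutation. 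This contradicts $\theta^*\notin\mathcal{N}$, provided $\mathcal{N}$ is defined as a $d_\Theta$-ball, which is permutation invariant. (B3) guarantees that $\pi_{0k}>0$ and that the components are distinct, so that $\theta_0$ has no degenerate (merged or empty) components. Such degeneracies are exactly where the argument would break.

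\emph{Local inverse inequality (main obstacle).} Near $\theta_0$, with the labels aligned, I expand $\sqrt{f_\theta}-\sqrt{f_{\theta_0}}$ to first order in $\Delta=\theta-\theta_0$. By the smoothness in (B4) and compactness, this gives
\[
d_H^2(\theta,\theta_0)=\tfrac18\,\Delta^\top I(\theta_0)\,\Delta+o(\|\Delta\|^2),
\]
where $I(\theta_0)$ is the Fisher information of the conditional mixture in the identifiable coordinates $(\bpi,\bmu_k,\bB_k,\sigma^2,r)$. The step I expect to be hardest is proving that $I(\theta_0)$ is positive definite, that is, that the scores are linearly independent in $L^2(P_{\theta_0})$. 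My approach is as follows. A vanishing linear combination of the scores is $\sum_k[\,a_k f_{0k}+\pi_{0k}\,\nabla_\eta f_{0k}\cdot(b_k+C_k^\top x)\,]+(\text{dispersion terms})=0$. Since the components are distinct by (B3), the functions $f_{0k}$ and their $\eta$-derivatives are linearly independent for exponential-family GLM kernels; this is the strong identifiability of Gaussian, binomial and negative binomial location families. Linear independence forces $a_k=0$ and $b_k+C_k^\top x=0$ for $P_X$-almost every $x$. Because $\E[XX^\top]\succ0$ by (B1), this in turn gives $b_k=0$ and $C_k=0$. The $\bB_k$ are treated as free $p\times q$ matrices in this argument, since the rank constraint only restricts $\Delta$ to a subset. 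If this first-order strong identifiability is hard to verify in full generality, my fallback is to assume it as part of (B5). The contraction rate $\varepsilon_n$ then transfers unchanged, with $M'=M/c_0$.
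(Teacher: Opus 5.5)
Your skeleton is the paper's: a local inequality $d_H(\theta,\theta_0)\ge c\,d_\Theta(\theta,\theta_0)$ on a neighborhood $\mathcal{N}$ of $\theta_0$, combined with Theorem~\ref{thm:post_contr_H}, gives $M'=M/c$. The paper only asserts two things: that (B3) and (B5) give this local equivalence, and that the $d_H$-ball of radius $M\varepsilon_n$ eventually lies inside $\mathcal{N}$. You try to prove both, which is more than the paper does, but both proofs break as written.

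For the local inequality, your key lemma is that $\{f_{0k},\partial_\eta f_{0k}\}_k$ are linearly independent at each fixed $x$ for the GLM kernels. This is false for Bernoulli coordinates. Already for $q=1$ and $K=2$, the four functions $f_{01},f_{02},\partial_\eta f_{01},\partial_\eta f_{02}$ lie in the two-dimensional space of functions on $\{0,1\}$; indeed a two-component Bernoulli mixture at fixed $x$ is not even identifiable. More generally, in an all-Bernoulli specification the $(K-1)+Kq$ functions your argument needs cannot be independent once $(K-1)+Kq>2^q$. In that regime any first-order identification must come from how $\eta_{0k}(x)$ varies with $x$, as for mixtures of logistic regressions, and your pointwise argument never uses this.

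There are two smaller slips. First, (B3) separates components only in integrated Hellinger distance, so they may coincide on a $P_X$-non-null affine set of $x$. Second, $\E[\bmX\bmX^\top]\succ0$ does not make $b_k+C_k^\top x\equiv0$ force $b_k=0$ and $C_k=0$: a constant coordinate in $\bmX$ confounds $\bmu_k$ with a row of $\bB_k$. You need $\operatorname{Cov}(\bmX)\succ0$, which can be extracted from (B5).

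Hence your fallback is not optional. (B5) is a zeroth-order condition: it only gives $d_H>0$ off the permutation orbit of $\theta_0$, never a linear lower bound (compare $\Normal(\theta^3,1)$ at $\theta_0=0$). So a nonsingular Fisher information, i.e.\ first-order identifiability, has to be assumed. The paper's proof makes the same leap silently.

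For the global step, you cannot place $\Theta_n$ inside a fixed compact set with weights and dispersions bounded away from $0$ and still keep (B6). Under the paper's Gaussian, Dirichlet, inverse-gamma and gamma priors, the complement of any fixed compact set has fixed positive prior mass, whereas $e^{-cn\varepsilon_n^2}=n^{-cd_n}\to0$. So ``adjusting (B6)'' this way is unavailable unless you truncate the priors. Without truncation, what you need is
\[
\inf\bigl\{d_H(\theta,\theta_0):\ d_\Theta(\theta,\theta_0)\ge\rho\bigr\}>0\quad\text{for each }\rho>0
\]
over the whole non-compact parameter space. This requires controlling sequences along which $\|\bmu_k\|_2\to\infty$ or $\|\bB_k\|_F\to\infty$, $\pi_k\to0$, or $\sigma_j^2$ and $r_j$ degenerate. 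One example is Bernoulli components collapsing to point masses as $\eta\to\pm\infty$, and (B5) is silent about such limits. You must either work with compactly supported priors whose support contains $\theta_0$, or add an identifiability condition at the boundary. The paper's claim that the $d_H$-ball ``lies inside $\mathcal{N}$'' tacitly assumes exactly this.
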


\begin{theorem}[Posterior contraction for predictor-side eigenspaces]\label{thm:eigenspace_contr}
Assume (B1) through (B7).
Then there exists $C>0$ such that
\[
\Pi\Bigl(d_{\mathcal{S}}(\theta,\theta_0) > C\varepsilon_n \,\Big|\,\text{data}\Bigr)\to 0
\;\text{in }P_{\theta_0}\text{-probability}.
\]
Moreover, for each $k$ after label alignment,
\[
\|\bP_k-\bP_{0k}\|_F
\le \frac{4}{g_k}\,\|\bB_k-\bB_{0k}\|_F
\]
holds on an event whose posterior probability tends to $1$.
\end{theorem}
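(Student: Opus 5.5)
The argument is deterministic perturbation theory (Davis--Kahan/Wedin) applied on the high-posterior-probability event supplied by Theorem~\ref{thm:post_contr_param}.

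\textbf{Step 1: the contraction event.} Let $\mathcal{E}_n$ be the set of parameters $\theta$ for which there is a permutation $\tau$ with $\|\theta-\theta_0\|_\tau\le M'\varepsilon_n$. By Theorem~\ref{thm:post_contr_param}, $\Pi(\mathcal{E}_n\mid\text{data})\to1$ in $P_{\theta_0}$-probability. On $\mathcal{E}_n$ we relabel components by the minimizing $\tau$. This gives
\[
\sum_k\|\bB_k-\bB_{0k}\|_F^2\le M'^2\varepsilon_n^2 .
\]
Because $\varepsilon_n\to0$ and $g_k>0$ is fixed by (B7), for $n$ large we also have $\max_k\|\bB_k-\bB_{0k}\|_F\le g_k/2$ on $\mathcal{E}_n$.

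\textbf{Step 2: a deterministic $\sin\Theta$ bound.} Fix $k$. Write $s_r=s_r(\bB_{0k})$ and $E=\bB_k-\bB_{0k}$. We bound the projector onto the top-$r_{0k}$ left singular vectors; equivalently, the top eigenspace of $\bB\bB^\top$. Weyl's inequality for singular values gives $|s_r(\bB_k)-s_r(\bB_{0k})|\le\|E\|_{op}\le\|E\|_F$. Hence the perturbed singular values indexed beyond $r_{0k}$ are at most $s_{r_{0k}+1}+\|E\|_F$. The reference singular values indexed up to $r_{0k}$ are at least $s_{r_{0k}}$. The separation between these two groups is therefore at least $g_k-\|E\|_F\ge g_k/2$.

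Wedin's $\sin\Theta$ theorem for left singular subspaces then gives $\|\sin\Theta\|_F\le \sqrt2\,\|E\|_F/(g_k/2)$. The $\sqrt2$ accounts for combining the left and right residuals $E\bV$ and $E^\top\bU$, each bounded by $\|E\|_F$. Using $\|\bP_k-\bP_{0k}\|_F=\sqrt2\|\sin\Theta\|_F$, we obtain
\[
\|\bP_k-\bP_{0k}\|_F\le \frac{4}{g_k}\|E\|_F .
\]
If Wedin's residual form is awkward, I would instead apply Davis--Kahan to $\bB\bB^\top$. Then the perturbation is $E\bB_{0k}^\top+\bB_{0k}E^\top+EE^\top$ and the gap involves squared singular values. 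This route needs $\|\bB_{0k}\|$ bounded, which (B2) provides, but it yields a constant depending on $s_1$. I would therefore try the Wedin route first, since it produces exactly $4/g_k$.

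\textbf{Step 3: the main obstacle and the conclusion.} The delicate point is that $\bP_k$ must be well defined, i.e.\ $\bU_{k,r_{0k}}$ uniquely determined up to rotation. This requires the perturbed singular values at positions $r_{0k}$ and $r_{0k}+1$ to differ. That holds because the separation in Step 2 is at least $g_k/2>0$ on $\mathcal{E}_n$. Off this event, $\bP_k$ may be ambiguous, but the event has vanishing posterior mass.

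Squaring the bound from Step 2 and summing over $k$ gives
\[
d_{\mathcal S}^2\le \Bigl(\frac{16}{\min_k g_k^2}\Bigr)\sum_k\|\bB_k-\bB_{0k}\|_F^2\le \Bigl(\frac{16M'^2}{\min_k g_k^2}\Bigr)\varepsilon_n^2
\]
on $\mathcal{E}_n$. The minimum over $\tau$ in $d_{\mathcal S}$ is bounded above by the value at the aligned permutation. Setting $C=4M'/\min_kg_k$, we get $\Pi(d_{\mathcal S}>C\varepsilon_n\mid\text{data})\le\Pi(\mathcal{E}_n^c\mid\text{data})\to0$, which proves the theorem.
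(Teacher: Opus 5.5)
Your proposal is correct and follows essentially the same route as the paper. You work on the event of posterior probability tending to one from Theorem~\ref{thm:post_contr_param}, use the fixed gap in (B7) to get $\|\bB_k-\bB_{0k}\|\le g_k/2$ for large $n$, apply a Wedin-type projector bound with constant $4/g_k$ componentwise, and then sum the squared bounds over $k$ before minimizing over permutations. Your bookkeeping uses Wedin's theorem in Frobenius-residual form, $\|\sin\Theta\|_F\le 2\sqrt2\,\|\bB_k-\bB_{0k}\|_F/g_k$, together with the exact identity $\|\bP_k-\bP_{0k}\|_F=\sqrt2\,\|\sin\Theta\|_F$; this is actually the sound way to reach $4/g_k$, whereas the paper's Lemma~\ref{lem:wedin_proj} passes through $\|\sin\Theta\|_F\le (2/g)\|\bB-\bB_0\|_2$, which uses an operator norm where a Frobenius norm (or an extra $\sqrt{r}$ factor) is needed.
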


\subsection{Consistency of PSM eigenspace plus mean shift clustering}\label{sec:psm_ms_theory}

Let $\bS_{ii'}=\Pi(Z_i=Z_{i'}\mid \text{data})$ be the posterior similarity matrix.
Let $\hat\bU$ be the top-$K$ eigenvectors of $\bS$ and $\hat u_i^\top$ the $i$th row.
Run mean shift with bandwidth $h=h_n$ on $\{\hat u_i\}\subset\R^K$ to obtain a partition $\hat{\mathcal{C}}_n$.

\begin{theorem}[Consistency of PSM eigenspace plus mean shift]\label{thm:psm_meanshift_consistency}
Assume (B1) through (B7) and the perfect-separation margin (B8').
Let $\hat{\mathcal{C}}_n$ be the partition produced by the top-$K$ eigenspace of $\bS$ followed by mean shift with bandwidth $h_n$.
Suppose $h_n\to 0$ and $\varepsilon_n/h_n\to 0$.
Then, up to a permutation of labels,
\[
\Prob_{\theta_0}\!\left(
\frac{1}{n}\min_{\tau}\sum_{i=1}^n
\ind\{\hat{\mathcal{C}}_n(i)\neq \tau(Z_{0i})\}>0
\right)\to 0.
\]
\end{theorem}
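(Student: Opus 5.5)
The plan has four stages: (i) show that $\bS$ is uniformly close to the hard similarity matrix $\bS_0$; (ii) transfer this to the top-$K$ eigenspace by a Davis--Kahan argument; (iii) show that the embedded points $\hat u_i$ fall into $K$ tight, well-separated groups; (iv) show that mean shift with $h_n$ recovers exactly these groups.

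\textbf{Step 1: posterior membership probabilities.} First we show that $\max_{i,i'}|\bS_{ii'}-(\bS_0)_{ii'}|\to 0$ in $P_{\theta_0}$-probability, at a rate comparable to $\varepsilon_n$. Write
\[
\Pi(Z_i=k\mid\text{data})=\int \gamma_{ik}(\theta)\,\Pi(d\theta\mid\text{data}),
\qquad
\gamma_{ik}(\theta)\propto \pi_k f_k(\bmY_i\mid\bmX_i).
\]
Fix the alignment permutation $\tau$ from Theorem~\ref{thm:post_contr_param}. That theorem gives posterior mass tending to one on the set $\{d_\Theta(\theta,\theta_0)\le M'\varepsilon_n\}$.

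On this set, compactness of the support of $\bmX$ (B1) and boundedness of the linear predictors give $\sup_x|\eta_{kj}(x)-\eta_{0,\tau(k)j}(x)|\lesssim\varepsilon_n$. Using the smoothness in (B4), the log-likelihood ratios $\log\{\pi_k f_k/\pi_{0\tau(k)}f_{0\tau(k)}\}(\bmY_i\mid\bmX_i)$ are then $O(\varepsilon_n(1+|\bmY_i|))$.

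For Gaussian and negative binomial responses, $\max_i|\bmY_i|=O_P(\log n)$. The margin (B8') then gives
\[
\gamma_{i\tau^{-1}(Z_{0i})}(\theta)\ge 1-(K-1)e^{-\Delta+o(1)}.
\]
I expect this to be the \textbf{main obstacle}. A fixed margin $\Delta$ only yields membership probabilities bounded away from $1/2$, not converging to $1$. I will therefore read (B8') as implying an exact partition: either $\Delta$ large relative to $\log K$, or the limit of "perfect separation". In either case, $\bS$ is entrywise within $\rho<1/2$ of $\bS_0$, which suffices for the block structure. I would also state explicitly that the label-switching symmetry cancels in $\bS$. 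The reason is that $\bS$ depends only on the events $\{Z_i=Z_{i'}\}$, so the per-draw permutation is irrelevant.

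\textbf{Step 2: perturbation to eigenvectors.} Write $\bS_0=\bGamma_0\bGamma_0^\top$, where $\bGamma_0$ is the $n\times K$ membership matrix. Its nonzero eigenvalues are the cluster sizes $n_k\ge n\pi_{\min}/2$ with high probability, by (B3) and the law of large numbers. Its top-$K$ eigenvectors are $\bU_0=\bGamma_0\diag(n_k^{-1/2})$, so all rows in cluster $k$ equal $n_k^{-1/2}e_k$ up to a rotation. The distance between rows from different clusters is at least $\sqrt{2/n}$.

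Davis--Kahan with eigengap $\asymp n$ gives
\[
\min_{\bO}\|\hat\bU-\bU_0\bO\|_F\lesssim \|\bS-\bS_0\|_F/n .
\]
To obtain row-wise control, I would use an $\ell_{2\to\infty}$ bound instead. Since
\[
\hat\bU=\bS\hat\bU\hat\Lambda^{-1},
\]
each row satisfies $\|\hat u_i-\bO^\top u_{0i}\|\lesssim n^{-1/2}\max_{i'}|\bS_{ii'}-\bS_{0,ii'}|$ plus a Frobenius remainder. This leaves each embedded point within $o(n^{-1/2})$ of its cluster center after rescaling.

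\textbf{Step 3: mean shift.} Rescale the embedding by $\sqrt n$, or equivalently row-normalize as permitted in Algorithm~\ref{alg:psm_ms}. This yields $K$ point clouds:
\begin{itemize}
\item each cloud has diameter $o(1)$, by Step 2;
\item each cloud is concentrated near a distinct center at mutual distance at least $\sqrt2$.
\end{itemize}

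With $h_n\to0$ and cloud diameter $o(h_n)$ (supplied by the $\varepsilon_n/h_n\to0$ condition), the kernel density estimate has exactly one mode near each center. The estimate has no mass in the gaps at scale $h_n$, and every point lies within the basin of its cloud's mode. The mean shift iteration from $\hat u_i$ therefore stays within the convex hull of its own cloud and converges to that mode.

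Hence $\hat{\mathcal C}_n$ coincides with the partition $\{Z_{0i}\}$ up to a permutation on an event of probability tending to one. This implies the stated misclassification bound.
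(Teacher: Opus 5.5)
There is a genuine gap, and it sits exactly where you flag your main obstacle; the way you route around it does not close the argument. With your $\sqrt n$ rescaling and the $\ell_{2\to\infty}$ bound, Step~2 controls the within-cloud deviation only by a constant times $\max_{i,i'}|\bS_{ii'}-(\bS_0)_{ii'}|$. So Step~3 needs this maximum to be $o_P(h_n)$, and since $h_n\to0$ it must tend to zero.

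A fixed margin cannot deliver that. Your own computation gives $1-\gamma_{i,\tau^{-1}(Z_{0i})}(\theta)\le (K-1)e^{-\Delta+o(1)}$, which is an entrywise error $\rho\asymp (K-1)e^{-\Delta}$ that does not shrink with $n$. The contraction rate enters only through the $o(1)$ in the exponent. Hence the hypothesis $\varepsilon_n/h_n\to0$ says nothing about the cloud diameter, and your opening claim that $\bS$ is close to $\bS_0$ ``at a rate comparable to $\varepsilon_n$'' is not what the argument produces.

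An entrywise error $\rho<1/2$ would suffice to recover $\bS_0$ by thresholding $\bS$ at $1/2$, but not for the eigenspace-plus-mean-shift pipeline at vanishing bandwidth. On the contraction set, $\bS_{ii'}\approx\sum_k\gamma_{ik}(\theta_0)\gamma_{i'k}(\theta_0)$, and $\gamma_{ik}(\theta_0)$ varies with $(\bmX_i,\bmY_i)$. So the rescaled rows $\sqrt n\,\hat u_i$ of one true cluster can spread over a set of diameter of order $\rho$. The hypothesis $\delta_n\le h_n/4$ of Lemma~\ref{lem:ms_clouds} then fails once $h_n$ is small relative to $\rho$. Nothing in your Step~3 controls the mode structure in that case; with a flat kernel, for instance, an embedded point with no other point within distance $h_n$ is a fixed point of the iteration and forms its own cluster.

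Reading (B8') as ``$\Delta$ large relative to $\log K$'' does not help, because $\Delta$ is still fixed. You need a quantitative hypothesis. One option is a margin $\Delta_n$ with $(K-1)e^{-\Delta_n}=o(h_n)$ holding for all $i\le n$ with probability tending to one, together with posterior mass off the contraction set of order $o_P(h_n)$. The other is a fixed bandwidth $h\le\sqrt2/4$ with $C\rho\le h/4$, which is not the regime of the statement.

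The paper's proof has the same three-stage structure and does not supply the missing step either. Its Lemma~\ref{lem:psm_to_s0} simply asserts $\bS_{ii'}\to\ind(Z_{0i}=Z_{0i'})$ from the fixed margin plus parameter contraction. It then passes to $\|\bS-\bS_0\|_F\to0$ by ``dominated convergence'', which is invalid for $n\times n$ matrices with $n\to\infty$.

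Your rescaling and row-wise bound do repair two real defects in the paper's version. First, unnormalized rows of $\bU_0$ are only $\asymp n^{-1/2}$ apart. Combined with the requirement $\|c_k-c_\ell\|\ge 4h_n$, this is incompatible with $\varepsilon_n/h_n\to0$, because $\varepsilon_n\gtrsim\sqrt{\log n/n}$. Second, an $o_P(1)$ Frobenius bound on $\hat\bU-\bU_0\bO$ gives no usable row-wise control at that scale.

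Finally, (B8') as written cannot hold almost surely for Gaussian, Bernoulli or negative binomial components. On the common support of $f_{0k}(\cdot\mid x)$ and $f_{0\ell}(\cdot\mid x)$ the two margin inequalities combine to $\pi_{0k}f_{0k}\ge e^{2\Delta}\pi_{0k}f_{0k}$, which is impossible. This is a further reason the margin must be replaced by a quantitative, $n$-dependent condition rather than read as exact separation.
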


\paragraph{Implementation note for the variational PSM.}
When using variational inference, we replace $\bS$ by $\widehat{\bS}=\widehat{\bGamma}\widehat{\bGamma}^\top$ as in \eqref{eq:psm_vi}.
Appendix~\ref{app:proofs} records a deterministic perturbation lemma: if $\|\widehat{\bS}-\bS\|_F=o_P(h_n)$ and the PSM eigengap at rank $K$ is of order $n$, then the same conclusion holds with $\widehat{\bS}$ in place of $\bS$.

\section{Simulation Study}\label{sec:sim}

We conducted a simulation study to evaluate clustering accuracy, model-selection behavior, and predictive performance across response-family configurations.
The simulation suite was generated from the model class in Section~\ref{sec:model}, with two latent clusters, three responses, true rank two, and the same low-rank coefficient construction across scenarios.
The computational implementation uses variational inference and the PSM-based clustering rule as the default model-comparison pipeline.
It also compares the proposed method with generic clustering baselines and, in the all-Gaussian setting, with a reduced-rank mixture benchmark implemented through \texttt{rrMixture} \citep{rrMixture2022}.

\subsection{Scenario design}\label{sec:sim-design-0506}

Table~\ref{tab:sim_scenarios_0506} summarizes the four response-family scenarios.
Each scenario uses $n=1000$, $p=40$, $K_{\mathrm{true}}=2$, and true rank $r_{\mathrm{true}}=2$.
Cluster separation is controlled by mean-shift and coefficient-surface separation parameters $(\mathrm{sep}_{\mu},\mathrm{sep}_{B})=(3.5,7.5)$.
For Gaussian responses the noise standard deviation is $0.7$; for negative binomial responses the size parameter is $r_{\mathrm{NB}}=12$.
Predictors are generated from a mean-zero multivariate normal distribution with AR(1) covariance, and the response family is varied across the four scenarios.

\begin{table}[t]
\centering
\caption{Simulation scenarios used in the model-comparison study. All scenarios use $n=1000$, $p=40$, $K_{\mathrm{true}}=2$, true rank $r_{\mathrm{true}}=2$, $\mathrm{sep}_{\mu}=3.5$, $\mathrm{sep}_{B}=7.5$, Gaussian standard deviation $0.7$, and negative binomial size $r_{\mathrm{NB}}=12$ when applicable.}
\label{tab:sim_scenarios_0506}
\resizebox{\linewidth}{!}{%
\begin{tabular}{rlllrr}
\toprule
ID & Scenario & Response profile & rrMix benchmark & $n$ & $p$ \\
\midrule
1 & All Gaussian & Gaussian / Gaussian / Gaussian & Yes & 1000 & 40 \\
2 & All binary & Bernoulli / Bernoulli / Bernoulli & No & 1000 & 40 \\
3 & All count & NB / NB / NB & No & 1000 & 40 \\
4 & One of each & Gaussian / Bernoulli / NB & No & 1000 & 40 \\
\bottomrule
\end{tabular}
}%
\end{table}

\subsection{Methods compared}\label{sec:sim-methods-0506}

The proposed procedure, denoted BMLC-VI-PSM, fits the mixed-response low-rank latent-cluster model by the variational routine in Section~\ref{sec:vi}, selects the working model over a small grid of $(K,\rmax)$, and reports clustering through the PSM eigenspace followed by mean shift.
The comparison set contains $K$-means on responses only, $K$-means on concatenated predictors and responses, Gaussian finite-mixture model-based clustering via \texttt{mclust} on responses only and on concatenated predictors and responses, and PCA followed by $K$-means on the concatenated feature matrix.
For mixed-response feature representations, Gaussian coordinates are centered and scaled, Bernoulli coordinates are kept as $0/1$ indicators, and count coordinates are transformed by $\log(1+y)$ before standardization.
For the all-Gaussian scenario only, we also report \texttt{rrMix}, a Gaussian reduced-rank mixture benchmark.
The number of clusters is fixed at the true value $K=2$ for the generic clustering baselines, so that the comparison targets clustering quality rather than unsupervised model-order selection \citep{scrucca2016mclust}.

\subsection{Clustering performance}\label{sec:sim-clustering-0506}

Table~\ref{tab:sim_cluster_compare_0506} reports the main clustering comparison over 100 replications per scenario.
The proposed BMLC-VI-PSM method attains mean accuracy $0.996$ in the all-Gaussian scenario, $0.974$ in the all-count scenario, and $0.954$ in the mixed Gaussian--Bernoulli--negative-binomial scenario.
In the all-Gaussian scenario it essentially matches \texttt{rrMix} in accuracy, with a slightly smaller mean ARI.
In the all-binary scenario, BMLC-VI-PSM remains accurate but is modestly below the response-only $K$-means baseline, indicating that the likelihood-based low-rank mixture is not uniformly dominant when all coordinates are binary and the response-space separation is already strong.

\begin{table}[t]
\centering
\caption{Simulation clustering comparison. Values are means over 100 replications; parentheses show standard deviations for accuracy and ARI. The generic baselines use the true number of clusters, $K=2$, so the comparison isolates clustering accuracy rather than model-order selection. HCE denotes hard classification error after label alignment.}
\label{tab:sim_cluster_compare_0506}
\scriptsize
\resizebox{\linewidth}{!}{%
\begin{tabular}{llcccc}
\toprule
Scenario & Method & Accuracy & ARI & Macro F1 & HCE \\
\midrule
All Gaussian & BMLC-VI-PSM & 0.996 (0.004) & 0.983 (0.015) & 0.996 & 0.004 \\
All Gaussian & rrMix & 0.996 (0.003) & 0.986 (0.012) & 0.996 & 0.004 \\
All Gaussian & KMeans(Y) & 0.974 (0.027) & 0.900 (0.096) & 0.974 & 0.026 \\
All Gaussian & KMeans(X,Y) & 0.933 (0.080) & 0.776 (0.232) & 0.933 & 0.067 \\
All Gaussian & Mclust(Y) & 0.988 (0.012) & 0.953 (0.047) & 0.988 & 0.012 \\
All Gaussian & Mclust(X,Y) & 0.513 (0.009) & 0.000 (0.000) & 0.340 & 0.487 \\
All Gaussian & PCA-KMeans(X,Y) & 0.873 (0.084) & 0.585 (0.228) & 0.873 & 0.127 \\
\midrule
All binary & BMLC-VI-PSM & 0.917 (0.047) & 0.703 (0.146) & 0.917 & 0.083 \\
All binary & KMeans(Y) & 0.930 (0.035) & 0.745 (0.110) & 0.930 & 0.070 \\
All binary & KMeans(X,Y) & 0.899 (0.068) & 0.655 (0.196) & 0.899 & 0.101 \\
All binary & Mclust(Y) & 0.911 (0.036) & 0.680 (0.117) & 0.911 & 0.089 \\
All binary & Mclust(X,Y) & 0.513 (0.009) & 0.000 (0.000) & 0.340 & 0.487 \\
All binary & PCA-KMeans(X,Y) & 0.846 (0.075) & 0.500 (0.191) & 0.846 & 0.154 \\
\midrule
All count & BMLC-VI-PSM & 0.974 (0.025) & 0.899 (0.089) & 0.973 & 0.026 \\
All count & KMeans(Y) & 0.969 (0.033) & 0.883 (0.114) & 0.969 & 0.031 \\
All count & KMeans(X,Y) & 0.947 (0.056) & 0.810 (0.176) & 0.947 & 0.053 \\
All count & Mclust(Y) & 0.961 (0.068) & 0.868 (0.209) & 0.960 & 0.039 \\
All count & Mclust(X,Y) & 0.513 (0.009) & 0.000 (0.000) & 0.340 & 0.487 \\
All count & PCA-KMeans(X,Y) & 0.920 (0.063) & 0.720 (0.190) & 0.919 & 0.080 \\
\midrule
One Gaussian, one binary, one count & BMLC-VI-PSM & 0.954 (0.054) & 0.836 (0.175) & 0.954 & 0.046 \\
One Gaussian, one binary, one count & KMeans(Y) & 0.944 (0.040) & 0.796 (0.131) & 0.944 & 0.056 \\
One Gaussian, one binary, one count & KMeans(X,Y) & 0.911 (0.063) & 0.692 (0.189) & 0.911 & 0.089 \\
One Gaussian, one binary, one count & Mclust(Y) & 0.909 (0.044) & 0.675 (0.136) & 0.908 & 0.091 \\
One Gaussian, one binary, one count & Mclust(X,Y) & 0.513 (0.009) & 0.000 (0.000) & 0.340 & 0.487 \\
One Gaussian, one binary, one count & PCA-KMeans(X,Y) & 0.879 (0.071) & 0.594 (0.197) & 0.879 & 0.121 \\
\bottomrule
\end{tabular}
}%
\end{table}

Figures~\ref{fig:sim_accuracy_0506}--\ref{fig:sim_bmlc_accuracy_0506} summarize the same comparison graphically.
The dot-and-interval panels show accuracy and ARI by scenario, while the heatmap gives a compact view of the mean accuracy surface across methods, including the \texttt{mclust} baselines.
Each plot is displayed at full text width to preserve axis, legend, and annotation readability.
The proposed method is strongest or tied in the all-Gaussian, all-count, and mixed-response settings. In the all-binary scenario, response-only $K$-means and \texttt{mclust} are competitive, which is expected when the marginal binary response space is already well separated.

\begin{figure}[t]
\centering
\includeReadableFigure[width=0.98\linewidth]{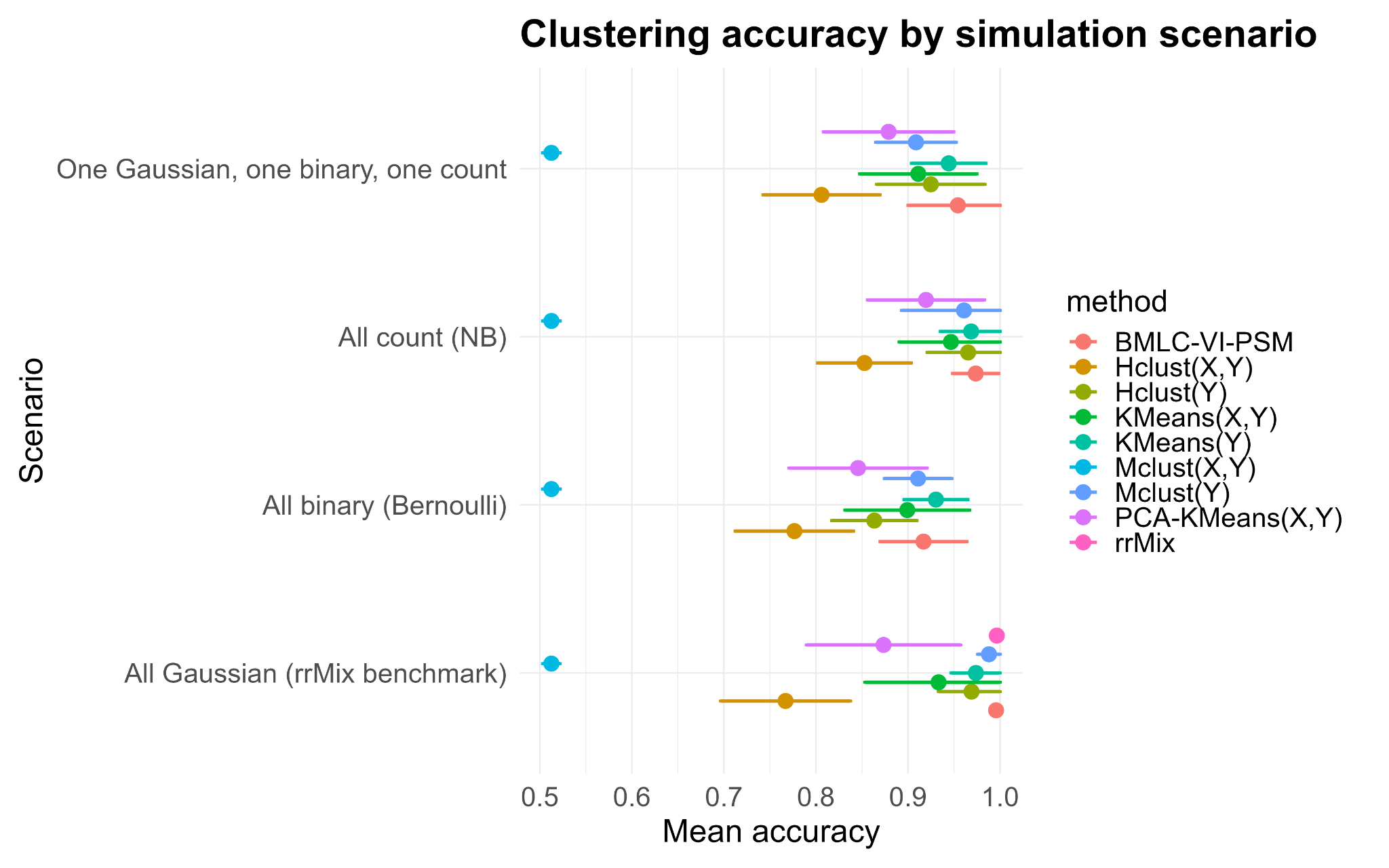}
\caption{Simulation comparison by scenario: mean clustering accuracy with standard-deviation interval summaries over 100 replications. The figure is exported with enlarged fonts and displayed at full text width for readability.}
\label{fig:sim_accuracy_0506}
\end{figure}

\begin{figure}[t]
\centering
\includeReadableFigure[width=0.98\linewidth]{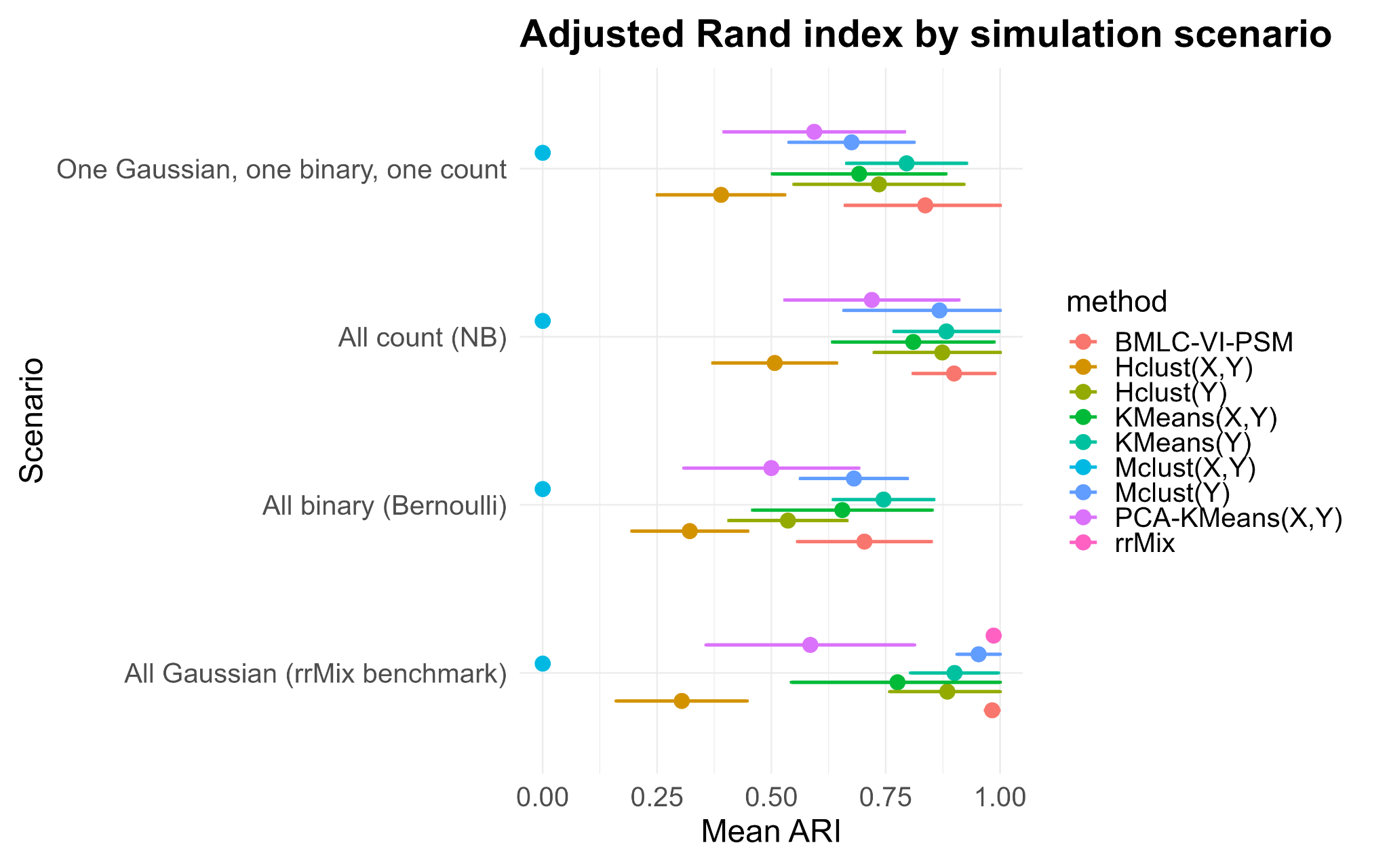}
\caption{Simulation comparison by scenario: mean adjusted Rand index with standard-deviation interval summaries over 100 replications.}
\label{fig:sim_ari_0506}
\end{figure}

\begin{figure}[t]
\centering
\includeReadableFigure[width=0.98\linewidth]{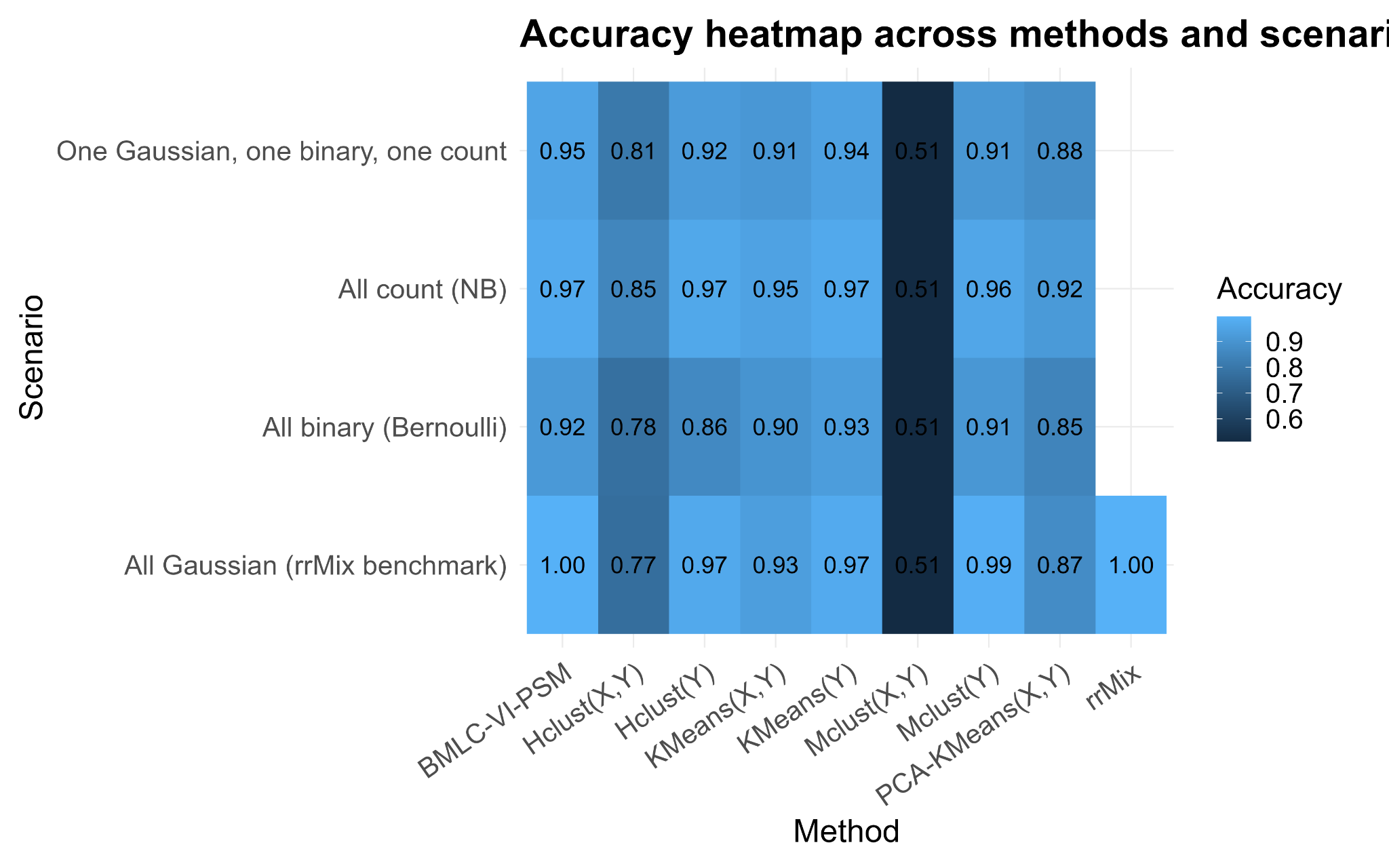}
\caption{Accuracy heatmap for the simulation suite. Cells show mean clustering accuracy by method and scenario.}
\label{fig:sim_heatmap_0506}
\end{figure}

\begin{figure}[t]
\centering
\includeReadableFigure[width=0.94\linewidth]{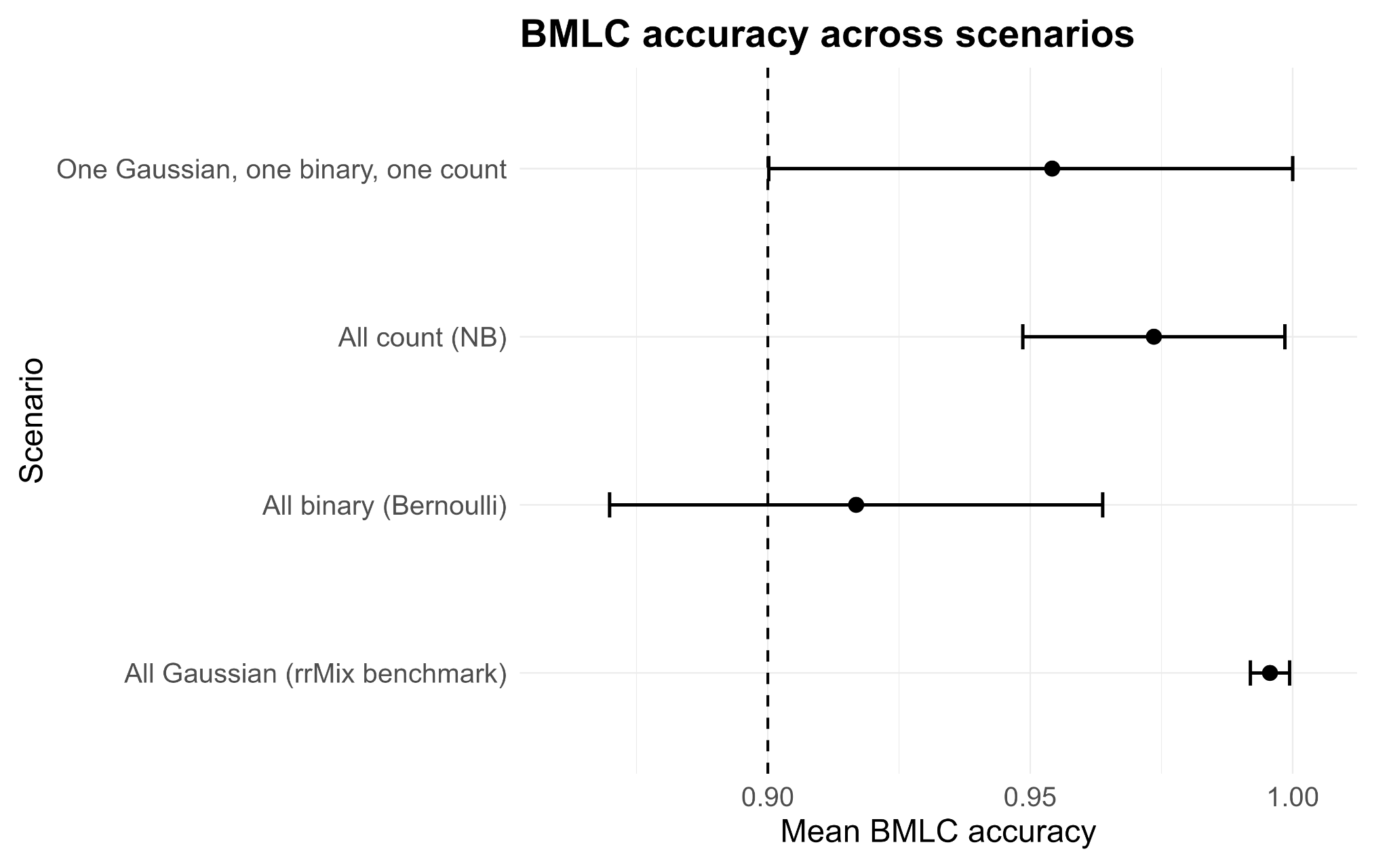}
\caption{BMLC-VI-PSM accuracy by simulation scenario. The dashed vertical line marks the 0.90 target accuracy threshold; the threshold is identified only in the caption to avoid label overlap inside the panel.}
\label{fig:sim_bmlc_accuracy_0506}
\end{figure}

\subsection{Model recovery and predictive summaries}\label{sec:sim-recovery-0506}

Table~\ref{tab:sim_bmlc_summary_0506} summarizes BMLC-specific recovery and predictive metrics.
The variational tuning procedure recovers $K=2$ in all four scenarios.
The selected rank is exactly two in the all-Gaussian, all-binary, and all-count scenarios, and has mean $1.95$ in the mixed-response scenario, corresponding to a mean absolute rank error of $0.05$.
The method therefore recovers the intended latent dimension reliably in this suite while maintaining high clustering accuracy.

\begin{table}[t]
\centering
\caption{BMLC-VI-PSM summary by simulation scenario. Accuracy, ARI, HCE, SCE, and macro F1 are averaged over 100 replications. Predictive metrics are reported only for response families present in the scenario.}
\label{tab:sim_bmlc_summary_0506}
\scriptsize
\resizebox{\linewidth}{!}{%
\begin{tabular}{lrrrrrrrrrr}
\toprule
Scenario & Acc. & Acc. SD & ARI & HCE & SCE & $\hat K$ & $\hat r$ & Rank err. & G-MSE & Brier / NB-MSE \\
\midrule
All Gaussian & 0.996 & 0.004 & 0.983 & 0.004 & 0.004 & 2.00 & 2.00 & 0.00 & 0.575 & -- \\
All binary & 0.917 & 0.047 & 0.703 & 0.083 & 0.083 & 2.00 & 2.00 & 0.00 & -- & 0.080 / -- \\
All count & 0.974 & 0.025 & 0.899 & 0.026 & 0.026 & 2.00 & 2.00 & 0.00 & -- & -- / 21403.64 \\
One Gaussian, one binary, one count & 0.954 & 0.054 & 0.836 & 0.046 & 0.046 & 2.00 & 1.95 & 0.05 & 1.453 & 0.088 / 10042.11 \\
\bottomrule
\end{tabular}
}%
\end{table}

Table~\ref{tab:sim_gain_0506} compares BMLC-VI-PSM with the best non-BMLC baseline in each scenario, where the non-BMLC set includes $K$-means, fixed-$K$ \texttt{mclust}, PCA-$K$-means, and \texttt{rrMix} in the all-Gaussian case.
Positive values indicate improvement over the strongest available non-BMLC comparator.
The method improves over the best generic baseline in the all-count and mixed-response scenarios, is essentially tied with \texttt{rrMix} in the all-Gaussian scenario, and trails response-only $K$-means in the all-binary scenario.

\begin{table}[t]
\centering
\caption{Gain of BMLC-VI-PSM over the best non-BMLC baseline in each simulation scenario. Positive values indicate improvement of the proposed method over the strongest available baseline.}
\label{tab:sim_gain_0506}
\resizebox{0.92\linewidth}{!}{%
\begin{tabular}{lccc}
\toprule
Scenario & Accuracy gain & ARI gain & Macro-F1 gain \\
\midrule
One Gaussian, one binary, one count & 0.010 & 0.041 & 0.010 \\
All count & 0.005 & 0.017 & 0.005 \\
All Gaussian & -0.001 & -0.003 & -0.001 \\
All binary & -0.014 & -0.042 & -0.014 \\
\bottomrule
\end{tabular}
}%
\end{table}

For the all-Gaussian benchmark, BMLC-VI-PSM and \texttt{rrMix} have nearly identical mean accuracy and macro F1.
The \texttt{rrMix} benchmark has a modestly higher mean ARI and lower Gaussian predictive MSE in this purely Gaussian setting, as expected for a method specialized to Gaussian reduced-rank mixtures.

\begin{table}[t]
\centering
\caption{All-Gaussian comparison with the reduced-rank mixture benchmark.}
\label{tab:sim_rrmix_0506}
\begin{tabular}{lrrrrr}
\toprule
Method & Accuracy & ARI & Macro F1 & G-MSE & $\hat r$ \\
\midrule
BMLC-VI-PSM & 0.996 & 0.983 & 0.996 & 0.575 & 2 \\
rrMix & 0.996 & 0.986 & 0.996 & 0.449 & 3 \\
\bottomrule
\end{tabular}
\end{table}

Figures~\ref{fig:sim_embedding_fitted_0506} and \ref{fig:sim_embedding_true_0506} show an example two-dimensional embedding for the all-Gaussian benchmark.
The fitted and true cluster colorings agree nearly exactly, illustrating that the PSM-based representation recovers the latent partition when the two regression components are well separated.

\begin{figure}[t]
\centering
\includeReadableFigure[width=0.78\linewidth]{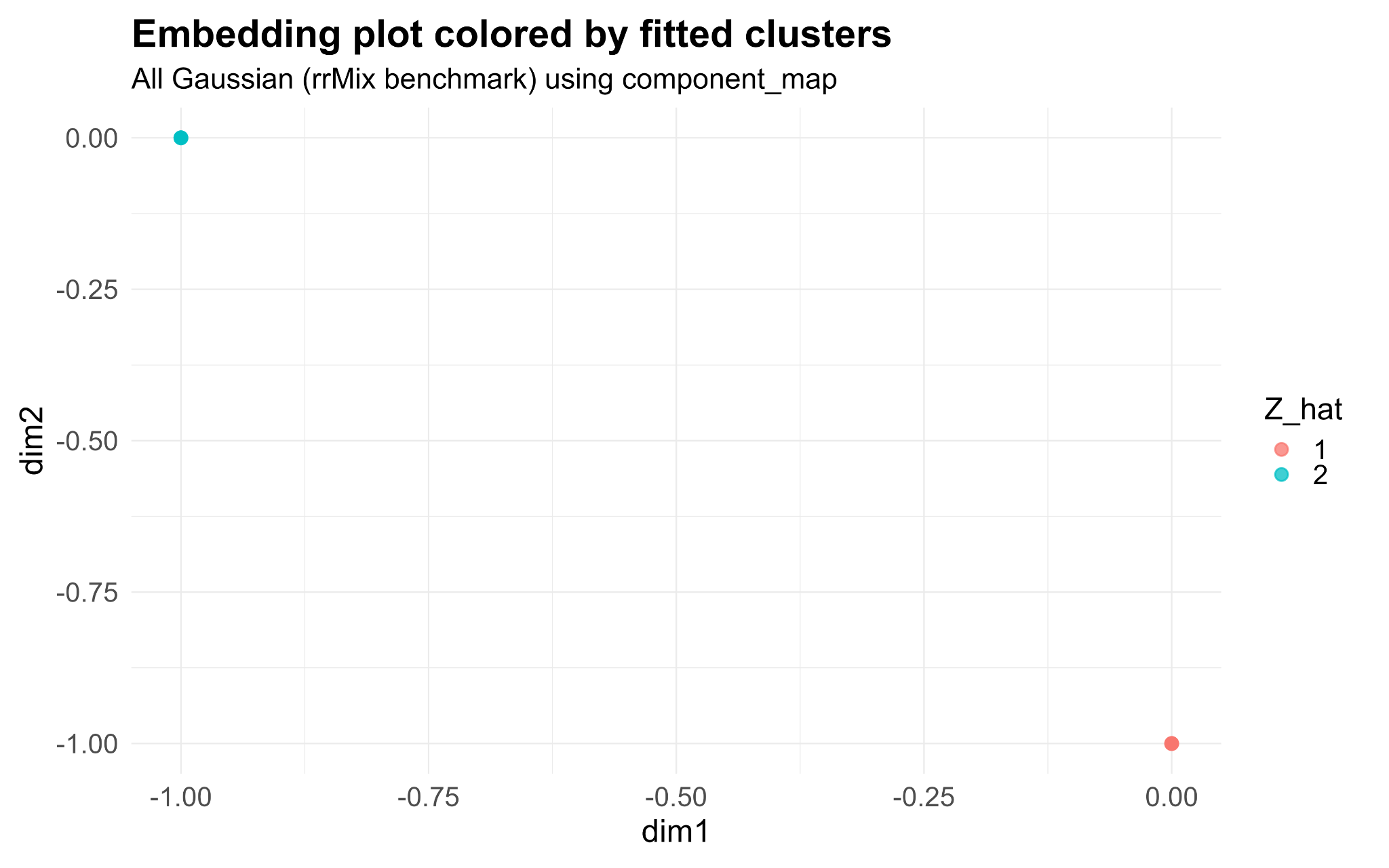}
\caption{Example embedding diagnostic from the all-Gaussian benchmark, colored by fitted clusters.}
\label{fig:sim_embedding_fitted_0506}
\end{figure}

\begin{figure}[t]
\centering
\includeReadableFigure[width=0.78\linewidth]{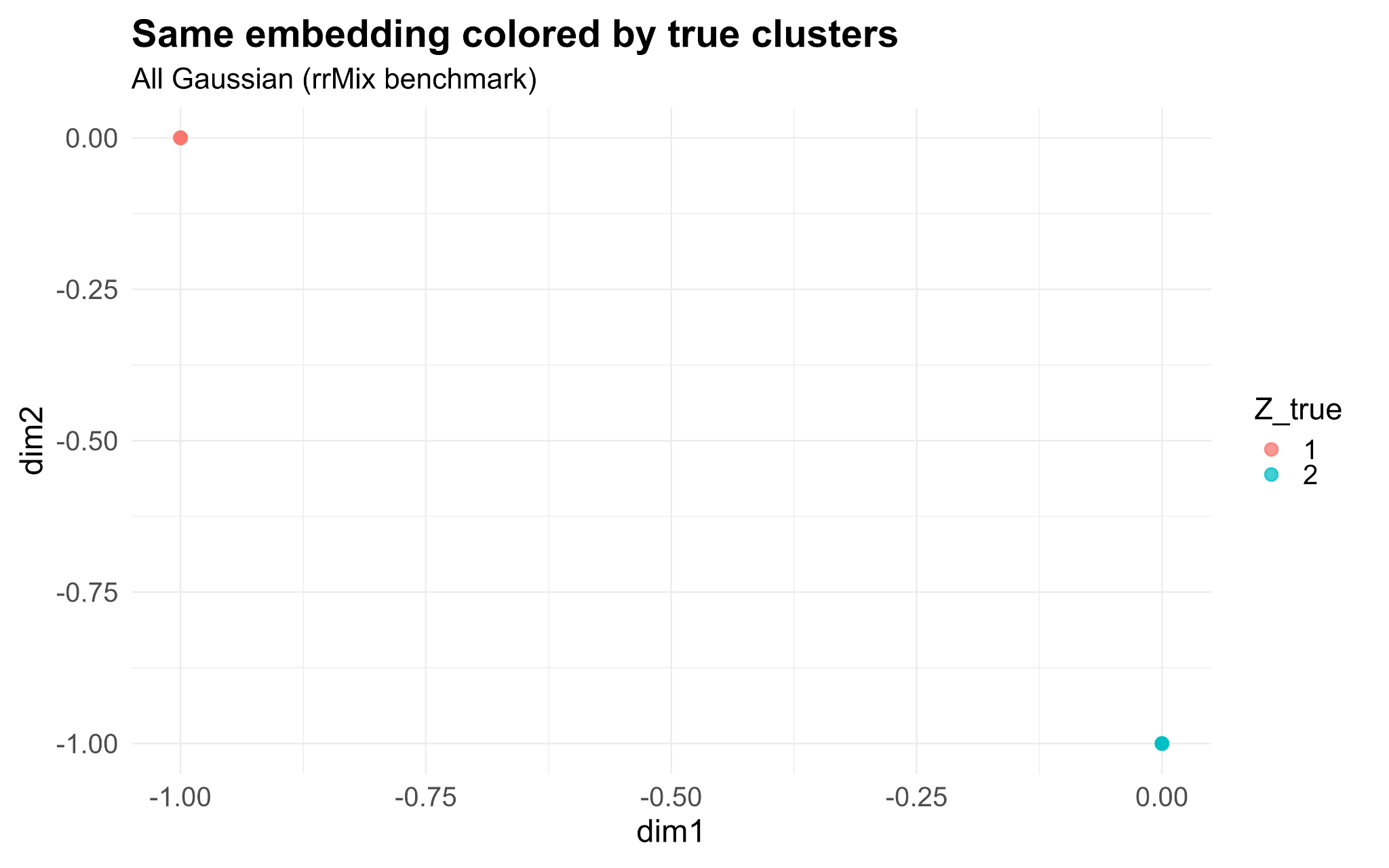}
\caption{The same all-Gaussian benchmark embedding colored by the true clusters.}
\label{fig:sim_embedding_true_0506}
\end{figure}

\section{Real-data Illustration: DoctorVisits with Gaussian, Bernoulli, and Negative Binomial Responses}\label{sec:doctorvisits}

We analyze the \texttt{DoctorVisits} data from the \texttt{AER} package as an individual-level mixed-response illustration.
The analysis uses the same response construction as in the preceding implementation and fits the model under the variational model-comparison pipeline.
The analysis sample contains $n=5190$ adults and $p=5$ standardized predictors.

\subsection{Data description and response specification}\label{sec:dv-data-0506}

For individual $i$, the response vector is
\[
\bmY_i=(Y^{(G)}_i,Y^{(B)}_i,Y^{(NB)}_i)^\top.
\]
The Gaussian coordinate is a robustly centered and scaled health score, the Bernoulli coordinate is an indicator of private health insurance, and the negative binomial coordinate is the count of doctor visits.
The predictor vector includes \texttt{age}, \texttt{income}, \texttt{illness}, \texttt{reduced}, and \texttt{nchronic}, centered and scaled when continuous.
The cluster-specific intercepts $\bmu_k$ absorb the response-level baseline terms, so no separate unpenalized constant is included in $\bmX_i$ for this run.

\subsection{Model fitting and WAIC-based selection}\label{sec:dv-selection-0506}

We fit candidate variational models over $K\in\{1,2,3\}$ and $\rmax\in\{1,2\}$, using the parsimonious WAIC selection rule and a minimum-cluster safeguard.
Table~\ref{tab:dv_waic_grid_0506} reports the WAIC grid.
The selected specification is $\hat K=2$ and $\hat\rmax=2$, with $\WAIC_{\mathrm{VI}}=31756.88$.
The next best model is the single-cluster rank-two model, with $\Delta\WAIC=1026.63$, indicating a substantial improvement from allowing latent heterogeneity.

\begin{table}[t]
\centering
\caption{DoctorVisits: $\WAIC_{\mathrm{VI}}$ grid. The selected model is chosen by the parsimonious WAIC rule with a minimum-mixture safeguard.}
\label{tab:dv_waic_grid_0506}
\resizebox{\linewidth}{!}{%
\begin{tabular}{rrrrrrr}
\toprule
$K$ & $\rmax$ & WAIC & lppd & $p_{\mathrm{WAIC}}$ & MinProp & $\Delta$WAIC \\
\midrule
\textbf{2} & \textbf{2} & \textbf{31756.88} & \textbf{-15377.64} & \textbf{500.80} & \textbf{0.106} & \textbf{0.00} \\
1 & 2 & 32783.51 & -16391.76 & 0.00 & 1.000 & 1026.63 \\
3 & 2 & 33421.24 & -13735.37 & 2975.25 & 0.071 & 1664.36 \\
1 & 1 & 34092.91 & -17046.45 & 0.00 & 1.000 & 2336.03 \\
2 & 1 & 36204.26 & -17094.11 & 1008.02 & 0.170 & 4447.38 \\
3 & 1 & 39976.78 & -14391.61 & 5596.78 & 0.069 & 8219.90 \\
\bottomrule
\end{tabular}
}%
\end{table}

Figures~\ref{fig:dv_delta_waic_0506} and \ref{fig:dv_gaussian_mse_0506} visualize the WAIC differences and Gaussian predictive error across representative DoctorVisits specifications.
The selected latent-cluster model is the WAIC winner and has substantially lower Gaussian predictive error than the single-cluster and rank-one alternatives.

\begin{figure}[t]
\centering
\includeReadableFigure[width=0.92\linewidth]{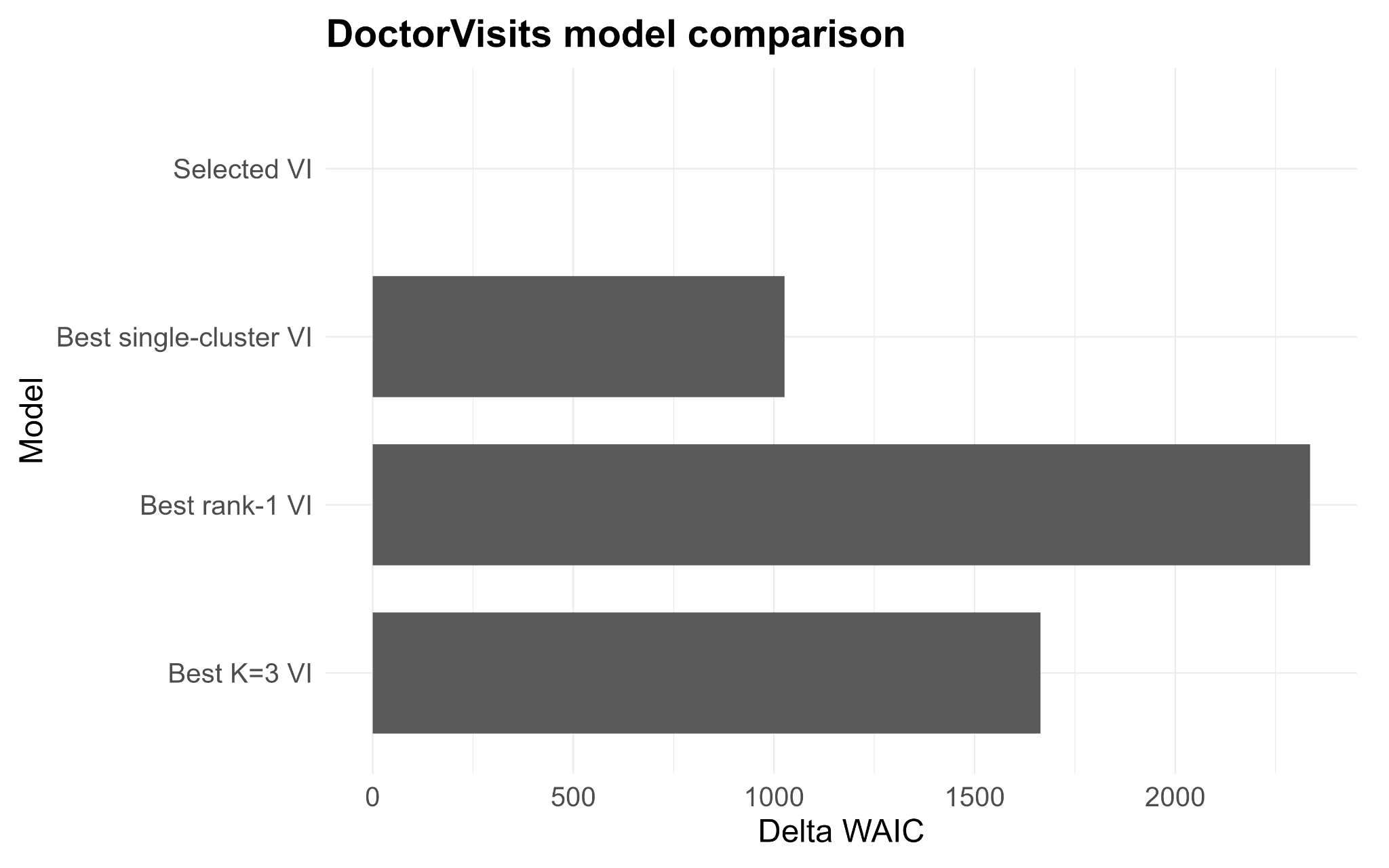}
\caption{DoctorVisits model comparison: $\Delta$WAIC relative to the selected VI model.}
\label{fig:dv_delta_waic_0506}
\end{figure}

\begin{figure}[t]
\centering
\includeReadableFigure[width=0.92\linewidth]{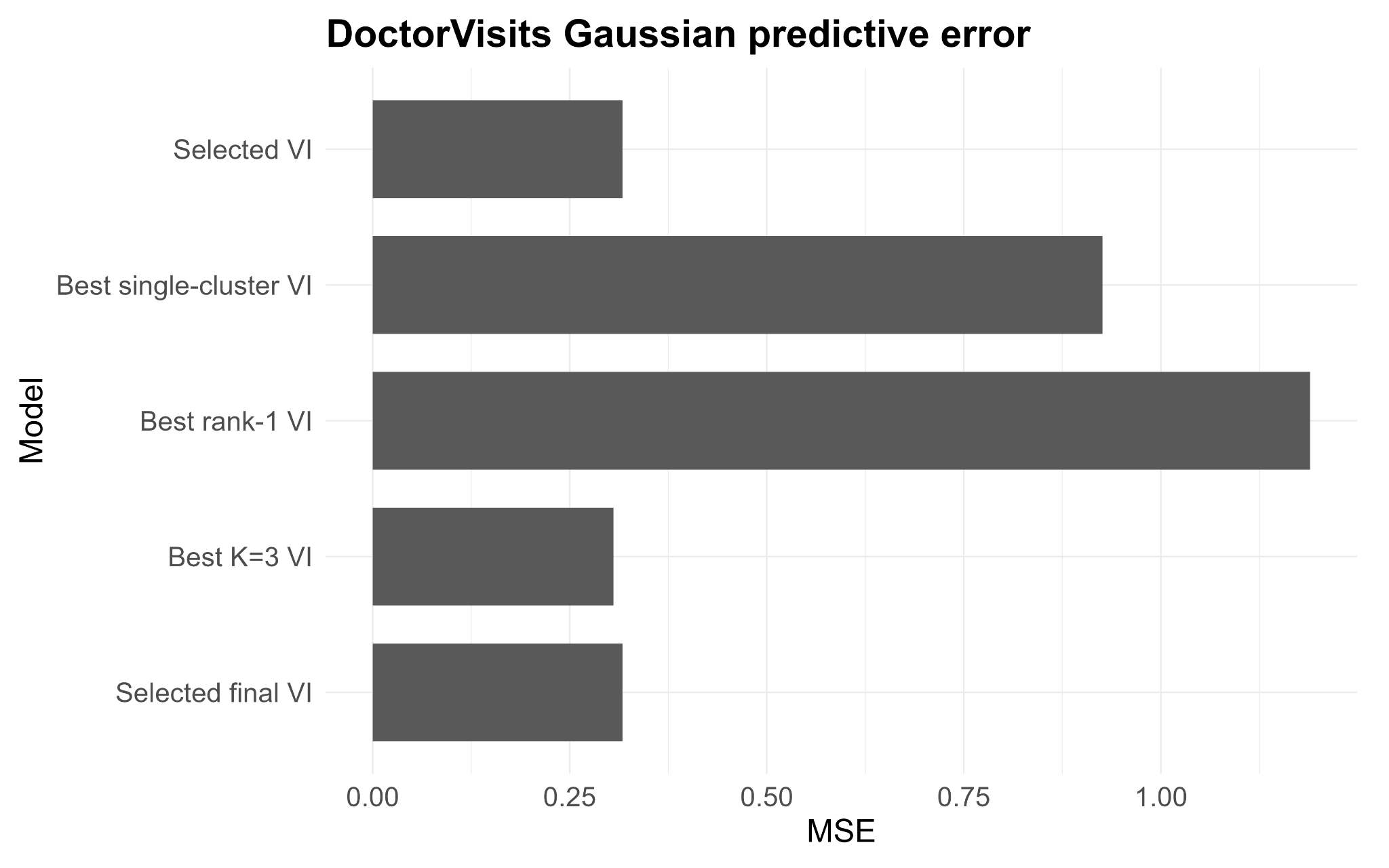}
\caption{DoctorVisits Gaussian predictive mean squared error for representative specifications.}
\label{fig:dv_gaussian_mse_0506}
\end{figure}

\subsection{Latent clusters and embedding}\label{sec:dv-clusters-0506}

The selected model identifies two latent groups.
Cluster~1 contains approximately $89.5\%$ of the sample and has lower mean health score and lower mean visit count.
Cluster~2 contains approximately $10.5\%$ of the sample and has worse health, lower private-insurance prevalence, and higher mean utilization.
Mean posterior membership certainty remains high in both clusters, with cluster-specific mean maximum probabilities $0.989$ and $0.919$.

\begin{table}[t]
\centering
\caption{DoctorVisits: cluster summary under the selected model ($\hat K=2$, $\hat\rmax=2$). Health is on the robustly scaled Gaussian scale, Private is the empirical private-insurance proportion, Visits is the empirical mean doctor-visit count, and MaxProb is the mean maximum posterior membership probability.}
\label{tab:dv_cluster_summary_0506}
\begin{tabular}{rrrrrrr}
\toprule
Cluster & $n$ & Weight & Mean Health & Mean Private & Mean Visits & Mean MaxProb \\
\midrule
1 & 4644 & 0.894 & 0.315 & 0.465 & 0.265 & 0.989 \\
2 &  546 & 0.106 & 3.107 & 0.255 & 0.612 & 0.919 \\
\bottomrule
\end{tabular}
\end{table}

\begin{figure}[t]
\centering
\includeReadableFigure[width=0.92\linewidth]{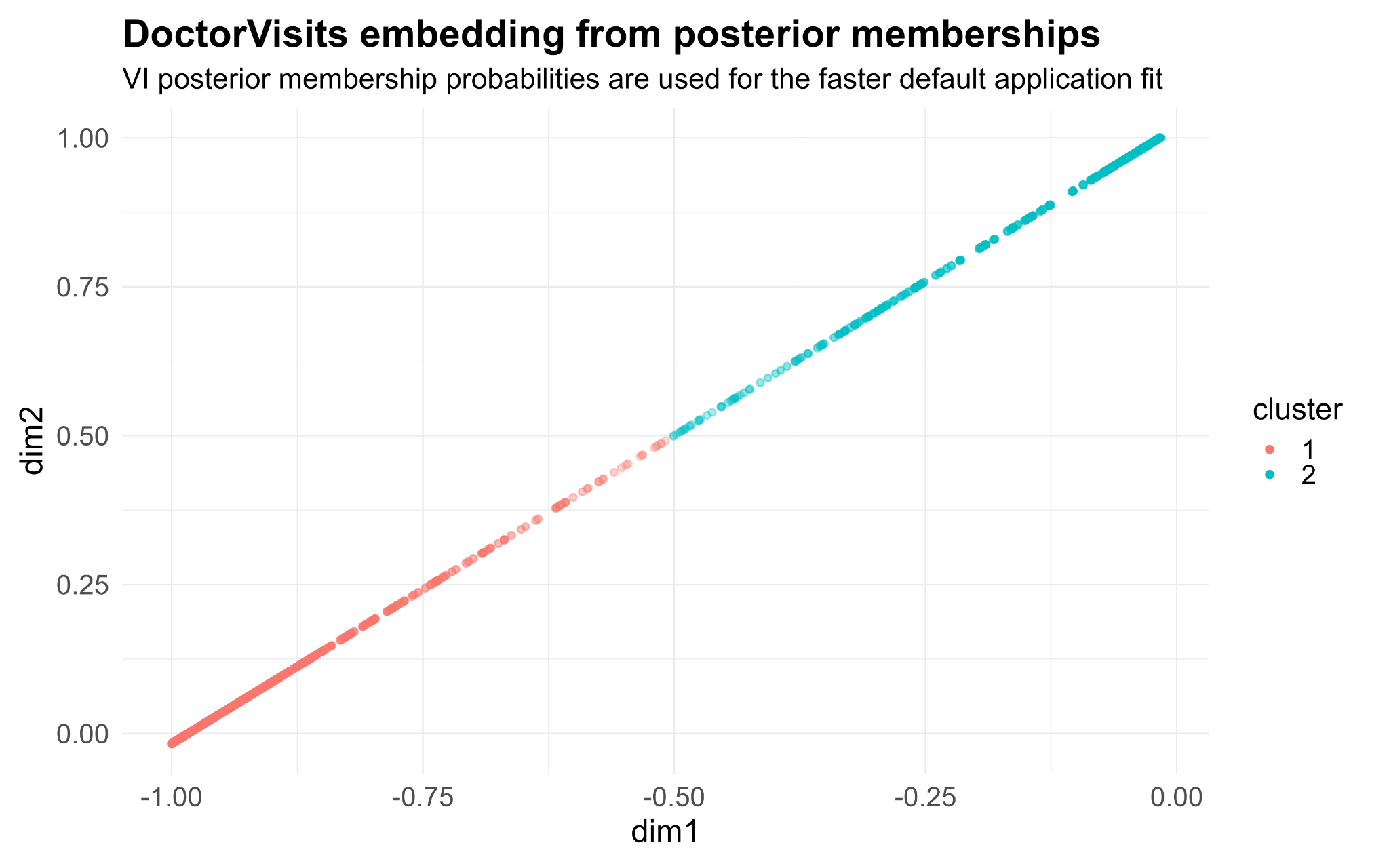}
\caption{DoctorVisits embedding from posterior membership probabilities under the selected variational model. Points are colored by maximum-posterior cluster assignment.}
\label{fig:dv_embedding_0506}
\end{figure}

\subsection{Predictive summaries}\label{sec:dv-predictive-0506}

Table~\ref{tab:dv_predictive_0506} reports predictive summaries for the selected model.
The Gaussian health response has MSE $0.317$ on the robustly scaled scale.
For the private-insurance response, the Brier score is $0.267$ and the hard misclassification rate is $0.357$ under the fitted probability thresholding rule.
For the doctor-visit count, the negative binomial predictive MSE is $0.687$.

\begin{table}[t]
\centering
\caption{DoctorVisits: predictive summaries for the selected model.}
\label{tab:dv_predictive_0506}
\begin{tabular}{rrrrr}
\toprule
Gaussian MSE & Bernoulli Brier & Bernoulli accuracy & Bernoulli misclassification & NB MSE \\
\midrule
0.317 & 0.267 & 0.643 & 0.357 & 0.687 \\
\bottomrule
\end{tabular}
\end{table}

The DoctorVisits analysis is used primarily for model comparison and clustering.
Accordingly, coefficient intervals are not reported from this variational-only run.
For final inferential reporting of cluster-specific regression effects, the Gibbs sampler in Section~\ref{sec:gibbs} can be initialized at the selected variational fit and used to produce posterior intervals on the identifiable coefficient matrices $\bB_k$.

\section{Application to Florida County COVID-19 Surveillance}\label{sec:florida}

We next analyze county-level COVID-19 surveillance outcomes in Florida using the bivariate Gaussian plus negative-binomial specialization of the proposed model.
This example differs from the individual-level \texttt{DoctorVisits} analysis in sample size, scale, and interpretation.
There are only 67 observational units, county exposures vary substantially, and the available county-level covariates describe overlapping features of testing intensity, age composition, race and ethnicity, vaccination, policy environment, and geographic density.
The analysis therefore emphasizes exposure-adjusted count modeling, low-rank regularization under strong collinearity, and label-invariant spatial reporting through county maps.

\subsection{Data assembly, outcomes, and preprocessing}\label{sec:fl-data}

The analysis combines county-level COVID-19 hospitalization and mortality summaries with county policy summaries and vaccination information.
County names are standardized before merging.
When an external population table is available, it is used as the exposure denominator; otherwise the analysis uses the best available positive exposure proxy from the supplied surveillance files.
County geometries from \texttt{tigris} are used only to display fitted summaries and latent clusters; the model itself does not include a spatial prior or adjacency penalty.

For county $i$, the two outcomes are
\[
Y^{(G)}_i=100\,\frac{\text{hospitalizations}_i}{E_i},
\;
Y^{(NB)}_i=\text{death count}_i,
\]
where $E_i$ is the county exposure used in the count offset.
The hospitalization burden is robustly centered and scaled and is modeled with a Gaussian likelihood.
The death count is modeled with a negative-binomial likelihood and offset $\log E_i$.
The active response sets are therefore $\JG=\{1\}$ and $\JNB=\{2\}$.
Predictors include log cases, log tests, test positivity, log exposure, age-specific case shares, race and ethnicity case shares, a vaccination-dose proxy, a density proxy, and policy-intensity indicators.
All predictors are median-imputed and standardized.
Table~\ref{tab:fl_data_model_actual} summarizes the specification.

\begin{table}[t]
\centering
\caption{Florida COVID-19 surveillance analysis specification. The offset in the negative-binomial arm separates county exposure from residual mortality burden.}
\label{tab:fl_data_model_actual}
\small
\resizebox{\linewidth}{!}{%
\begin{tabular}{ll}
\toprule
Item & Specification \\
\midrule
Observational units & 67 Florida counties \\
Gaussian response & Hospitalization burden per exposure unit, robustly centered and scaled \\
Negative-binomial response & County death count \\
Count offset & $\log(E_i)$, using population or the recorded exposure proxy \\
Predictors & log cases, log tests, positivity, log exposure, age/race/ethnicity shares \\
Additional predictors & vaccination proxy, density proxy, strict/medium/total policy indicators \\
Preprocessing & county-name harmonization, median imputation, predictor standardization \\
Map geometry & Florida county boundaries; used only for visualization \\
\bottomrule
\end{tabular}
}%
\end{table}

\subsection{Model fitted to Florida counties}\label{sec:fl-model}

For county $i$ in latent cluster $k$, the fitted bivariate model uses
\begin{equation}\label{eq:florida_linear_predictors}
\eta_{i1k}=\mu_{k1}+\bmX_i^\top\bB_{k,\cdot 1},
\;
\eta_{i2k}=\mu_{k2}+\bmX_i^\top\bB_{k,\cdot 2}+\log E_i.
\end{equation}
The first linear predictor enters the Gaussian hospitalization model after robust scaling.
The second enters the negative-binomial death-count model.
The offset prevents raw county size or reporting volume from being absorbed into an unrestricted coefficient.
Each component-specific coefficient matrix is factorized as $\bB_k=\bL_k\bR_k^\top$.

Candidate models are fit over
\[
K\in\{1,2,3\},\;
\rmax\in\{1,2,3\},\;
r_{NB}\in\{5,10\},
\]
with additional grid values for the Dirichlet concentration and ridge stabilization constants.
The selected working specification is a three-cluster, rank-two model,
\[
\hat K=3,
\;
\hat\rmax=2.
\]
This choice retains enough heterogeneity to separate county-level surveillance profiles while avoiding an unstable unrestricted mixture in a 67-county sample.

\subsection{Florida posterior prediction, clusters, and maps}\label{sec:fl-results}

Figure~\ref{fig:fl_covid_maps} compares observed and posterior predictive summaries for the two response coordinates.
The cluster map is intentionally not included in this panel; it is displayed separately in Figure~\ref{fig:fl_covid_cluster_map}.
The posterior predictive hospitalization maps retain the statewide burden gradient while smoothing isolated county extremes.
The death-count maps retain the ordering of high-count counties but reduce the influence of small-county spikes through the negative-binomial variance and the exposure offset.

\begin{figure}[t]
\centering
\includeReadableFigure[width=0.98\linewidth]{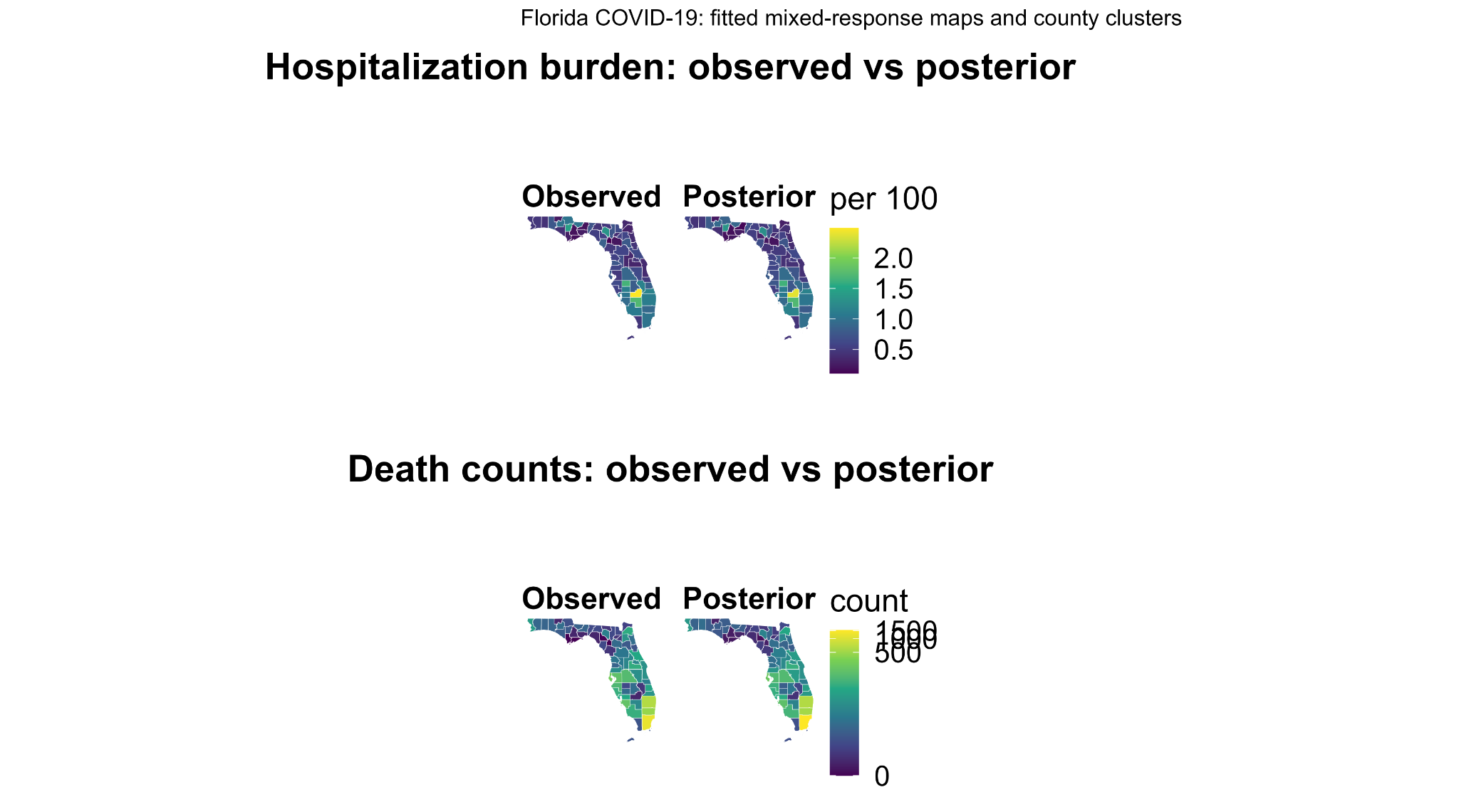}
\caption{Florida county COVID-19 response maps. The first row compares observed and posterior predictive hospitalization burden; the second row compares observed and posterior predictive death counts. The latent county partition is shown separately in Figure~\ref{fig:fl_covid_cluster_map}.}
\label{fig:fl_covid_maps}
\end{figure}

\begin{figure}[t]
\centering
\includeReadableFigure[width=0.82\linewidth]{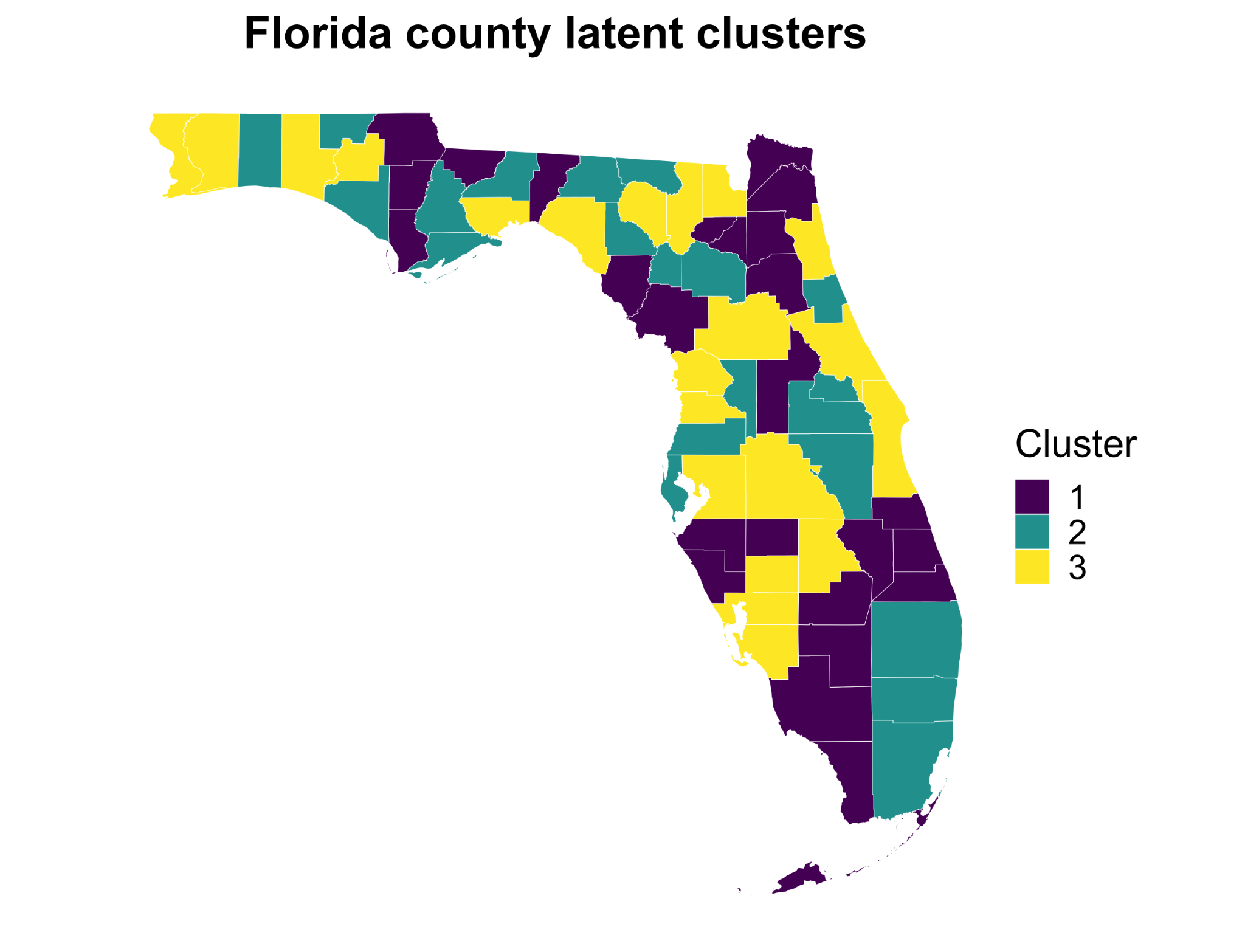}
\caption{Florida county latent-cluster map. The partition is inferred jointly from the Gaussian hospitalization arm and the negative-binomial death-count arm.}
\label{fig:fl_covid_cluster_map}
\end{figure}

Figure~\ref{fig:fl_nb_diagnostic} provides a response-specific diagnostic for the negative-binomial arm on the $\log(1+y)$ scale.
The fitted values follow the observed ordering over most of the range, while extreme counties are shrunk toward the fitted component structure.
The county partition in Figure~\ref{fig:fl_covid_cluster_map} is spatially interpretable even though spatial adjacency is not used in the likelihood.
Thus the map should be read as a diagnostic visualization of joint surveillance profiles, not as a spatially smoothed estimator.

\begin{figure}[t]
\centering
\includeReadableFigure[width=0.86\linewidth]{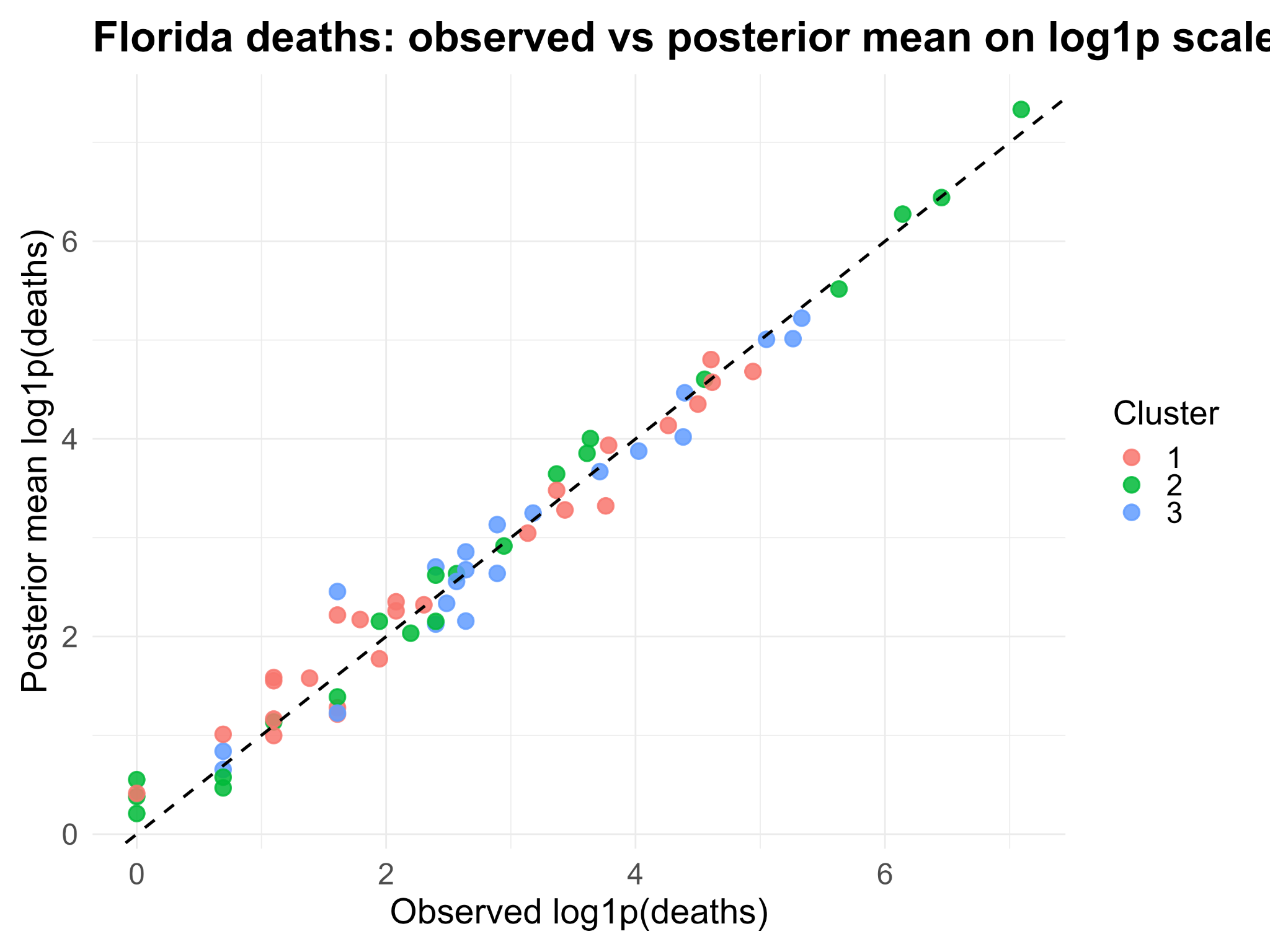}
\caption{Florida death-count diagnostic. Points compare observed death counts and posterior mean death counts on the $\log(1+y)$ scale and are colored by fitted county cluster.}
\label{fig:fl_nb_diagnostic}
\end{figure}

\subsection{Florida coefficient interpretation}\label{sec:fl-coef}

Table~\ref{tab:fl_coef} gives representative coefficient summaries from a Laplace approximation centered at the variational fit.
The coefficients are cluster-specific associations on the working scales and should not be interpreted causally.
The count arm emphasizes variables that summarize infection burden and exposure-adjusted county structure, while the Gaussian arm captures residual hospitalization burden after robust standardization.
A predictor may therefore have different magnitudes, or even different signs, across the two response arms.
This is expected in a mixed-response surveillance model: absolute mortality burden and hospitalization burden per exposure are related but not interchangeable endpoints.

\begin{table}[t]
\centering
\caption{Florida COVID-19 application. Posterior medians and approximate 95\% credible intervals for selected regression coefficients under the selected model with $K=3$ and $\rmax=2$. Intervals excluding zero are shown in bold.}
\label{tab:fl_coef}
\scriptsize
\begin{tabular}{lllrrr}
\toprule
Cluster & Response & Predictor & Median & 95\% CI lower & 95\% CI upper \\
\midrule
1 & Deaths (NB, log mean) & \texttt{LogDeaths} & \textbf{3.42} & \textbf{2.97} & \textbf{3.83} \\
1 & Hosp. rate (Gaussian) & \texttt{LogHosp} & \textbf{-3.99} & \textbf{-5.64} & \textbf{-2.41} \\
2 & Deaths (NB, log mean) & \texttt{LogDeaths} & \textbf{3.41} & \textbf{3.22} & \textbf{3.60} \\
3 & Deaths (NB, log mean) & Intercept & \textbf{20.30} & \textbf{16.30} & \textbf{23.80} \\
3 & Deaths (NB, log mean) & \texttt{Pop} & \textbf{-0.91} & \textbf{-1.13} & \textbf{-0.69} \\
3 & Deaths (NB, log mean) & \texttt{LogDeaths} & \textbf{0.72} & \textbf{0.48} & \textbf{0.96} \\
3 & Hosp. rate (Gaussian) & Intercept & \textbf{17.80} & \textbf{14.10} & \textbf{21.60} \\
3 & Hosp. rate (Gaussian) & \texttt{LogHosp} & \textbf{-22.60} & \textbf{-25.00} & \textbf{-20.10} \\
3 & Hosp. rate (Gaussian) & \texttt{Pop} & \textbf{1.57} & \textbf{1.22} & \textbf{1.94} \\
\bottomrule
\end{tabular}
\end{table}

The Florida analysis illustrates why the proposed mixed-response formulation is useful for small-area surveillance.
A Gaussian-only analysis would ignore overdispersed death counts, a count-only analysis would discard hospitalization burden, and separate regressions would not yield a single label-invariant county partition.
The selected low-rank mixture combines the two response types while stabilizing estimation in a small, highly collinear county-level design.

\section{Application to U.S. States Influenza Surveillance (CDC FluView)}\label{sec:flu}

The third application analyzes state-level influenza surveillance summaries using the same bivariate Gaussian plus negative-binomial formulation.
The scientific distinction between the two response coordinates is important.
Weighted ILI is a relative activity measure, whereas the total number of ILI patient visits is an absolute burden measure affected by state size and reporting intensity.
A state can therefore have many ILI patient visits without having the highest weighted ILI percentage, and conversely a smaller state can have high relative activity with moderate absolute counts.

\subsection{Data assembly, outcomes, and preprocessing}\label{sec:flu-data}

The analysis uses a completed-season state-level summary when one is supplied.
If weekly ILINet-style observations are supplied, they are aggregated by state to obtain the season-level mean weighted ILI percentage and the total number of ILI patient visits.
State geometries from \texttt{tigris} are used only for mapping.
The Gaussian response is $\log(1+\mathrm{weighted\ ILI})$, robustly centered and scaled.
The negative-binomial response is the season-level total number of ILI patient visits, rounded to counts in thousands.
The count arm uses a log exposure offset based on state size or reporting volume when available.
Predictors are numeric state-level surveillance, demographic, and vulnerability covariates after excluding response and exposure variables; they are median-imputed and standardized.

\begin{table}[t]
\centering
\caption{U.S. influenza surveillance analysis specification. The two endpoints represent relative activity and absolute patient burden.}
\label{tab:flu_data_model_actual}
\small
\resizebox{\linewidth}{!}{%
\begin{tabular}{ll}
\toprule
Item & Specification \\
\midrule
Observational units & U.S. states in the seasonal surveillance summary \\
Gaussian response & $\log(1+\mathrm{weighted\ ILI})$, robustly centered and scaled \\
Negative-binomial response & Season-level ILI patient visits, rounded to thousands \\
Count offset & log state-size or reporting-size exposure \\
Predictors & numeric state-level surveillance, demographic, and vulnerability covariates \\
Preprocessing & seasonal aggregation if needed, median imputation, predictor standardization \\
Map geometry & U.S. state boundaries; used only for visualization \\
\bottomrule
\end{tabular}
}%
\end{table}

\subsection{Model fitted to U.S. influenza summaries}\label{sec:flu-model}

For state $i$ and latent cluster $k$, the fitted linear predictors are
\begin{equation}\label{eq:flu_linear_predictors}
\eta_{i1k}=\mu_{k1}+\bmX_i^\top\bB_{k,\cdot 1},
\;
\eta_{i2k}=\mu_{k2}+\bmX_i^\top\bB_{k,\cdot 2}+\log E_i,
\end{equation}
where $E_i$ is the exposure used in the count arm.
Candidate models are fit over
\[
K\in\{1,2,3\},\;
\rmax\in\{1,2,3\},\;
r_{NB}\in\{5,10\},
\]
with the same ridge and Dirichlet-prior grid used in the Florida application.
The reported bivariate influenza fit selects a low-rank model with a small number of state clusters; the rank-two specification is retained for interpretation because it separates broad burden intensity from differences between relative activity and absolute patient volume.
Table~\ref{tab:waic_us_bivar} gives representative rank comparisons for the two-component bivariate fit.

\begin{table}[t]
\centering
\caption{U.S. influenza application. Representative $\mathrm{WAIC}_{\mathrm{VI}}$ values over factor ranks for the bivariate Gaussian plus negative-binomial model with $K=2$ mixture components.}
\label{tab:waic_us_bivar}
\begin{tabular}{lccc}
\toprule
$K$ & $\rmax$ & $\mathrm{WAIC}_{\mathrm{VI}}$ & SE($\mathrm{WAIC}_{\mathrm{VI}}$) \\
\midrule
2 & 1 & 231.0 & 16.3 \\
2 & 2 & 165.0 & 8.82 \\
2 & 3 & 166.0 & 8.62 \\
\bottomrule
\end{tabular}
\end{table}

\subsection{U.S. influenza posterior prediction, clusters, and maps}\label{sec:flu-results}

Figure~\ref{fig:us_flu_bivar_maps} compares observed and posterior predictive maps for the two influenza endpoints.
The state-cluster map is deliberately excluded from this response panel and is shown separately in Figure~\ref{fig:us_flu_cluster_map}.
The posterior maps preserve broad regional patterns in both weighted ILI and patient-visit burden while smoothing the largest state totals.
This smoothing is appropriate for an exposure-adjusted negative-binomial mixture because large absolute counts can arise from population size, reporting intensity, or genuine influenza activity.

\begin{figure}[t]
\centering
\includeReadableFigure[width=0.98\linewidth]{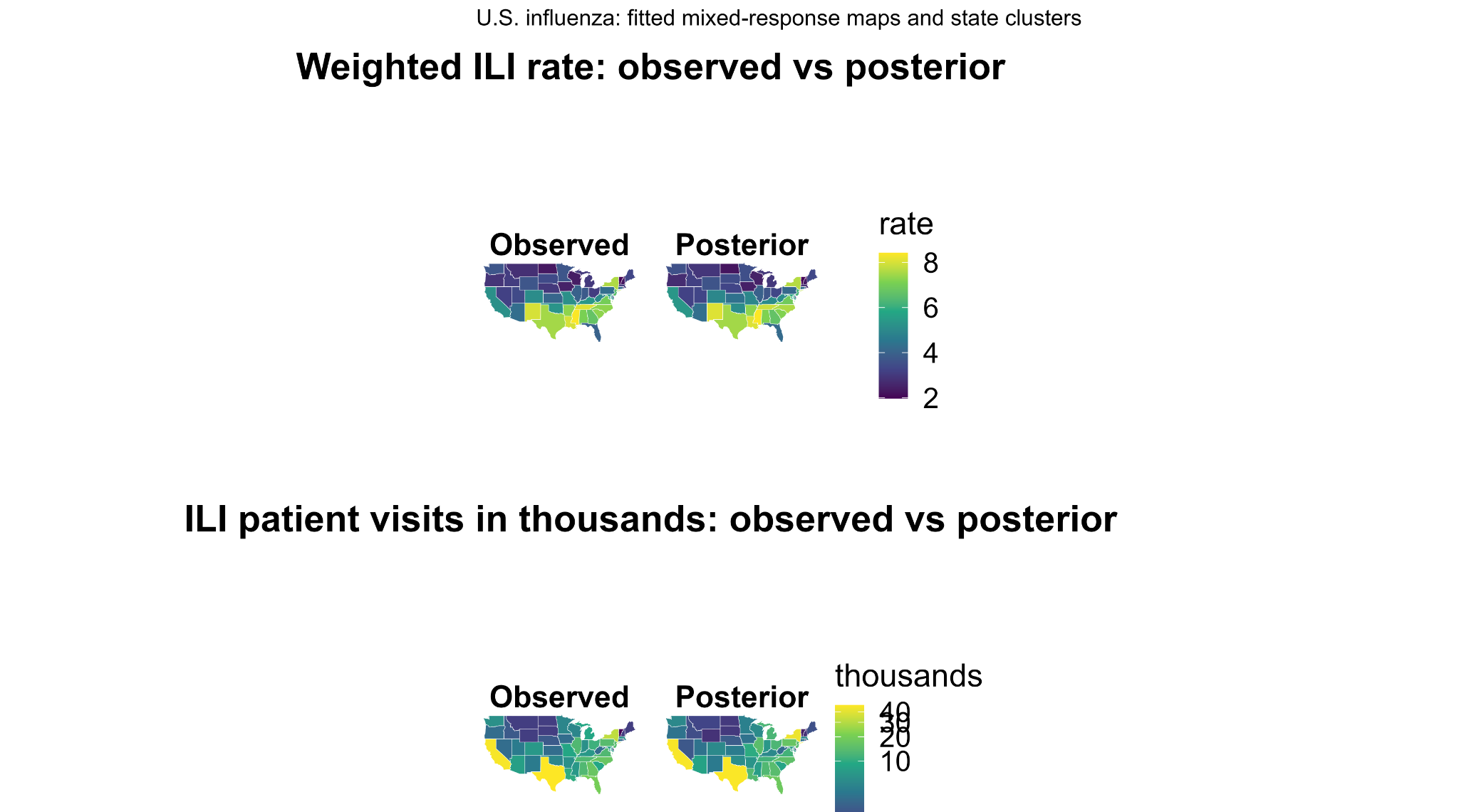}
\caption{U.S. influenza response maps. The first row compares observed and posterior predictive weighted ILI rates; the second row compares observed and posterior predictive ILI patient counts in thousands. The latent state partition is shown separately in Figure~\ref{fig:us_flu_cluster_map}.}
\label{fig:us_flu_bivar_maps}
\end{figure}

\begin{figure}[t]
\centering
\includeReadableFigure[width=0.82\linewidth]{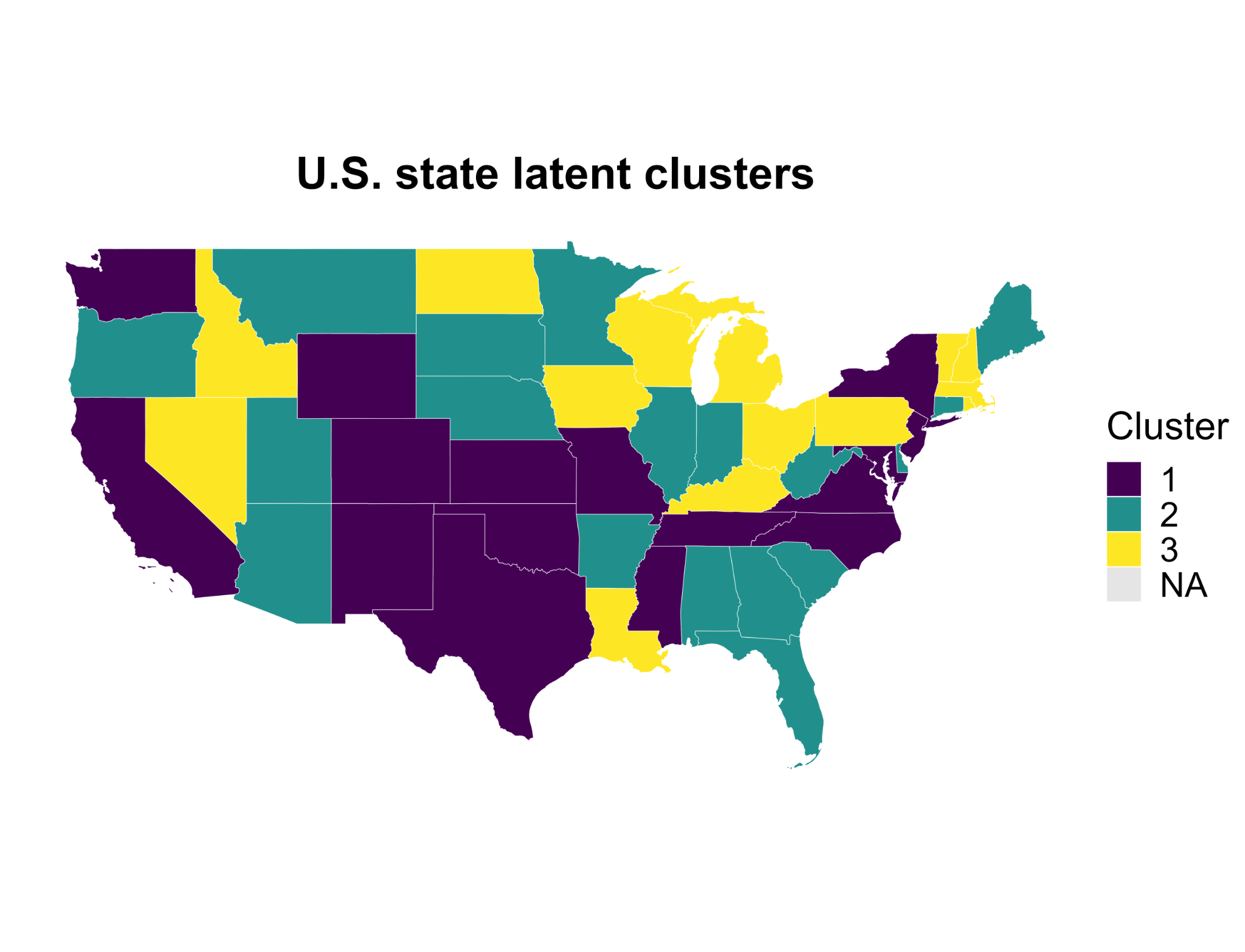}
\caption{U.S. state latent-cluster map. The partition is inferred jointly from transformed weighted ILI activity and negative-binomial ILI patient-count burden.}
\label{fig:us_flu_cluster_map}
\end{figure}

Figure~\ref{fig:us_nb_diagnostic} gives a log-scale diagnostic for the patient-count arm.
The fitted posterior means track the observed ordering over most states, with shrinkage at the extremes due to the negative-binomial variance and component-level sharing.
The clusters in Figure~\ref{fig:us_flu_cluster_map} therefore summarize joint rate-count behavior rather than a single high-rate or high-count endpoint.

\begin{figure}[t]
\centering
\includeReadableFigure[width=0.86\linewidth]{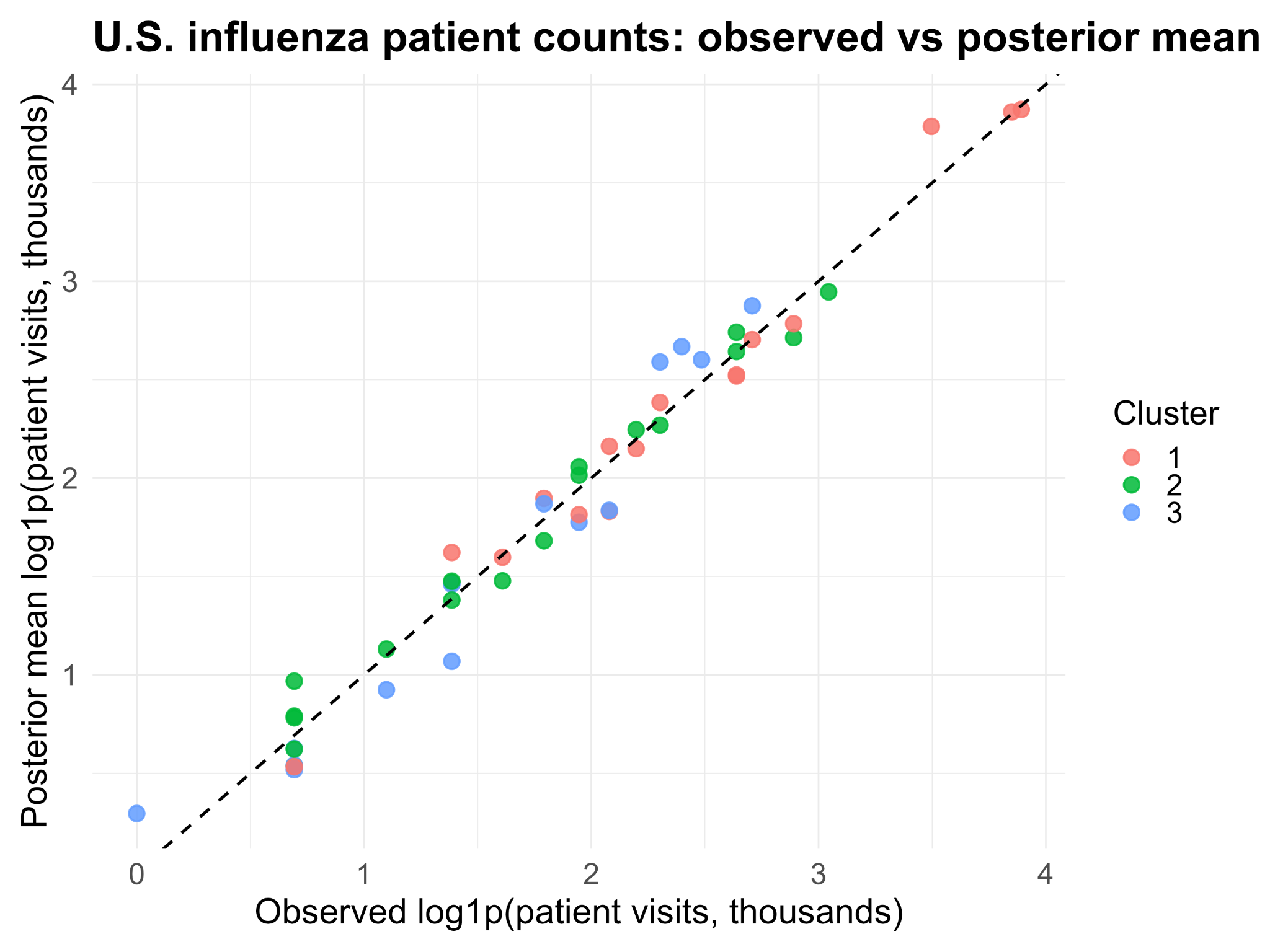}
\caption{U.S. influenza patient-count diagnostic. Points compare observed and posterior mean ILI patient visits in thousands on the $\log(1+y)$ scale and are colored by fitted state cluster.}
\label{fig:us_nb_diagnostic}
\end{figure}

\subsection{U.S. influenza coefficient interpretation}\label{sec:flu-coef}

Table~\ref{tab:us_coef} reports representative coefficient summaries.
Across both clusters, total ILI patient volume is positively associated with the negative-binomial log mean of patient counts.
The same variable is negatively associated with the transformed weighted ILI rate after the two endpoints are modeled jointly.
This is consistent with the surveillance interpretation: large absolute patient volume may reflect state size or reporting participation, whereas weighted ILI is a relative activity measure.
The low-rank mixed-response model separates these two sources of information rather than collapsing them into a single scalar risk score.

\begin{table}[t]
\centering
\caption{U.S. influenza model. Posterior medians and approximate 95\% credible intervals for selected coefficients under the selected bivariate model with $K=2$ and $\rmax=2$. Intervals excluding zero are shown in bold.}
\label{tab:us_coef}
\scriptsize
\begin{tabular}{lllrrr}
\toprule
Cluster & Response & Predictor & Median & 95\% CI lower & 95\% CI upper \\
\midrule
1 & log1p(ILI) (scaled) & Intercept & \textbf{0.104} & \textbf{0.0252} & \textbf{0.193} \\
1 & log1p(ILI) (scaled) & mean wILI & -0.0119 & -1.09 & 1.02 \\
1 & log1p(ILI) (scaled) & total ILI pts & \textbf{-0.636} & \textbf{-0.877} & \textbf{-0.415} \\
1 & Patients k (log mu) & Intercept & 0.0638 & -0.0794 & 0.199 \\
1 & Patients k (log mu) & mean wILI & 0.0912 & -2.12 & 2.42 \\
1 & Patients k (log mu) & total ILI pts & \textbf{1.46} & \textbf{1.12} & \textbf{1.91} \\
2 & log1p(ILI) (scaled) & Intercept & 0.116 & -0.00140 & 0.247 \\
2 & log1p(ILI) (scaled) & mean wILI & -0.0139 & -1.08 & 1.13 \\
2 & log1p(ILI) (scaled) & total ILI pts & \textbf{-0.414} & \textbf{-0.677} & \textbf{-0.180} \\
2 & Patients k (log mu) & Intercept & 0.0126 & -0.0904 & 0.105 \\
2 & Patients k (log mu) & mean wILI & 0.0132 & -1.17 & 1.21 \\
2 & Patients k (log mu) & total ILI pts & \textbf{0.490} & \textbf{0.345} & \textbf{0.660} \\
\bottomrule
\end{tabular}
\end{table}

The influenza application reinforces the main methodological point of the paper.
A Gaussian model alone would ignore overdispersed patient counts and exposure differences, whereas a count-only analysis would lose the rate information used in surveillance comparisons.
The selected bivariate low-rank mixture captures both views of influenza burden and reports a single label-invariant state partition.

\section{Discussion and Conclusion}\label{sec:discussion}

We proposed a Bayesian low-rank latent-cluster regression framework for multivariate mixed outcomes that combines cluster-specific low-rank coefficient matrices, P\'olya--Gamma augmentation for Bernoulli and negative binomial likelihoods, and label-invariant clustering through a posterior similarity matrix.
The model is designed for settings in which the dominant structure is not only sparsity in a high-dimensional coefficient vector, but also shared low-dimensional outcome structure and heterogeneity across observational units.

The simulation study considered four response-family regimes with 100 replications per regime: all Gaussian, all Bernoulli, all negative binomial, and mixed Gaussian--Bernoulli--negative binomial.
The proposed BMLC-VI-PSM procedure recovered the true number of clusters in all scenarios and recovered the intended rank in all but a small fraction of mixed-response replications.
Its clustering performance was essentially tied with \texttt{rrMix} in the all-Gaussian benchmark, strongest among the reported methods in the all-count and mixed-response settings, and slightly below response-only $K$-means in the all-binary setting.
These results support the intended use of the model for genuinely mixed multivariate outcomes while also identifying a setting where a simple response-space baseline can be competitive.

The three real-data examples cover complementary use cases.
In the \texttt{DoctorVisits} analysis, the selected model used two latent clusters and rank two, substantially improved WAIC relative to single-cluster and rank-one alternatives, and separated a large lower-utilization group from a smaller group with worse health, lower private-insurance prevalence, and higher visit counts.
In the Florida COVID-19 analysis, the model combines a Gaussian hospitalization-burden endpoint with an exposure-offset negative-binomial death-count endpoint and yields spatially interpretable county profiles without imposing a spatial prior.
In the U.S. influenza analysis, the model fits transformed weighted ILI rates jointly with exposure-adjusted patient counts and separates relative activity from absolute patient burden.
The variational algorithm provides a fast model-selection and initialization procedure; full posterior coefficient intervals can be obtained by running the Gibbs sampler initialized at the selected variational fit.

Several limitations remain.
First, variational inference is used as the primary screening and application engine, and its uncertainty estimates may be too narrow.
Second, the current likelihood assumes conditional independence of response coordinates given the cluster label and linear predictors; residual dependence extensions may improve fit in applications with strong unexplained cross-response association.
Third, the mixture weights are independent across units; spatial, temporal, or network-structured priors on component allocation would be natural extensions for surveillance applications.

Overall, the proposed latent-cluster low-rank framework provides a flexible approach to mixed-outcome regression, combining dimension reduction, latent heterogeneity, scalable computation, and stable label-invariant clustering.

\section*{Acknowledgments}
Hsin-Hsiung Huang's work was partially supported by the National Science Foundation under grants DMS-1924792 and DMS-2318925.

\appendix

\section{Proofs for Section~\ref{sec:theory}}\label{app:proofs}

We provide detailed proofs for Theorems~\ref{thm:post_contr_H} through \ref{thm:psm_meanshift_consistency}.
Throughout, $C,c,c'$ denote finite positive constants whose values may change from line to line.

\subsection{Step 0: The three likelihoods share quadratic curvature in the linear predictor on compacts}

For each response coordinate $j$, write the log-likelihood as $\ell_j(y;\eta,\psi_j)$.
Assumption (B4) implies that on any compact interval $I\subset\R$, the negative second derivative $-\partial_\eta^2 \ell_j$ is bounded between positive constants, in expectation under the true model.
This yields local quadratic control of Kullback--Leibler divergence in the linear predictor $\eta$.

\begin{lemma}[Quadratic KL control in $\eta$ on compacts]\label{lem:kl_eta}
Fix a response index $j$ and suppose $\eta,\eta_0\in I$, where $I\subset\R$ is compact.
Then there exist constants $0<C_1\le C_2<\infty$, depending on $I$ and the response family, such that
\[
C_1(\eta-\eta_0)^2
\le
\KL\bigl(f_j(\cdot;\eta_0,\psi_0)\,\|\,f_j(\cdot;\eta,\psi_0)\bigr)
\le
C_2(\eta-\eta_0)^2.
\]
\end{lemma}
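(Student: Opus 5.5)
The key observation is that each of the three families is a one-parameter exponential family in a natural parameter that is a smooth, strictly monotone function of $\eta$. This lets us write the KL divergence as a Bregman divergence and control it by a second-order Taylor expansion with integral remainder. Concretely, write
\[
\ell_j(y;\eta,\psi_0)=T(y)\,\theta(\eta)-A(\theta(\eta))+c(y),
\]
where the pieces are as follows.
\begin{itemize}
\item Gaussian, $j\in\JG$: $\theta(\eta)=\eta/\sigma_{0j}^2$ and $T(y)=y$, with $A$ quadratic.
\item Bernoulli, $j\in\JB$: $\theta(\eta)=\eta$ and $A(\eta)=\log(1+e^\eta)$.
\item Negative binomial, $j\in\JNB$, with $r_j=r_{0j}$ fixed: $\theta(\eta)=\log p=\eta-\log(1+e^\eta)$, which is strictly increasing in $\eta$, and $A=-r_{0j}\log(1-e^{\theta})$.
\end{itemize}
Define $g(\eta):=\KL(f_j(\cdot;\eta_0,\psi_0)\,\|\,f_j(\cdot;\eta,\psi_0))=\E_{\eta_0}[\ell_j(Y;\eta_0)-\ell_j(Y;\eta)]$. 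It is finite on $\R$, and by (B4) it is $C^2$; in fact it is analytic for these families. Moreover $g(\eta_0)=0$, and $g'(\eta_0)=-\E_{\eta_0}[\partial_\eta\ell_j(Y;\eta_0)]=0$ by the score identity.

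Taylor's theorem with integral remainder then gives
\[
g(\eta)=(\eta-\eta_0)^2\int_0^1(1-t)\,g''\bigl(\eta_0+t(\eta-\eta_0)\bigr)\,dt,
\]
and because $I$ may be taken to be an interval, which is convex, the segment from $\eta_0$ to $\eta$ stays in $I$. It therefore suffices to show that $0<c_I\le g''(u)\le C_I$ for all $u,\eta_0\in I$, after which we set $C_1=c_I/2$ and $C_2=C_I/2$. Differentiating under the expectation, which is justified because the score and Hessian are polynomial in $Y$ times smooth functions of $u$ and all moments of $Y$ under $\eta_0\in I$ are finite, we get
\[
g''(u)=-\E_{\eta_0}[\partial_\eta^2\ell_j(Y;u)]
=A''(\theta(u))\,\theta'(u)^2-\bigl(\E_{\eta_0}T(Y)-A'(\theta(u))\bigr)\theta''(u).
\]

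The main obstacle is that (B4) controls the Fisher information at the true point, while here we need the \emph{cross} expected curvature with $u\neq\eta_0$. In the Gaussian and Bernoulli cases $\theta''=0$, so $g''(u)$ is simply the Fisher information at $u$: either $\sigma_{0j}^{-2}$, or $e^u/(1+e^u)^2\in[c_I,1/4]$. Both bounds are immediate.

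For the negative binomial, $\theta''\neq0$, so I will do the computation in $\eta$ directly. Since $\ell_j=y\eta-(y+r)\log(1+e^\eta)+c(y)$, we have $-\partial_\eta^2\ell_j=(y+r)\,e^\eta/(1+e^\eta)^2$, which is nonnegative and linear in $y$. Hence
\[
g''(u)=\bigl(\E_{\eta_0}Y+r\bigr)\frac{e^u}{(1+e^u)^2}=r\,(1+e^{\eta_0})\,\frac{e^u}{(1+e^u)^2},
\]
using $\E_{\eta_0}Y=re^{\eta_0}$. On $I\times I$ this is continuous and strictly positive, so by compactness it is bounded above and below by positive constants depending only on $I$ and $r_{0j}$. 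The same direct approach also recovers the Gaussian and Bernoulli bounds, so I will present all three uniformly in this form, $g''(u)=\E_{\eta_0}[w(Y)]\,\kappa(u)$, where the weight is affine in $Y$ and the curvature factor is positive and continuous. A final remark records that the constants depend on $I$, on $\sigma_{0j}^2$ or $r_{0j}$, and on the family, as stated.
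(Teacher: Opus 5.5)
Your proposal is correct and follows essentially the same route as the paper: a second-order Taylor expansion of $g(\eta)=\KL(f_j(\cdot;\eta_0,\psi_0)\,\|\,f_j(\cdot;\eta,\psi_0))$ about $\eta_0$, with $g'(\eta_0)=0$ from the score identity and two-sided bounds on $g''$ over $I$, giving $C_1=c_I/2$ and $C_2=C_I/2$. The one difference works in your favor. The paper simply attributes the bound on $g''(u)=\E_{\eta_0}[-\partial_\eta^2\ell_j(Y;u)]$ to (B4), which literally controls only the Fisher information at matching parameter values. Your explicit computation, in particular $g''(u)=r_{0j}(1+e^{\eta_0})e^{u}/(1+e^{u})^2$ for the negative binomial, where the Hessian depends on $Y$, verifies this cross-curvature bound directly.
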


\begin{proof}
Consider $g(\eta)=-\E_{\eta_0}[\ell_j(Y;\eta,\psi_0)]$.
Then $\KL(f_{\eta_0}\|f_\eta)=g(\eta)-g(\eta_0)$ and $g'(\eta_0)=0$.
By Taylor's theorem,
\[
g(\eta)-g(\eta_0)=\frac12(\eta-\eta_0)^2 g''(\eta^\star)
\]
for some $\eta^\star$ between $\eta$ and $\eta_0$.
Now $g''(\eta)=\E_{\eta_0}[-\partial_\eta^2 \ell_j(Y;\eta,\psi_0)]$.
Assumption (B4) bounds $g''(\eta)$ between $c_I$ and $C_I$ uniformly for $\eta\in I$, hence the claim with $C_1=c_I/2$ and $C_2=C_I/2$.
\end{proof}

\begin{lemma}[Component-wise KL control]\label{lem:kl_component}
Fix a mixture component $k$ and $x$.
Suppose $\eta_k(x)$ and $\eta_{0k}(x)$ lie in a compact set $I^q$.
Then for some finite $C$,
\[
\KL\bigl(f_{0k}(\cdot\mid x)\,\|\, f_k(\cdot\mid x)\bigr)
\le
C\|\eta_k(x)-\eta_{0k}(x)\|_2^2.
\]
\end{lemma}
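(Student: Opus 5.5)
The plan is to use the product structure of the component density. Conditional on $Z=k$ and $\bmX=x$, the response coordinates are independent, so
\[
f_k(\bm y\mid x)=\prod_{j=1}^q f_j\bigl(y_j;\eta_{kj}(x),\psi_j\bigr),\qquad f_{0k}(\bm y\mid x)=\prod_{j=1}^q f_j\bigl(y_j;\eta_{0,kj}(x),\psi_{0j}\bigr).
\]
The first step is to invoke the standard chain rule for Kullback--Leibler divergence between product measures:
\[
\KL\bigl(f_{0k}(\cdot\mid x)\,\|\,f_k(\cdot\mid x)\bigr)=\sum_{j=1}^q \KL\bigl(f_j(\cdot;\eta_{0,kj}(x),\psi_{0j})\,\|\,f_j(\cdot;\eta_{kj}(x),\psi_{0j})\bigr).
\]
This follows because the log-ratio of the two densities is a sum over $j$, and under $f_{0k}$ each summand depends only on the $j$th marginal. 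The nuisance parameters are held at their true values $\psi_{0j}$, exactly as in Lemma~\ref{lem:kl_eta}. If one wanted $\psi_j\neq\psi_{0j}$, one would add a separate term that is quadratic in $\psi_j-\psi_{0j}$, but the statement as written concerns only the linear predictor.

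The second step applies Lemma~\ref{lem:kl_eta} to each coordinate. By hypothesis, $\eta_{kj}(x)$ and $\eta_{0,kj}(x)$ both lie in the compact interval $I$ (the $j$th factor of $I^q$). Hence for each $j$ there is a constant $C_{2,j}<\infty$, depending on $I$ and on the family of coordinate $j$ (Gaussian, Bernoulli-logit, or negative binomial), with the summand bounded by $C_{2,j}(\eta_{kj}(x)-\eta_{0,kj}(x))^2$. Since $q$ is finite and only three families appear, we can set $C=\max_j C_{2,j}$. Summing over $j$ then gives $C\|\eta_k(x)-\eta_{0k}(x)\|_2^2$.

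The only point requiring care is that Lemma~\ref{lem:kl_eta} presupposes the fixed nuisance parameters $\psi_{0j}$ (that is, $\sigma_{0j}^2$ and $r_{0j}$) for which (B4) gives the Fisher-information bounds. For the Gaussian channel, (B4) only posits $\sigma_{0j}^2\in(0,\infty)$, so I would verify this case directly: the KL divergence equals $(\eta-\eta_0)^2/(2\sigma_{0j}^2)$, which gives $C_{2,j}=1/(2\sigma_{0j}^2)$. For the Bernoulli and negative binomial channels, the second derivative of the expected negative log-likelihood is, respectively, $\logistic(\eta)(1-\logistic(\eta))$ and $(\E_{\eta_0} Y+r_{0j})\logistic(\eta)(1-\logistic(\eta))$. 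Both are continuous, hence bounded on $I$, which confirms the upper constant. I expect no real obstacle beyond this uniformity check, which is immediate because $I$ is compact and $q$ is fixed.
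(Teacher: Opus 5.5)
Your proposal is correct and follows the paper's proof: you factorize the component density across coordinates, use additivity of KL over product measures, and apply Lemma~\ref{lem:kl_eta} coordinatewise with $C=\max_j C_{2,j}$. Your direct check of the Gaussian, Bernoulli, and negative binomial curvatures is a sound refinement of the same argument. So is your remark that the nuisance parameters must be held at $\psi_{0j}$, which the statement leaves implicit.
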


\begin{proof}
Conditional on $Z=k$ the component density factorizes across $j$ by the model:
$f_k(\bm y\mid x)=\prod_{j=1}^q f_j(y_j;\eta_{kj}(x),\psi_j)$.
Thus Kullback--Leibler divergence adds across coordinates:
\[
\KL(f_{0k}\|f_k)=\sum_{j=1}^q \KL(f_{0k,j}\|f_{k,j}).
\]
Apply Lemma~\ref{lem:kl_eta} to each coordinate and sum.
\end{proof}

\subsection{Step 1: Lipschitz control of the linear predictor under bounded design}

\begin{lemma}[Lipschitz map $(\bmu,\bB)\mapsto \eta(x)$]\label{lem:eta_lip}
Under Assumption (B1), there exists $C_X<\infty$ such that $\|x\|_2\le C_X$ almost surely.
Hence for all $k$ and all $x$ in the support of $P_X$,
\[
\|\eta_k(x)-\eta_{0k}(x)\|_2
\le
\|\bmu_k-\bmu_{0k}\|_2 + C_X\|\bB_k-\bB_{0k}\|_F.
\]
\end{lemma}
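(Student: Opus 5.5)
The claim has two parts, and I will handle them in order. First, from the compact support in Assumption (B1) I get a uniform bound $C_X$ on $\|x\|_2$. Second, I combine that bound with the triangle inequality and the operator-norm/Frobenius-norm comparison.

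For the first part, (B1) says the support of $P_X$ is compact. A compact subset of $\R^p$ is bounded, so we may set $C_X:=\sup\{\|x\|_2: x\in\mathrm{supp}(P_X)\}<\infty$. Since $P_X(\mathrm{supp}(P_X))=1$, we get $\|\bmX\|_2\le C_X$ almost surely. The positive-definiteness part of (B1) is not needed here.

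For the second part, fix $k$ and $x$ in the support. Using the definition $\eta_k(x)=\bmu_k+\bB_k^\top x$, write
\[
\eta_k(x)-\eta_{0k}(x)=(\bmu_k-\bmu_{0k})+(\bB_k-\bB_{0k})^\top x .
\]
By the triangle inequality,
\[
\|\eta_k(x)-\eta_{0k}(x)\|_2\le \|\bmu_k-\bmu_{0k}\|_2+\|(\bB_k-\bB_{0k})^\top x\|_2 .
\]
For any matrix $A$, the spectral norm is bounded by the Frobenius norm, $\|A^\top x\|_2\le \|A\|_{\mathrm{op}}\|x\|_2\le\|A\|_F\|x\|_2$. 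To see this directly, apply Cauchy--Schwarz column by column: $(A^\top x)_j=A_{\cdot j}^\top x$, so
\[
\|A^\top x\|_2^2=\sum_j (A_{\cdot j}^\top x)^2\le \sum_j\|A_{\cdot j}\|_2^2\|x\|_2^2=\|A\|_F^2\|x\|_2^2 .
\]
Taking $A=\bB_k-\bB_{0k}$ and using $\|x\|_2\le C_X$ gives the stated bound.

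The argument is elementary, and the only point needing care is the "almost surely" qualifier. The bound holds for every $x$ in the support of $P_X$, hence $P_X$-almost surely, which is the sense in which it is used in the later integrated Hellinger and KL arguments. The constant $C_X$ does not depend on $k$, and the inequality holds for arbitrary $(\bmu_k,\bB_k)$. Its intended downstream use is to show that the linear predictors stay in a compact interval $I^q$ on $d_\Theta$-neighbourhoods of $\theta_0$, so that Lemma~\ref{lem:kl_component} applies. That use also relies on $\|\bmu_{0k}\|_2,\|\bB_{0k}\|_F<\infty$ from (B2).
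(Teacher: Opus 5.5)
Your proposal is correct and follows essentially the same route as the paper: decompose $\eta_k(x)-\eta_{0k}(x)$ into the mean-shift and coefficient parts, apply the triangle inequality, use $\|A^\top x\|_2\le\|A\|_F\|x\|_2$, and bound $\|x\|_2$ by $C_X$ from compact support. Your column-by-column Cauchy--Schwarz verification of the Frobenius bound and your explicit definition of $C_X$ as a supremum over the support just spell out steps the paper takes for granted.
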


\begin{proof}
By definition, $\eta_k(x)=\bmu_k+\bB_k^\top x$.
Thus
\[
\eta_k(x)-\eta_{0k}(x)
=
(\bmu_k-\bmu_{0k}) + (\bB_k-\bB_{0k})^\top x.
\]
Take Euclidean norms and use $\|A^\top x\|_2\le \|A\|_F\|x\|_2$.
Bound $\|x\|_2$ by $C_X$ from compact support.
\end{proof}

\subsection{Step 2: Prior support for low-rank matrices under the MGP prior}

\begin{lemma}[Local prior mass for $\bB_k$]\label{lem:prior_mass_B}
Fix $k$ and any matrix $\bB_{0k}\in\R^{p\times q}$ with $\rank(\bB_{0k})\le \rmax$.
Under the MGP prior in Section~\ref{sec:priors}, for all $\epsilon>0$,
\[
\Pi\bigl(\|\bB_k-\bB_{0k}\|_F<\epsilon\bigr)>0.
\]
\end{lemma}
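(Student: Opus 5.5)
Since $\rank(\bB_{0k})\le\rmax$, we can write $\bB_{0k}=\bL_0\bR_0^\top$ with $\bL_0\in\R^{p\times\rmax}$ and $\bR_0\in\R^{q\times\rmax}$. For instance, take a thin SVD $\bB_{0k}=\bU\bD\bV^\top$, pad with zero columns up to $\rmax$, and set $\bL_0=\bU\bD^{1/2}$, $\bR_0=\bV\bD^{1/2}$. The map $(\bL,\bR)\mapsto\bL\bR^\top$ is continuous, and in fact locally Lipschitz:
\[
\|\bL\bR^\top-\bL_0\bR_0^\top\|_F\le\|\bL-\bL_0\|_F\|\bR\|_F+\|\bL_0\|_F\|\bR-\bR_0\|_F .
\]
So there is $\rho>0$, depending on $\epsilon$ and $\|\bL_0\|_F,\|\bR_0\|_F$, such that $\|\bL-\bL_0\|_F<\rho$ and $\|\bR-\bR_0\|_F<\rho$ together imply $\|\bB_k-\bB_{0k}\|_F<\epsilon$. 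It therefore suffices to show that the induced prior on $(\bL_k,\bR_k)\in\R^{(p+q)\rmax}$ gives positive mass to this open product neighbourhood $\mathcal{N}_\rho$.

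To show this, condition on the hyperparameters $(\phi_k,\delta_{k1},\dots,\delta_{k\rmax})$. Given them, the columns $\ell_{kh},r_{kh}$ are independent centred Gaussians with covariances $(\phi_k\lambda_{kh})^{-1}\bI$, where $\lambda_{kh}=\prod_{m\le h}\delta_{km}$. For every fixed value of the hyperparameters in $(0,\infty)^{1+\rmax}$, these precisions are finite and positive. The conditional law of $(\bL_k,\bR_k)$ is then a nondegenerate Gaussian on $\R^{(p+q)\rmax}$ with an everywhere positive density, so
\[
g(\phi,\delta):=\Pi\bigl((\bL_k,\bR_k)\in\mathcal{N}_\rho\mid\phi,\delta\bigr)>0 .
\]
By the tower property,
\[
\Pi(\mathcal{N}_\rho)=\int g\,d\Pi(\phi,\delta).
\]
The integrand is strictly positive on the full-measure set $(0,\infty)^{1+\rmax}$, where the Gamma priors live. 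Hence the integral is strictly positive: a nonnegative function that is positive almost everywhere has positive integral.

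The only point requiring a moment's care is whether the MGP structure could push mass onto degenerate configurations. This would happen if $\lambda_{kh}=\infty$, which forces the column to zero, or $\lambda_{kh}=0$. Neither occurs with positive probability, since the $\delta_{km}$ are a.s.\ finite and positive, and so the argument is essentially immediate. If a quantitative bound were wanted later for (B6), one could restrict to a compact box $\phi,\delta_{km}\in[a,b]$. On that box $g$ is bounded below uniformly, by continuity of Gaussian probabilities in the precision parameters, and the box has positive Gamma prior mass. This gives an explicit positive lower bound of the form $\Pi([a,b]^{1+\rmax})\cdot\inf_{[a,b]}g$, and I would record it for use in the prior-thickness step.
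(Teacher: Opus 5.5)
Your proof is correct and follows the same route as the paper. Like the paper, you factor $\bB_{0k}=\bL_0\bR_0^\top$ with zero-padding, use continuity of $(\bL,\bR)\mapsto\bL\bR^\top$ to reduce to a product neighbourhood, note that the conditional prior on $(\bL_k,\bR_k)$ is a nondegenerate Gaussian, and integrate over the Gamma hyperpriors. Your explicit Lipschitz bound, and the remark that an integrand positive almost everywhere has positive integral, just spell out the step the paper summarizes as ``integrate over the hyperpriors.''
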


\begin{proof}
Because $\rank(\bB_{0k})\le \rmax$, there exist $\bL_{0k}\in\R^{p\times \rmax}$ and $\bR_{0k}\in\R^{q\times \rmax}$ such that $\bB_{0k}=\bL_{0k}\bR_{0k}^\top$, padding with zero columns if necessary.
Conditional on $(\phi_k,\lambda_{k1:\rmax})$, the prior for $(\bL_k,\bR_k)$ is a nondegenerate Gaussian density on $\R^{p \rmax}\times \R^{q \rmax}$, hence assigns positive probability to
\[
\{\|\bL_k-\bL_{0k}\|_F<\delta,\ \|\bR_k-\bR_{0k}\|_F<\delta\}
\]
for any $\delta>0$.
The map $(\bL,\bR)\mapsto \bL\bR^\top$ is continuous, so choose $\delta$ small enough that
$\|\bL\bR^\top-\bL_{0k}\bR_{0k}^\top\|_F<\epsilon$ on that set.
Integrate over the hyperpriors to conclude positive induced mass on the $\epsilon$-ball in $\bB_k$.
\end{proof}

\subsection{Step 3: KL neighborhoods contain parameter-metric balls on the sieve}

Let $\Theta_n$ be the sieve from Assumption (B6) enforcing bounded norms of $\bmu_k$, $\bB_k$, and nuisance parameters and a lower bound on mixture weights.

\begin{lemma}[Local KL control by $d_\Theta$ on $\Theta_n$]\label{lem:kl_by_dtheta}
On $\Theta_n$, for all $\theta$ sufficiently close to $\theta_0$, up to permutation,
\[
\KL(P_{\theta_0}\|P_\theta)\le C\, d_\Theta^2(\theta,\theta_0).
\]
\end{lemma}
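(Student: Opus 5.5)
First align labels: pick the permutation $\tau$ attaining $d_\Theta(\theta,\theta_0)$ and relabel so that $\tau$ is the identity; since $P_\theta$ is invariant under label permutation, $\KL(P_{\theta_0}\|P_\theta)$ is unchanged. Because $\bmX$ has the same law $P_X$ under both parameters, the joint KL reduces to the integrated conditional KL,
\[
\KL(P_{\theta_0}\|P_\theta)=\int \KL\bigl(f_{\theta_0}(\cdot\mid x)\,\|\,f_\theta(\cdot\mid x)\bigr)\,P_X(dx).
\]
The next step is to bound the mixture KL by component-wise quantities. For this we use the joint convexity of KL (the log-sum inequality) applied to the two mixtures $\sum_k\pi_{0k}f_{0k}$ and $\sum_k\pi_k f_k$, which gives
\[
\KL\bigl(f_{\theta_0}(\cdot\mid x)\|f_\theta(\cdot\mid x)\bigr)\le \sum_{k}\pi_{0k}\log\frac{\pi_{0k}}{\pi_k}+\sum_k \pi_{0k}\KL\bigl(f_{0k}(\cdot\mid x)\|f_k(\cdot\mid x)\bigr).
\]

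For the first term, the sieve $\Theta_n$ bounds the weights below, so $\pi_k\ge\underline\pi>0$. The weight term is then $\KL(\bpi_0\|\bpi)\le \chi^2(\bpi_0\|\bpi)=\sum_k(\pi_{0k}-\pi_k)^2/\pi_k\le \underline\pi^{-1}\|\bpi-\bpi_0\|_1^2$. For the component term, the sieve bounds $\|\bmu_k\|_2$ and $\|\bB_k\|_F$, and (B1) gives $\|x\|\le C_X$. Hence $\eta_k(x)$ and $\eta_{0k}(x)$ lie in a fixed compact $I^q$ uniformly in $x$, and Lemma~\ref{lem:kl_component} combined with Lemma~\ref{lem:eta_lip} yields
\[
\KL(f_{0k}\|f_k)\le 2C\bigl(\|\bmu_k-\bmu_{0k}\|_2^2+C_X^2\|\bB_k-\bB_{0k}\|_F^2\bigr).
\]

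The main obstacle is that Lemmas~\ref{lem:kl_eta} and \ref{lem:kl_component} hold the nuisance parameter fixed at $\psi_0$, whereas $\theta$ also perturbs $\sigma_j^2$ and $r_j$. I would handle this by extending the Taylor argument of Lemma~\ref{lem:kl_eta} to the pair $(\eta,\psi_j)$. The function $g(\eta,\psi)=-\E_{\eta_0,\psi_0}\ell_j(Y;\eta,\psi)$ is minimized at $(\eta_0,\psi_0)$ with zero gradient. For $\psi$ restricted to a compact subset of $(0,\infty)$, which the sieve provides and which we may take to be a small neighborhood of $\psi_0$ since $\theta$ is close to $\theta_0$, its Hessian is bounded. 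In the Gaussian case this can be checked explicitly from the closed-form KL. In the negative binomial case, the second $r$-derivatives involve trigamma terms whose expectations are finite because the NB moments are finite on compacts. A second-order Taylor expansion then gives $\KL(f_{0k,j}\|f_{k,j})\le C'\{(\eta_{kj}-\eta_{0k,j})^2+|\psi_j-\psi_{0j}|^2\}$.

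Summing over $j$ and $k$, integrating over $P_X$, and adding the weight term bounds $\KL(P_{\theta_0}\|P_\theta)$ by a constant times $\|\theta-\theta_0\|_\tau^2=d_\Theta^2(\theta,\theta_0)$. The constant depends only on $\underline\pi$, $C_X$, the compact $I$, and the nuisance compact.
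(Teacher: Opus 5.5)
Your proposal is correct. It has the same skeleton as the paper's proof: align labels by the minimizing $\tau$, control the linear predictors through Lemma~\ref{lem:eta_lip}, bound component divergences through Lemma~\ref{lem:kl_component}, then combine with a weight term. The difference is that you actually prove the two steps the paper only asserts.

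\textbf{The mixture step.} The paper claims that, for weights bounded away from $0$ and components separated as in (B3), the mixture divergence is controlled by the aligned component divergences plus $\|\bpi-\bpi_0^\tau\|_1^2$. Your pointwise log-sum inequality, integrated over $\bmY$, proves this directly. It also shows that component separation plays no role in this upper bound; separation matters only for the reverse inequality $d_H\gtrsim d_\Theta$ in Theorem~\ref{thm:post_contr_param}. Your $\chi^2$ bound then makes the weight term explicit.

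\textbf{The nuisance parameters.} Lemmas~\ref{lem:kl_eta} and~\ref{lem:kl_component} hold the nuisance parameter fixed at its true value. The paper disposes of $\sigma_j^2$ and $r_j$ only by remarking that they are included in $d_\Theta$ and bounded on the sieve. Your joint second-order expansion in $(\eta,\psi_j)$ fills exactly this gap, and it needs only an upper Hessian bound, not curvature from below.

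\textbf{Two small refinements.}
\begin{itemize}
\item In the negative binomial case, the $r$-curvature is $\partial_r^2\ell=\psi'(y+r)-\psi'(r)$, with $\psi'$ the trigamma function. Since trigamma is positive and decreasing, this term lies in $[-\psi'(r),0]$ deterministically, so no moment argument is needed there. Moments of $Y$ enter only through the $\eta$-curvature $(y+r)\sigma(\eta)\{1-\sigma(\eta)\}$, which Lemma~\ref{lem:kl_eta} already handles.
\item Take the compact range of $\eta$, the nuisance neighbourhood, and the lower bound on $\pi_k$ from closeness to $\theta_0$ (using $\pi_{0k}\ge\pi_{\min}$) rather than from the sieve. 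Then $C$ does not depend on $n$ even if the sieve bounds grow with $n$.
\end{itemize}

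Overall, the paper's version is shorter but rests on unproven claims; yours is self-contained and identifies exactly which quantities the constant depends on.
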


\begin{proof}
Fix an aligning permutation $\tau$ achieving $d_\Theta$.
By Lemma~\ref{lem:eta_lip}, on the bounded design set,
$\|\eta_k(x)-\eta_{0,\tau(k)}(x)\|_2$ is bounded by a constant multiple of
$\|\bmu_k-\bmu_{0,\tau(k)}\|_2+\|\bB_k-\bB_{0,\tau(k)}\|_F$.
By Lemma~\ref{lem:kl_component}, each component-wise Kullback--Leibler divergence at $x$ is bounded by
$C\|\eta_k(x)-\eta_{0,\tau(k)}(x)\|_2^2$ on compacts, and integrating over $P_X$ gives the same form.
Finally, for mixtures with weights bounded away from $0$ and separated components as in Assumption (B3), the mixture Kullback--Leibler divergence is controlled by the sum of aligned component divergences plus $\|\bpi-\bpi_0^\tau\|_1^2$.
All nuisance parameters are included in the definition of $d_\Theta$ and are bounded on the sieve.
Collect constants into $C$.
\end{proof}

\subsection{Proof of Theorem~\ref{thm:post_contr_H}: contraction in integrated Hellinger}

\begin{proof}
We verify the standard ingredients of posterior contraction for i.i.d.\ models: the existence of tests for $d_H$-separated sets, a sieve $\Theta_n$ with exponentially small prior outside, an entropy bound for $\Theta_n$ in $d_H$, and sufficient prior mass in a Kullback--Leibler neighborhood of $\theta_0$.

Assumption (B6) provides the sieve and entropy conditions.
It therefore remains to show the prior Kullback--Leibler mass condition and invoke the existence of tests.

By Lemma~\ref{lem:kl_by_dtheta}, for $\theta$ sufficiently close to $\theta_0$,
\[
\KL(P_{\theta_0}\|P_\theta)\le C\,d_\Theta^2(\theta,\theta_0).
\]
Hence a sufficiently small $d_\Theta$-ball is contained in a Kullback--Leibler ball.
Lemma~\ref{lem:prior_mass_B}, together with the Gaussian, Gamma, and inverse-Gamma priors, implies positive mass for these neighborhoods, and Assumption (B6) upgrades this to the required lower bound at radius $\varepsilon_n$.

For dominated i.i.d.\ models, for any $d_H$-separated alternatives there exist tests with exponentially small type I and type II errors, based on Hellinger affinity.
Applying the general contraction theorem with $\varepsilon_n=\sqrt{(d_n\log n)/n}$ yields
\[
\Pi\bigl(d_H(\theta,\theta_0)>M\varepsilon_n\mid \text{data}\bigr)\to 0
\]
in $P_{\theta_0}$-probability for some $M>0$.
\end{proof}

\subsection{Proof of Theorem~\ref{thm:post_contr_param}: parameter contraction}

\begin{proof}
Under Assumptions (B3) and (B5), the mixture model is locally identifiable up to permutation near $\theta_0$.
On a sufficiently small neighborhood $\mathcal{N}$ of $\theta_0$, this yields local equivalence of $d_H$ and $d_\Theta$:
there exists $c>0$ such that for all $\theta\in\mathcal{N}$,
\[
d_H(\theta,\theta_0)\ge c\, d_\Theta(\theta,\theta_0).
\]

By Theorem~\ref{thm:post_contr_H}, the posterior concentrates in $d_H$ balls of radius $M\varepsilon_n$.
Because $\varepsilon_n\to 0$, for large $n$ this ball lies inside $\mathcal{N}$ with posterior probability tending to $1$.
Therefore, on that event,
\[
d_\Theta(\theta,\theta_0)\le c^{-1} d_H(\theta,\theta_0)\le c^{-1}M\varepsilon_n,
\]
which yields the claim.
\end{proof}

\subsection{Proof of Theorem~\ref{thm:eigenspace_contr}: eigenspace contraction}

\begin{lemma}[Wedin-type projector bound]\label{lem:wedin_proj}
Let $\bB,\bB_0\in\R^{p\times q}$ and let $\bP,\bP_0$ be rank-$r$ projectors onto the leading left singular subspaces.
If $g=s_r(\bB_0)-s_{r+1}(\bB_0)>0$ and $\|\bB-\bB_0\|_2\le g/2$, then
\[
\|\bP-\bP_0\|_F \le \frac{4}{g}\,\|\bB-\bB_0\|_F.
\]
\end{lemma}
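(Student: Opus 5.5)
The bound is a Wedin $\sin\Theta$ argument, specialised to the gap $g=s_r(\bB_0)-s_{r+1}(\bB_0)$ and combined with the standard conversion between $\sin\Theta$ norms and projector differences. Write $\bB_0=\bU_0\bD_0\bV_0^\top$ and $\bB=\bU\bD\bV^\top$. Let $\bU_{0,r},\bV_{0,r}$ denote the leading $r$ singular vectors of $\bB_0$, and $\bU_r,\bV_r$ those of $\bB$. Set $\bm{E}=\bB-\bB_0$. Then $\bP=\bU_r\bU_r^\top$ and $\bP_0=\bU_{0,r}\bU_{0,r}^\top$.

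\emph{Step 1 (separation of perturbed singular values).} By Weyl's inequality for singular values, $|s_i(\bB)-s_i(\bB_0)|\le\|\bm{E}\|_2\le g/2$ for every $i$. Hence
\[
s_r(\bB)-s_{r+1}(\bB_0)\;\ge\; g-\|\bm{E}\|_2\;\ge\; g/2 .
\]
This quantity, which we call $\delta:=g/2$, is the gap between the retained spectrum of $\bB$ and the discarded spectrum of $\bB_0$ that Wedin's theorem needs. It is the only place the hypothesis $\|\bm{E}\|_2\le g/2$ enters.

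\emph{Step 2 (Wedin's $\sin\Theta$ theorem).} Form the residuals $\bm{R}_1=\bB_0\bV_r-\bU_r\bD_r=-\bm{E}\bV_r$ and $\bm{R}_2=\bB_0^\top\bU_r-\bV_r\bD_r=-\bm{E}^\top\bU_r$. Their Frobenius norms are each at most $\|\bm{E}\|_F$, because $\bV_r$ and $\bU_r$ have orthonormal columns. Wedin's theorem then gives
\[
\|\sin\Theta(\bU_r,\bU_{0,r})\|_F\le \frac{\sqrt{\|\bm{R}_1\|_F^2+\|\bm{R}_2\|_F^2}}{\delta}\le\frac{\sqrt2\,\|\bm{E}\|_F}{g/2}.
\]
Alternatively, one can apply the Davis--Kahan theorem to the symmetric dilation of $\bB$, which yields a constant of the same order.

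\emph{Step 3 (from angles to projectors).} For rank-$r$ orthogonal projectors one has the exact identity $\|\bP-\bP_0\|_F=\sqrt2\,\|\sin\Theta\|_F$. Combining this with Step 2 gives
\[
\|\bP-\bP_0\|_F\le 2\sqrt2\cdot\frac{\sqrt2\,\|\bm{E}\|_F}{g}=\frac{4}{g}\|\bm{E}\|_F ,
\]
which is exactly the stated constant.

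The main obstacle is the constant bookkeeping. The two factors of $\sqrt2$ (one from stacking the left and right residuals, one from the projector identity) and the factor $2$ from $\delta=g/2$ must multiply to exactly $4$. This requires using the version of Wedin's theorem whose denominator is $s_r(\bB)-s_{r+1}(\bB_0)$, rather than the $\min$-form denominator that also involves $s_r(\bB_0)$.

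One edge case needs care. If $s_{r+1}(\bB_0)=0$ and $\bB$ has rank exceeding $r$, the complementary block still has singular values at most $s_{r+1}(\bB_0)+\|\bm{E}\|_2$. Wedin's condition should then be checked in the form $s_r(\bB)\ge\delta$ together with $s_{r+1}(\bB_0)\le s_r(\bB)-\delta$, which holds by Step 1.
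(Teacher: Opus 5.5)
Your proof is correct and follows the same route as the paper: a Wedin $\sin\Theta$ bound at effective gap $g/2$, which you justify explicitly via Weyl's inequality, followed by conversion from canonical angles to projectors. Your constant bookkeeping is the more careful of the two. The paper bounds $\|\sin\Theta\|_F$ by $\frac{2}{g}\|\bB-\bB_0\|_2$, which is not valid in general without a factor $\sqrt{r}$, and then uses the loose bound $\|\bP-\bP_0\|_F\le 2\|\sin\Theta\|_F$; you instead keep Frobenius-norm residuals throughout and use the exact identity $\|\bP-\bP_0\|_F=\sqrt{2}\,\|\sin\Theta\|_F$, so the factors $\sqrt{2}\cdot\sqrt{2}\cdot 2=4$ come out legitimately.
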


\begin{proof}
This is a standard $\sin\Theta$ perturbation bound derived from Davis--Kahan and Wedin.
Under $\|\bB-\bB_0\|_2\le g/2$,
\[
\|\sin\Theta\|_F \le \frac{2}{g}\|\bB-\bB_0\|_2 \le \frac{2}{g}\|\bB-\bB_0\|_F,
\]
and $\|\bP-\bP_0\|_F\le 2\|\sin\Theta\|_F$.
\end{proof}

\begin{proof}[Proof of Theorem~\ref{thm:eigenspace_contr}]
By Theorem~\ref{thm:post_contr_param}, after label alignment,
$\|\bB_k-\bB_{0k}\|_F=O_P(\varepsilon_n)$ under the posterior.
By Assumption (B7), each $g_k>0$, so for large $n$ the event $\|\bB_k-\bB_{0k}\|_2\le g_k/2$ has posterior probability tending to $1$.
On that event apply Lemma~\ref{lem:wedin_proj} with $g=g_k$ and sum over $k$, then minimize over permutations.
\end{proof}

\subsection{Proof of Theorem~\ref{thm:psm_meanshift_consistency}: PSM eigenspace plus mean shift}

We prove three steps.
First, $\bS$ concentrates near $\bS_0$ in Frobenius norm.
Second, the leading eigenspaces of $\bS$ and $\bS_0$ are close by Davis--Kahan.
Third, mean shift recovers the partition when the embedded points form $K$ tight, separated clouds.

\begin{lemma}[PSM concentrates under perfect separation]\label{lem:psm_to_s0}
Under Assumption (B8'), for each fixed pair $(i,i')$,
\[
\bS_{ii'}=\Pi(Z_i=Z_{i'}\mid \text{data}) \to \ind(Z_{0i}=Z_{0i'})
\;\text{in }P_{\theta_0}\text{-probability}.
\]
Consequently, $\|\bS-\bS_0\|_F\to 0$ in probability.
\end{lemma}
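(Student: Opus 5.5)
The plan is to write $\bS_{ii'}$ as a posterior average of the conditional co-membership probability given the parameters, and then use Theorem~\ref{thm:post_contr_param} together with the margin in (B8'). Given $\theta$ and the data, the labels are conditionally independent across units, with
\[
\Pi(Z_i=k\mid\theta,\text{data})=w_{ik}(\theta):=\frac{\pi_k f_k(\bmY_i\mid\bmX_i)}{\sum_{\ell}\pi_\ell f_\ell(\bmY_i\mid\bmX_i)}.
\]
Hence, for $i\neq i'$,
\[
\bS_{ii'}=\int \sum_{k=1}^K w_{ik}(\theta)\,w_{i'k}(\theta)\,\Pi(d\theta\mid\text{data}).
\]
The diagonal entries equal $1$ trivially. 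Writing $w_{ik}(\theta)$ via the aligning permutation $\tau$, it suffices to show two things: $\sum_k w_{ik}w_{i'k}$ is invariant under relabeling of $\theta$, which is immediate; and on a posterior-high-probability set, $w_{i,\tau^{-1}(Z_{0i})}(\theta)\ge 1-\delta_n$ with $\delta_n\to0$.

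The key step is the following. Under $\theta_0$, (B8') gives
\[
1-w_{i,Z_{0i}}(\theta_0)\le (K-1)e^{-\Delta}.
\]
I will perturb from $\theta_0$ to $\theta$ with $d_\Theta(\theta,\theta_0)\le M'\varepsilon_n$. By Lemma~\ref{lem:eta_lip} and the compact support in (B1), $\sup_x\|\eta_k(x)-\eta_{0,\tau(k)}(x)\|_2\lesssim\varepsilon_n$. The log-ratio $\log\{\pi_kf_k(\bmY_i\mid\bmX_i)\}-\log\{\pi_{0\tau(k)}f_{0\tau(k)}(\bmY_i\mid\bmX_i)\}$ is then bounded by $C(1+|\bmY_i|)\varepsilon_n$, since the score in $\eta$ is linear in $y$ for all three families and the nuisance and weight perturbations are $O(\varepsilon_n)$ by the same theorem. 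For a fixed pair, $|\bmY_i|$ is $O_P(1)$, so the log-odds margin stays at least $\Delta/2$ eventually. Consequently $\sum_k w_{ik}w_{i'k}$ lies within $2(K-1)e^{-\Delta/2}$ of $\ind(Z_{0i}=Z_{0i'})$ on that set, and the complement has vanishing posterior mass, which contributes $o_P(1)$ because the integrand is bounded by $1$.

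The main obstacle is that a fixed margin $\Delta$ yields only an $O(e^{-\Delta/2})$ error, not convergence to $0$. I would read (B8') literally: the ratio of the true-component term to every competitor is bounded below almost surely. Then I would try to argue that, as stated, the margin may be taken to diverge, or equivalently that the competitor densities vanish on the support of the true component. Failing that, I would interpret the convergence as holding up to an $e^{-\Delta}$-small bias, which is absorbed as $\Delta$ grows. This is the point I expect to require care or a strengthened reading of (B8').

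Finally, the Frobenius statement follows. For fixed $n$, a finite union of pairs converges by the pairwise result. To get the normalized version $n^{-2}\|\bS-\bS_0\|_F^2\to0$, the version that is actually useful for Davis--Kahan, I would take expectations and use bounded convergence: the pairwise bound is uniform in $(i,i')$ apart from the $|\bmY_i|$ tail events, whose average frequency is small by Markov's inequality.
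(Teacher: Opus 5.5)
Your route is the paper's route made explicit. The paper's proof also combines Theorem~\ref{thm:post_contr_param} with the margin in (B8'), and it simply asserts that the posterior probability of a wrong label assignment tends to $0$. Your representation $\bS_{ii'}=\int\sum_k w_{ik}(\theta)w_{i'k}(\theta)\,\Pi(d\theta\mid\text{data})$ and your label-invariance remark are correct and more careful than the paper. The gap you flag is genuine: it is exactly the step the paper skips, and neither of your repairs closes it.

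Your own representation identifies the actual limit. The integrand $g(\theta)=\sum_k w_{ik}(\theta)w_{i'k}(\theta)$ is bounded by $1$ and label-invariant. By your perturbation bound it is uniformly close to $g(\theta_0)$ on $d_\Theta$-balls of radius $M'\varepsilon_n$, so posterior contraction gives $\bS_{ii'}-g(\theta_0)\to 0$ in $P_{\theta_0}$-probability. For Gaussian, logit-Bernoulli and negative binomial components, every $f_{0k}(\bmY_i\mid\bmX_i)$ is strictly positive, so $w_{ik}(\theta_0)\in(0,1)$ for every $k$. Hence $g(\theta_0)\le\max_k w_{ik}(\theta_0)<1$ when $Z_{0i}=Z_{0i'}$, and $g(\theta_0)>0$ otherwise. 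This discrepancy has a law that does not depend on $n$.

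Neither of your repairs works:
\begin{itemize}
\item The margin cannot be made to diverge, because $\theta_0$ is fixed and the units are i.i.d.
\item Your other reading, that competitor densities vanish where the true component lives, is in fact what (B8') entails. An almost-sure margin at the true label forces zero Bayes error, and hence mutually singular component laws of $(\bmX,\bmY)$. That is impossible here: $Z$ is independent of $\bmX$, and the strictly positive component densities make those laws mutually absolutely continuous.
\end{itemize}
So for $K\ge 2$ assumption (B8') is never satisfied, and the lemma is true only vacuously. Under any satisfiable margin, say one holding for the two units at hand, its conclusion is false. What your argument genuinely proves is the corrected statement $|\bS_{ii'}-\ind(Z_{0i}=Z_{0i'})|\le 2(K-1)e^{-\Delta/2}+o_P(1)$, or more sharply $\bS_{ii'}-\sum_k w_{ik}(\theta_0)w_{i'k}(\theta_0)\to 0$ in probability.

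Two smaller points. First, in the perturbation step the Gaussian log-density depends on $\sigma_j^2$ through $(y-\eta)^2/\sigma_j^2$. The bound should therefore read $C(1+\|\bmY_i\|_2^2)\varepsilon_n$, which is still $O_P(\varepsilon_n)$ for a fixed unit.

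Second, for the matrix claim, the sentence about a finite union of pairs at fixed $n$ has no content, since $n$ is the asymptotic index. Convergence of each fixed entry does not control the unnormalized $\|\bS-\bS_0\|_F$, which is a sum of order $n^2$ terms; the paper's appeal to dominated convergence has the same defect. The normalized version you propose is the right target for Davis--Kahan with an eigengap of order $n$, and it needs no tail or Markov argument. By exchangeability of units and $\bS_{ii}=(\bS_0)_{ii}=1$,
\[
\E\bigl[n^{-2}\|\bS-\bS_0\|_F^2\bigr]=(1-n^{-1})\,\E\bigl[(\bS_{12}-(\bS_0)_{12})^2\bigr].
\]
By bounded convergence, this tends to $0$ exactly when the pairwise statement holds. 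Otherwise it tends to the positive limit $\E\bigl[\bigl(\sum_k w_{1k}(\theta_0)w_{2k}(\theta_0)-\ind(Z_{01}=Z_{02})\bigr)^2\bigr]$.
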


\begin{proof}
Under Assumption (B8'), each observation $(\bmX_i,\bmY_i)$ is almost surely more likely under its true component than under any other component by a fixed margin.
Together with posterior concentration of the parameters from Theorem~\ref{thm:post_contr_param}, the posterior probability of any label assignment that differs from the true one, up to permutation, converges to $0$.
Therefore $\Pi(Z_i=Z_{i'}\mid \text{data})$ converges to $\ind(Z_{0i}=Z_{0i'})$ pointwise, which implies Frobenius convergence by dominated convergence since all entries lie in $[0,1]$.
\end{proof}

\begin{lemma}[Eigenvector perturbation for the PSM]\label{lem:psm_dk}
Let $\bU_0\in\R^{n\times K}$ collect the top-$K$ eigenvectors of $\bS_0$ and let $\hat\bU$ denote those of $\bS$.
Then there exists an orthogonal matrix $\bO\in\R^{K\times K}$ such that
\[
\|\hat\bU-\bU_0\bO\|_F \le C \frac{\|\bS-\bS_0\|_F}{\gap(\bS_0)},
\]
where $\gap(\bS_0)$ is the eigengap between the $K$th and $(K+1)$st eigenvalues of $\bS_0$.
Under Assumption (B3), $\gap(\bS_0)\asymp n$.
\end{lemma}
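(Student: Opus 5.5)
The plan has two parts: a standard Davis--Kahan argument for the perturbation bound, and an explicit spectral computation for $\bS_0$ that gives the gap.

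\emph{The perturbation bound.} Both $\bS$ and $\bS_0$ are symmetric, since $\bS_{ii'}=\bS_{i'i}$. I would apply the Davis--Kahan $\sin\Theta$ theorem in the Yu--Wang--Samworth form. If $\lambda_1\ge\dots\ge\lambda_n$ are the eigenvalues of $\bS_0$ and $\delta=\lambda_K-\lambda_{K+1}$, then
\[
\|\sin\Theta(\hat\bU,\bU_0)\|_F\le \frac{2\min\{\sqrt K\|\bS-\bS_0\|_2,\ \|\bS-\bS_0\|_F\}}{\delta}\le\frac{2\|\bS-\bS_0\|_F}{\delta}.
\]
The point of this version is that it requires no separation condition on the perturbed matrix. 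The existence of $\bO$ then follows from the Procrustes step in the same reference. Take the SVD $\bU_0^\top\hat\bU=\bm{A}\,\cos\Theta\,\bm{A}'^\top$ and set $\bO=\bm{A}\bm{A}'^\top$. This gives $\|\hat\bU-\bU_0\bO\|_F\le\sqrt2\,\|\sin\Theta\|_F$, so the bound holds with $C=2\sqrt2$.

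\emph{The gap of $\bS_0$.} I would compute the spectrum of $\bS_0$ directly. After permuting indices, $\bS_0=\bigoplus_{k=1}^K \bone_{n_k}\bone_{n_k}^\top$, where $n_k=\#\{i:Z_{0i}=k\}$. So $\bS_0$ has eigenvalues $n_1,\dots,n_K$, with eigenvectors $\bone_{\mathcal{I}_k}/\sqrt{n_k}$, and every other eigenvalue is $0$. Hence $\gap(\bS_0)=\min_k n_k$ whenever every $n_k\ge1$. This also shows that $\bU_0$ is well defined, and that its rows take exactly $K$ distinct values $e_k^\top/\sqrt{n_k}$. That fact is useful later for the mean-shift step.

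\emph{Main obstacle: $\gap(\bS_0)\asymp n$.} The labels $Z_{0i}$ are i.i.d.\ with $\Prob(Z_{0i}=k)=\pi_{0k}\ge\pi_{\min}$ by (B3). Each $n_k\sim\mathrm{Bin}(n,\pi_{0k})$, so Hoeffding's inequality combined with a union bound over the $K$ classes (with $K$ fixed) gives
\[
\Prob\bigl(\min_k n_k<\pi_{\min}n/2\bigr)\le K e^{-n\pi_{\min}^2/2}\to0.
\]
So $\pi_{\min}n/2\le\gap(\bS_0)\le n$ with $P_{\theta_0}$-probability tending to one. The statement ``$\gap(\bS_0)\asymp n$'' should therefore be read as holding on this high-probability event, and I would say so explicitly.

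A slightly more careful point is that the deterministic inequality is only meaningful on the event $\{\min_k n_k\ge1\}$; off that event the right-hand side is infinite and the bound is vacuous. Combining the two parts, on the high-probability event we get $\|\hat\bU-\bU_0\bO\|_F\le C'\|\bS-\bS_0\|_F/n$, which is the form needed downstream.
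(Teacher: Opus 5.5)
Your proposal is correct and follows the same route as the paper: compute the block-diagonal spectrum of $\bS_0$ (eigenvalues $n_1,\dots,n_K$, then zeros), then apply the Davis--Kahan $\sin\Theta$ theorem. Your version is more careful in two respects. The Yu--Wang--Samworth form needs no gap condition on $\bS$ and supplies the Procrustes rotation $\bO$ with the explicit constant $C=2\sqrt2$. You also obtain $\min_k n_k\ge\pi_{\min}n/2$ only with high probability via Hoeffding, whereas the paper asserts $\min_k n_k\ge\pi_{\min}n$ as if it held deterministically.
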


\begin{proof}
The matrix $\bS_0$ is block-constant with rank $K$, with nonzero eigenvalues $\{n_1,\dots,n_K\}$, where $n_k$ are the cluster sizes.
Hence the $K$th eigenvalue is $\min_k n_k\ge \pi_{\min}n$ and the $(K+1)$st eigenvalue is $0$.
So $\gap(\bS_0)\ge \pi_{\min}n$.
Apply the Davis--Kahan $\sin\Theta$ theorem for symmetric matrices to obtain the bound.
\end{proof}

\begin{lemma}[Mean shift recovers separated clouds]\label{lem:ms_clouds}
Suppose embedded points $\{\hat u_i\}\subset\R^K$ satisfy the following property:
there exist $K$ centers $c_1,\dots,c_K$ and a partition $\{C_1,\dots,C_K\}$ such that $\|\hat u_i-c_k\|\le \delta_n$ for all $i\in C_k$, while $\|c_k-c_\ell\|\ge 4h_n$ for all $k\neq \ell$.
If $\delta_n\le h_n/4$, then mean shift with bandwidth $h_n$ returns exactly the partition $\{C_1,\dots,C_K\}$.
\end{lemma}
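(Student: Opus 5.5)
The plan is to work with the standard mean-shift iteration
\[
y^{(t+1)}=m(y^{(t)}),\qquad
m(y)=\frac{\sum_{i} k\bigl(\|y-\hat u_i\|^2/h_n^2\bigr)\,\hat u_i}{\sum_{i} k\bigl(\|y-\hat u_i\|^2/h_n^2\bigr)},
\]
started at each $\hat u_i$. Points are declared co-clustered when their limits coincide, or lie within a merging tolerance smaller than $h_n$. I would first fix the kernel convention implicit in the lemma: the profile $k$ is nonincreasing and supported on $[0,1]$ (flat or Epanechnikov). The Gaussian kernel would be handled afterwards as a perturbation, since its cross-cluster weights are of order $e^{-c(4h_n)^2/h_n^2}$.

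\textbf{Step 1: invariant balls.} Fix $k$ and take $y$ with $\|y-c_k\|\le h_n/4$. For $i\in C_k$ we have $\|y-\hat u_i\|\le h_n/4+\delta_n\le h_n/2<h_n$, so every point of $C_k$ receives positive weight. For $i\in C_\ell$ with $\ell\neq k$,
\[
\|y-\hat u_i\|\ge 4h_n-h_n/4-\delta_n\ge 7h_n/2>h_n,
\]
so these points receive zero weight. Hence $m(y)$ is a convex combination of $\{\hat u_i:i\in C_k\}$ and lies in $H_k:=\mathrm{conv}\{\hat u_i:i\in C_k\}\subset \bar B(c_k,\delta_n)\subset \bar B(c_k,h_n/4)$. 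By induction, the ball $\bar B(c_k,h_n/4)$, and in fact $H_k$ after one step, is invariant under $m$. Every starting point $\hat u_i$ with $i\in C_k$ lies in this ball, so its whole trajectory stays in $H_k$.

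\textbf{Step 2: points of different clusters are separated.} Since $H_k\subset \bar B(c_k,h_n/4)$ and $\|c_k-c_\ell\|\ge 4h_n$, any two limits from different clusters are at distance at least $7h_n/2$. They therefore cannot coincide, and they also cannot be merged under any tolerance below $h_n$.

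\textbf{Step 3: points of the same cluster are merged.} This is the step I expect to be the real obstacle. For the flat kernel it is immediate: every $y\in H_k$ sees exactly the points of $C_k$, so $m(y)=|C_k|^{-1}\sum_{i\in C_k}\hat u_i$ for all such $y$. All trajectories started in $C_k$ therefore reach a common fixed point after a single step.

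For the Epanechnikov profile, the kernel density estimate restricted to $\bar B(c_k,h_n/4)$ is a sum of concave quadratics $1-\|y-\hat u_i\|^2/h_n^2$ over $i\in C_k$ only. It is therefore strictly concave there, with a unique maximizer, again the cluster mean. Mean shift is a gradient-ascent-type scheme that monotonically increases the density, so every trajectory converges to that unique mode.

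For a Gaussian kernel I would argue similarly. Because the diameter of $H_k$ is at most $2\delta_n\le h_n/2$, the within-cluster density has Hessian $\preceq -c h_n^{-2}I$ on $H_k$, i.e.\ it is strictly log-concave there. The exponentially small contributions from other clusters do not destroy this. Convergence of mean shift for convex profiles (Carreira-Perpi\~n\'an) then gives a unique limit per cluster.

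Combining Steps 1--3, mean shift returns exactly $\{C_1,\dots,C_K\}$.
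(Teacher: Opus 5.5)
Your argument is correct for compactly supported kernels, and it is the paper's argument. The paper also observes that the radius-$h_n$ window around any point of cloud $k$ meets no other cloud, since those lie at distance at least $4h_n-2\delta_n\ge 3h_n$. Hence each update is a convex combination of cloud-$k$ points, iterates never leave the cloud, and distinct clouds cannot merge. The paper only asserts convergence ``to the unique kernel-density mode supported by that cloud''; your Step 3 proves that uniqueness, so your version is the more complete one.

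One precision fix in Step 3. An update with weight profile $k$ ascends the density whose profile has derivative $-k$. So the Epanechnikov density corresponds to flat weights, which is your first case. With Epanechnikov weights, the density on $\bar B(c_k,h_n/4)$ is $\sum_{i\in C_k}(1-s_i)^2$ with $s_i=\|y-\hat u_i\|^2/h_n^2$, not a sum of concave quadratics. Each term has Hessian eigenvalues $-4(1-s_i)/h_n^2$ and $-4(1-3s_i)/h_n^2$. Both are negative because $s_i\le 1/4$ on that ball, so strict concavity and your conclusion survive.

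The Gaussian paragraph should be dropped, because the lemma is false for the Gaussian kernel under the stated hypotheses. The cross-cluster factor $e^{-c(4h_n)^2/h_n^2}=e^{-16c}$ is a fixed constant, not a vanishing one. Nothing in the lemma bounds the size ratio $|C_\ell|/|C_k|$, so the total pull of a large neighbouring cloud need not be small. Concavity on $H_k$ is then beside the point, because trajectories can leave $H_k$.

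Here is a concrete counterexample. Put one point at $c_1$ and $N$ points at $c_2$, with $\|c_1-c_2\|=4h_n$ and $\delta_n=0$.
\begin{itemize}
\item Every stationary point of the Gaussian density is a weighted mean of the data, so it lies on the segment from $c_1$ to $c_2$.
\item Along that segment, in the coordinate $s=t/h_n$, the derivative is proportional to $-se^{-s^2/2}+N(4-s)e^{-(4-s)^2/2}$.
\item This is positive wherever $N>\tfrac{s}{4-s}e^{8-4s}$. On $[0,3]$ the right-hand side is maximized at $s=2-\sqrt{3}$, where it is about $73.3$.
\item Hence for $N\ge 74$ there is no stationary point within $3h_n$ of $c_1$. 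One checks that the density is then unimodal, with its only mode just short of $c_2$, so mean shift returns one cluster instead of two.
\end{itemize}
A Gaussian version would need an explicit cluster-size balance condition and a separation-to-bandwidth ratio that depends on it. Otherwise, restrict the lemma to kernels supported in the radius-$h_n$ ball, as the paper's proof tacitly does.
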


\begin{proof}
For any point in cloud $k$, the kernel ball of radius $h_n$ contains only points from cloud $k$ because other clouds are at distance at least $4h_n-\delta_n-\delta_n\ge 3h_n$.
Thus the mean-shift update is a convex combination of points from the same cloud, so iterates remain in the same cloud and converge to the unique kernel-density mode supported by that cloud.
Different clouds cannot merge because their kernel neighborhoods never overlap.
\end{proof}

\begin{proof}[Proof of Theorem~\ref{thm:psm_meanshift_consistency}]
By Lemma~\ref{lem:psm_to_s0}, $\|\bS-\bS_0\|_F=o_P(1)$.
By Lemma~\ref{lem:psm_dk}, there exists an orthogonal matrix $\bO$ such that
\[
\|\hat\bU-\bU_0\bO\|_F=o_P(1),
\]
hence the maximum row deviation satisfies
\[
\max_i \|\hat u_i-\bO^\top u_{0i}\|_2=o_P(1).
\]

Under $\bS_0$, the rows $u_{0i}$ take exactly $K$ distinct values, one per true cluster, and these cluster centers are separated by a constant depending on $(n_1,\dots,n_K)$.
Choose $h_n\to 0$ with $\varepsilon_n/h_n\to 0$; then $\delta_n=o_P(h_n)$.
Apply Lemma~\ref{lem:ms_clouds} to conclude exact recovery with probability tending to $1$, which implies that the misclustering fraction converges to $0$.
\end{proof}

\subsection{Variational PSM perturbation lemma}

\begin{lemma}[Replacing $\bS$ by $\widehat{\bS}$ under variational inference]\label{lem:vi_psm_replace}
Let $\hat{\mathcal{C}}_n(\bS)$ be the mean-shift partition produced from the top-$K$ eigenspace of $\bS$, and similarly for $\hat{\mathcal{C}}_n(\widehat{\bS})$.
If $\|\widehat{\bS}-\bS\|_F=o_P(h_n)$ and the eigengap of $\bS$ at rank $K$ is bounded below by $c n$ with high probability, then the conclusions of Theorem~\ref{thm:psm_meanshift_consistency} remain valid with $\widehat{\bS}$ in place of $\bS$.
\end{lemma}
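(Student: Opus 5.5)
The plan is to mirror the three-step proof of Theorem~\ref{thm:psm_meanshift_consistency}, with $\widehat{\bS}$ in place of $\bS$, via the triangle inequality through $\bS$. Let $\hat\bU$ and $\widetilde\bU$ denote the top-$K$ eigenvectors of $\bS$ and $\widehat{\bS}$. Since the eigengap of $\bS$ at rank $K$ is at least $cn$ with probability tending to one, I would apply the Davis--Kahan $\sin\Theta$ theorem with $\bS$ as the unperturbed matrix, as in Lemma~\ref{lem:psm_dk}. This gives an orthogonal $\bO_1\in\R^{K\times K}$ with
\[
\|\widetilde\bU-\hat\bU\bO_1\|_F\le C\,\frac{\|\widehat{\bS}-\bS\|_F}{cn}=o_P(h_n/n),
\]
and the bound holds on the high-probability eigengap event. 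Weyl's inequality keeps the rank-$K$ eigengap of $\widehat{\bS}$ of the same order, so the top-$K$ eigenspace of $\widehat{\bS}$ is well defined.

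Next I combine this with the bound from the proof of Theorem~\ref{thm:psm_meanshift_consistency}. That proof gives $\|\hat\bU-\bU_0\bO\|_F\le C\|\bS-\bS_0\|_F/\gap(\bS_0)$, with $\gap(\bS_0)\ge\pi_{\min}n$ by Lemma~\ref{lem:psm_dk}. Setting $\bO_2=\bO\bO_1$, the triangle inequality and orthogonal invariance of the Frobenius norm give
\[
\max_i\|\tilde u_i-\bO_2^\top u_{0i}\|_2\le\|\widetilde\bU-\bU_0\bO_2\|_F\le\|\widetilde\bU-\hat\bU\bO_1\|_F+\|\hat\bU-\bU_0\bO\|_F .
\]
The right-hand side is the sum of the new $o_P(h_n/n)$ term and the deviation already controlled in the original theorem. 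Hence the row deviations of the $\widehat{\bS}$-embedding from the rotated true centers are at most $\delta_n$ plus an $o_P(h_n)$ term. I take $\delta_n'=\delta_n+o_P(h_n)$, and by the hypotheses $\delta_n'=o_P(h_n)$.

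Finally, I apply Lemma~\ref{lem:ms_clouds} to $\{\tilde u_i\}$. The centers are $c_k=\bO_2^\top\bar u_{0k}$, where $\bar u_{0k}$ is the common row of $\bU_0$ for cluster $k$. Rotation preserves distances between centers, so the required separation $\|c_k-c_\ell\|\ge 4h_n$ is exactly the condition already used for $\bS$. Because $h_n\to0$ and $\delta_n'\le h_n/4$ with probability tending to one, mean shift returns the true partition with probability tending to one. The misclustering event then has vanishing probability, which is the conclusion of Theorem~\ref{thm:psm_meanshift_consistency}.

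The main obstacle is the scale of the rows. The rows of $\bU_0$ have norm of order $n^{-1/2}$, so the center separation is also of order $n^{-1/2}$ rather than a constant. The Frobenius-to-row step therefore has to be compared with $h_n$ on the same scale. I would handle this by running the argument on the rescaled embedding $\sqrt n\,\widetilde\bU$, for which the centers are separated by a constant. The bound $\|\widehat{\bS}-\bS\|_F/n=o_P(h_n/n)$ then still yields a row error of $o_P(h_n)$ after multiplying by $\sqrt n$, since $\sqrt n\cdot h_n/n\le h_n$. Mean shift is scale-equivariant once $h_n$ is rescaled accordingly, so this rescaling only amounts to a fixed convention for the bandwidth.
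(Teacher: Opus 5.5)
Your proposal is correct and follows the paper's own route: apply Davis--Kahan with the $cn$ eigengap of $\bS$ to get an $O(\|\widehat{\bS}-\bS\|_F/n)$ eigenspace perturbation, which makes the row perturbations $o_P(h_n)$, and then invoke Lemma~\ref{lem:ms_clouds} for stability of the mean-shift partition. Your explicit triangle inequality through $\bU_0$ and your $\sqrt{n}$ rescaling of the embedding carry out those same steps more precisely. The rescaling is needed because the true centers are only $O(n^{-1/2})$ apart, a point the paper's proof glosses over.
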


\begin{proof}
Davis--Kahan implies that the top-$K$ eigenspaces of $\bS$ and $\widehat{\bS}$ differ by at most
\[
C\|\widehat{\bS}-\bS\|_F/(cn)
\]
in Frobenius norm, hence row perturbations are $o_P(h_n)$.
Mean shift is stable under $o(h_n)$ perturbations when the clouds are separated by order $h_n$, as in Lemma~\ref{lem:ms_clouds}.
\end{proof}

\section{Unified quadratic likelihood}\label{app:algebra}
This appendix records the P\'olya--Gamma quadratic identity used in Section~\ref{sec:pg}.
For Bernoulli-logit and negative binomial-logit likelihoods, the P\'olya--Gamma representation yields
\[
\log p(Y\mid \eta,\omega)=\kappa\eta-\tfrac12\omega\eta^2+\text{const},
\]
with $\omega\sim\PG(b,\eta)$ and suitable values of $b$ and $\kappa$.
For Gaussian likelihoods the same quadratic form holds with $\omega=\sigma^{-2}$.

\section{MGP full conditionals}\label{app:mgp}

The full conditionals for the multiplicative gamma process hyperparameters are standard; see \citet{bhattacharya2011}.
Let $\lambda_{kh}=\prod_{m=1}^h\delta_{km}$ and $S_{kh}=\|\ell_{kh}\|^2+\|r_{kh}\|^2$.
Then the conditional for $\phi_k$ is Gamma with shape $a_\phi+\tfrac{(p+q)\rmax}{2}$ and rate $b_\phi+\tfrac12\sum_{h=1}^{\rmax}\lambda_{kh}S_{kh}$.
Each $\delta_{kh}$ also has a Gamma conditional depending on $\{S_{kt}\}_{t\ge h}$ through $\{\lambda_{kt}\}$.

\bibliographystyle{plainnat}
\bibliography{references}

\end{document}